\documentclass{imsart}

\RequirePackage[OT1]{fontenc}
\RequirePackage{amsthm,amsmath}
\RequirePackage[colorlinks,citecolor=blue,urlcolor=blue]{hyperref}
\usepackage{graphicx}

\usepackage{natbib}
\usepackage{graphicx}
\usepackage[utf8]{inputenc}
\usepackage{graphics}
\usepackage{amsmath}
\usepackage{amssymb}
\usepackage{mathtools}
\usepackage{color}
\usepackage{verbatim}
\usepackage{enumitem}
\usepackage[ruled,vlined]{algorithm2e}
\usepackage{upgreek}
\usepackage{marginnote}
\usepackage{multirow}
\usepackage{diagbox}
\usepackage{slashbox}
\usepackage{graphicx}
\usepackage{caption}
\usepackage{subcaption}
\usepackage{ragged2e} 
\usepackage[english]{babel} 

\newcommand{\Pro}{\mathrm{P}}

\newcommand{\Exp}{\mathrm{E}}

\newcommand{\cA}{A}

\newcommand{\cC}{C}

\newcommand{\cF}{\mathcal{F}}
\newcommand{\cI}{\mathcal{I}}
\newcommand{\cJ}{\mathcal{J}}
\newcommand{\cL}{\mathcal{L}}
\newcommand{\cP}{\mathcal{P}}

\newcommand{\bN}{\mathbb{N}}
\newcommand{\bR}{\mathbb{R}}

\newcommand{\bfT}{\mathbf{T}}
\newcommand{\bfD}{\mathbf{D}}
\newcommand{\bfTs}{\hat{\mathbf{T}}}
\newcommand{\bfDs}{\hat{\mathbf{D}}}
\newcommand{\Ts}{\hat{T}}
\newcommand{\Ds}{\hat{D}}

\newcommand{\bfDc}{\check{\mathbf{D}}}
\newcommand{\Tc}{\check{T}}


\newcommand{\bfDss}{\hat{\mathbf{D}}}
\newcommand{\Tss}{\hat{T}}

\newcommand{\bftheta}{{\boldsymbol\theta}}
\newcommand{\bfTheta}{{\boldsymbol\Theta}}
\newcommand{\lambdas}{\lambda^*}
\newcommand{\Gammas}{\hat\Gamma}
\newcommand{\Pros}{\Pro^*}
\newcommand{\Exps}{\Exp^*}
\newcommand{\bfone}{\mathbf{1}}

\newcommand{\Def}{\equiv}
\newcommand{\bfa}{\mathbf{a}}

\newcommand{\ha}{\hat\alpha}
\newcommand{\hb}{\hat\beta}
\newcommand{\ca}{\check\alpha}
\newcommand{\cb}{\check\beta}
\newcommand{\Ae}{\operatorname{ARE}}
\newcommand{\ta}{\Tilde{\alpha}}
\newcommand{\tb}{\Tilde{\beta}}
\newcommand{\ellg}{\ell}
\newcommand{\ella}{\ell^*}

\newtheorem{theorem}{Theorem}[section]
\newtheorem{corollary}{Corollary}[theorem]
\newtheorem{lemma}{Lemma}[section]
\newtheorem{proposition}{Proposition}[section]
\theoremstyle{remark}
\newtheorem{remark}{Remark}[section]


\startlocaldefs
\numberwithin{equation}{section}
\theoremstyle{plain}

\endlocaldefs

\begin{document}

\begin{frontmatter}
\title{Asymptotically Optimal Sequential Multiple Testing with Asynchronous Decisions
}
\runtitle{Sequential Multiple Testing}

\begin{aug}
\author{\fnms{Yiming} \snm{Xing}
\ead[label=e1]
{yimingx4@illinois.edu}}
\and
\author{\fnms{Georgios} \snm{Fellouris}
\ead[label=e2]
{fellouri@illinois.edu}}
\address{
605 E. Springfield Ave. Champaign, IL 61820, USA \\
University of Illinois at Urbana-Champaign}

\runauthor{Y. Xing and G. Fellouris}

\affiliation{University of Illinois, Urbana-Champaign}
\end{aug}

\begin{abstract}
The problem of simultaneously testing  the marginal distributions of sequentially monitored, independent data streams is considered.  The decisions for the various testing problems can be made at different times, using data from all streams, which can be monitored until all decisions have been made.
Moreover,  arbitrary  a priori bounds are assumed  on the number of signals, i.e., data streams in which the alternative hypothesis is correct. 
A novel sequential multiple testing procedure is proposed and it is  shown to achieve the minimum expected decision time, simultaneously in every data stream and under every signal configuration, asymptotically as certain metrics of global error rates go to zero.
This optimality property is established under 
general parametric composite hypotheses, various  error metrics, and weak distributional assumptions that  allow for temporal dependence. Furthermore, the limit of the  factor by which the expected decision time in a data stream  increases when one is limited to  synchronous or decentralized procedures is evaluated. 
Finally, two existing sequential multiple testing  procedures in the literature  are compared with the proposed one in various simulation studies.
\end{abstract}

\begin{keyword}[class=MSC]
\kwd{62L05, 62L10, 60G35}
\end{keyword}

\begin{keyword}
\kwd{Sequential multiple testing, asynchronous decisions, asymptotic optimality, prior information}
\end{keyword}
\end{frontmatter}

\section{Introduction}
In application areas such as multichannel anomaly detection \cite{Kobi_2015a}, clinical trials with multiple endpoints \cite[Chapter 15]{jennison1999group}; \cite{Bartroff_2010}, 
gene association or expression studies \cite{Zehetmayer_2005, sarkar2013multiple}, postmarketing safety surveillance of  medical products \cite{markatou},  data are  generated by distinct  sources and collected sequentially, and it is of interest to solve, simultaneously and  in real time,  a hypothesis testing problem for each of the resulting data streams. 

For such \textit{sequential multiple testing problems},
there are various plausible setups and formulations as far as it concerns the times at which sampling is terminated and the times at which the  decisions are  made.  

One such setup,  schematically depicted  in Figure \ref{Sketch}.(a), corresponds to the case where  \textit{there is a common time at which the decisions are made for all  testing problems and all data sources are continuously monitored until that time}  \cite{De_Baron_Seq_Bonf, Step_up_down, De_2015, Song_prior, Song_AoS, Bartroff2021, chaudhuri2022joint}. In what follows, we refer to such sequential multiple testing procedures as \textit{synchronous}, as they stop sampling and make a decision in all data streams at the same time. 

A different setup, schematically depicted  in Figure \ref{Sketch}.(b),  arises when \textit{the decisions can be made at different times and  sampling from a data source is terminated once the decision for the corresponding testing problem is made}  \cite{Bartroff_2014, Bartroff_Song_2016}. An interesting  special case, that is considered for example in   \cite{Malloy_Nowak_2014, PaperII},  is when 
the time at which sampling is terminated in a data stream  and the selected hypothesis for  the corresponding  testing problem   \emph{must rely on data only from this data stream}. In what follows, we refer to such sequential multiple testing procedures  as  \textit{decentralized}. 

\begin{figure}
     \centering
     \begin{subfigure}[b]{0.3\textwidth}
         \centering
         \includegraphics[width=\textwidth]{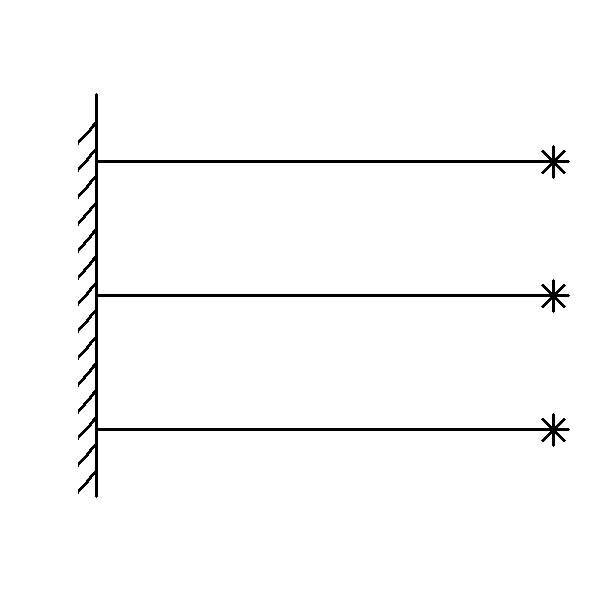}
         \caption{}
         \label{Sketch 1}
     \end{subfigure}
     \hfill
     \begin{subfigure}[b]{0.3\textwidth}
         \centering
         \includegraphics[width=\textwidth]{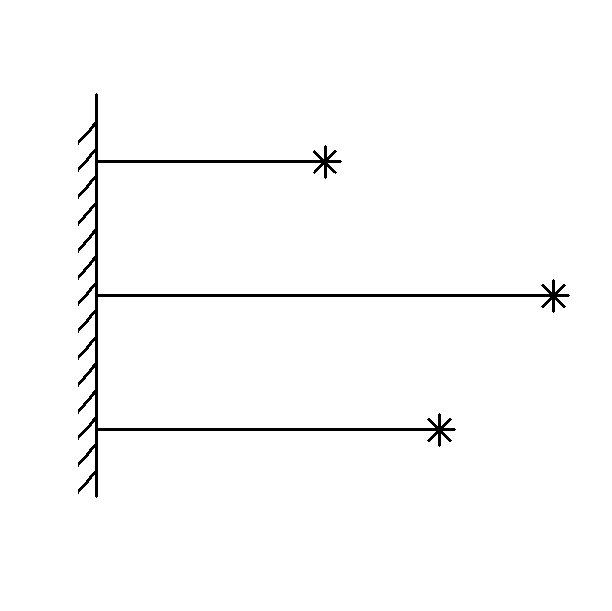}
         \caption{}
         \label{Sketch 2}
     \end{subfigure}
     \hfill
     \begin{subfigure}[b]{0.3\textwidth}
         \centering
         \includegraphics[width=\textwidth]{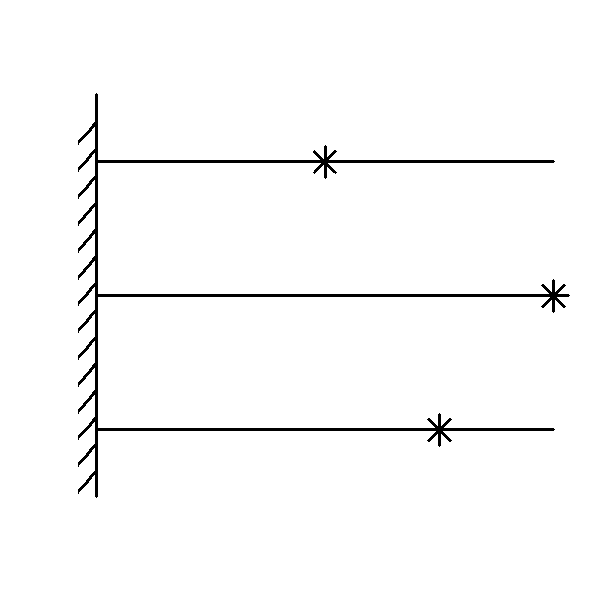}
         \caption{}
         \label{Sketch 3}
     \end{subfigure}
        \caption{Schematic depiction of three setups for  sequential multiple testing when data are generated by distinct  sources.  Each horizontal line segment represents the time span of sampling from the corresponding data source and $*$ the decision time for the corresponding testing problem.}
        \label{Sketch}
\end{figure}
 
In both  these setups, sampling is terminated in a data stream as soon as the decision is made for the corresponding testing problem. The difference is that in the first one the decision times must coincide, whereas in the second they may differ.  As a result, the first setup is suitable when  an action needs to be taken based on the decisions for \textit{all} testing problems, whereas  the second when an action must be taken based on the decision for \emph{each, individual} testing problem. 

Our motivation for the present work is the understanding  of a third  setup, which  is schematically depicted  in Figure \ref{Sketch}.(c).  In this one, the decisions for the various testing problems can be made at different times, using  data  from all  streams, \textit{which are monitored until all decisions have been made}.  As a result, this setup is suitable when a timely action needs to be taken based on the decision for \emph{each} testing problem,  but it is not of primary interest, or maybe even possible, to quickly stop sampling from each data source.  

For example, suppose that  multiple sensors are deployed, each of them monitors a certain environment, and an action needs to be taken in each of these environments depending on  the presence or absence of signal in it. In this context, the advantages of  asynchronous decisions are obvious, but it may not be convenient, or necessary, to shut down a sensor after its corresponding decision has been made. 
This  can be the case  in applications  such as missile identification based on radar measurements \cite{Missile_detection_2004}, spectrum sensing for cognitive radio \cite{gupta2019progression}, intrusion detection in a computer-based system and fraud detection in a commercial organization \cite{chandola2009anomaly}.  However, apart from a brief discussion in \cite[Section 2]{De_2015}, this very natural setup for sequential multiple testing  does not seem to have been studied in the literature. 

In the present paper,  we (i) propose a general formulation  for the sequential multiple testing  problem that  encompasses all the above setups, (ii) introduce  a novel testing procedure that takes advantage of the flexibility of this formulation, (iii) show that this procedure enjoys a  general asymptotic optimality property, and (iv) 
compute its asymptotic gains over  synchronous and decentralized procedures.

To be specific, we assume that there are multiple, independent, sequentially monitored data streams, and we consider  a binary hypothesis testing problem for each of them. These testing problems are coupled by global, user-specified error constraints.  The decisions for the various testing problems can be made at different times, using data from all streams, which can be monitored until all decisions have been made. Thus,  all three setups described  earlier are included in this general formulation.  We introduce a novel, sequential multiple testing procedure that incorporates arbitrary a priori lower and upper bounds on the number of  signals, i.e., data streams in which the alternative hypothesis holds. We show that this procedure achieves the optimal expected decision time \textit{in every data stream and under every signal configuration} to a first-order asymptotic approximation as  global error constraints of both types go to zero. This asymptotic optimality property is established under general parametric composite hypotheses,  various global error metrics,   and weak distributional assumptions that allow for temporal dependence.  Moreover,  we evaluate the factor by which the expected decision time in each stream increases, asymptotically as the error rates go to zero, when one is limited to decentralized or synchronous procedures. Finally, we compute in various simulation studies the  relative efficiencies of two existing sequential multiple testing procedures over the proposed,   and compare them with their limiting values that  are obtained from the asymptotic theory. 

The remainder of this paper is organized as follows: We formulate the sequential multiple testing problem that we consider in this work in Section \ref{section: problem formulation}. We introduce  and analyze the proposed  procedure,  contrasting it with existing procedures in the literature,  in Sections \ref{section: the proposed test} and \ref{sec: analysis}.  We establish our asymptotic optimality  theory in Section \ref{sec: AO} and present the results of various simulation studies in Section \ref{section: simulation studies}.    We extend our methodology and theoretical results  to general parametric composite hypotheses in Section \ref{section: generalization to composite hypotheses} and to various error metrics in Section \ref{section: generalization to other GEM}. We conclude in Section \ref{section: conclusion}. The proofs of all main results, as well as several supporting lemmas, are presented in Appendices \ref{Supporting lemmas}-\ref{Proofs related to other global error metrics}.

We end this introductory section with some  notations that we use throughout the paper. We  set $\bN\equiv \{1, 2, \ldots\}$ and, for any $n\in\bN$,  $[n]\equiv \{1,\ldots, n\}$. For a set $\cA$, we denote by $2^{\cA}$ its power set, by $|\cA|$ its size, and by $\bfone\{\cA\}$ its indicator function.
For any  real numbers $x,y$, we set $x\wedge y\equiv\min\{x,y\}$ and $x\vee y\equiv\max\{x,y\}$.
For any  sequences of positive real numbers $(x_n)$ and $(y_n)$,   $x_n\sim y_n$ stands for $\lim_n (x_n/y_n)= 1$, $x_n\gtrsim y_n$    for $\liminf_n (x_n/y_n) \geq 1$, $x_n\lesssim y_n$  for $\limsup_n (x_n/y_n) \leq 1$, and $x_n\lesssim O(y_n)$  means that  there exists a $C\in\bR$  so that  $x_n\leq C y_n$ for  all $n\in\bN$.  Finally, the minimum or infimum over the empty set is understood as $\infty$. 

\section{Problem formulation} \label{section: problem formulation}
We consider $K$ data sources that generate independent streams of observations, 
$$ X_k \Def \{X_k(n), \, n\in \bN\}, \quad k\in[K]. $$
For each $n\in\bN$ and $k \in [K]$,  we denote by $\cF_k(n)$ the $\sigma$-algebra generated by the observations in stream $k$  up to time $n$, and 
by  $\cF(n)$ the $\sigma$-algebra generated by the observations in all streams up to time $n$, i.e., 
$$\cF_k(n)\Def \sigma\left(X_k(t), \, t\in [n] \right) , \quad  \cF(n)\Def\sigma\left( \cF_k(n), \, k\in [K]\right). $$
For each $k \in [K]$, we  denote by $\Pro_k$ the distribution of $X_k$, consider two hypotheses for it, 
\begin{equation} \label{Testing problem}
\Pro_k \in \cP_k^0 \quad \text{ versus } \quad \Pro_k \in 
 \cP_k^1,
\end{equation}
and refer to the $k^{th}$ data source as 
 \emph{signal} if $\Pro_k \in \cP_k^1$ and as  \emph{noise} if $\Pro_k \in \cP_k^0$.    
 
    The   true subset of signals is unknown and deterministic, but we allow for incorporation of information about it  by assuming  that it is \textit{a priori} known to belong to some  class, $\Pi$, of subsets of $[K]$.  Without loss of generality, we assume that
\begin{equation} \label{no trivial}
    \forall \,  k\in[K]  \quad \exists\, A, B\in\Pi \quad \text{such that} \quad k\in A \text{ and }   k \notin B.
\end{equation}
When, in particular,   a  lower bound, $l$,  and an  upper bound, $u$,  are given  on the true number of signals,  where
\begin{equation} \label{lu}
0\leq l\leq u\leq K, \quad u>0, \quad l<K,
\end{equation}
we set  $\Pi=\Pi_{l,u}$, where 
$$ \Pi_{l,u} \Def \{\cA\subseteq [K]: l\leq |\cA|\leq u\}.$$
In what follows, when referring to  $\Pi$ we assume that 
\eqref{no trivial} holds,  and when referring to  $\Pi_{l,u}$ we assume that \eqref{lu} holds. \\

To solve the above multiple hypothesis testing problem,  we need to determine two vectors,
 $$ \bfT \equiv (T_1,\ldots,T_K), \quad \bfD \equiv (D_1,\ldots,D_K),$$
where, for each $k \in [K]$,  
\begin{itemize}
\item  $T_k$ is a random time, taking values in $\bN$, at which the decision is made for the testing problem in stream $k$, 
\item   $D_k$ is a Bernoulli random variable such that stream $k$ is identified as a signal at time $T_k$ if and only if  $D_k=1$.   
\end{itemize}
With an abuse of notation, we also denote by $\bfD$ the subset of streams that are identified as signals, i.e., $$ \bfD \equiv \{k\in [K]: D_k=1\}.$$

We assume that  each decision time and each selected hypothesis  can depend on the already collected observations from all data streams.  Thus,  we say that
$\chi\equiv(\bfT, \bfD)$  is a \emph{sequential multiple testing} procedure  if,  for each $k \in [K]$,  
\begin{itemize}
\item  $T_k$ is a  stopping time with respect to
 $\{\cF(n), n\in \bN\}$,
\item  $D_k$  is an $\cF(T_k)$-measurable Bernoulli  random variable, 
\end{itemize}
i.e., for each   $n \in \bN$ and $i \in \{0,1\}$, 
$$\{T_k=n \}, \; \{T_k=n, \, D_k=i\} \in \cF(n).$$ 

 We refer to such a sequential multiple testing procedure as \textit{synchronous} if all decisions occur at the same time,  i.e., $$ T_1=\cdots=T_K,$$  and as \textit{decentralized} if each decision time and each selected hypothesis can depend on the already collected observations only from the corresponding data stream.
  That is, we say that 
$\chi\equiv(\bfT, \bfD)$    is  \textit{decentralized} if,  for each  $k\in [K]$,
\begin{itemize}
\item  $T_k$  is a stopping time with respect to $\{\cF_k(n), n\in \bN\}$, 
\item $D_k$   is an  $\cF_k(T_k)$-measurable Bernoulli  random variable, 
\end{itemize}
i.e., for each  $n \in \bN$ and $i \in \{0,1\}$, 
$$\{T_k=n \}, \; \{T_k=n, \, D_k=i\} \in \cF_k(n).
$$

We denote by $\Delta$ the family of all sequential multiple testing procedures,  by  $\Delta'$ the subfamily of \textit{decentralized}  procedures, 
and by $\Delta''$ the subfamily of \textit{synchronous}  procedures.  We  will further focus on sequential multiple testing procedures that satisfy user-specified bounds on certain metrics for the false positive and the false negative error rates. To simplify the presentation,  we consider first the  case of simple hypotheses and  classical familywise  error rates, and we extend the formulation and all subsequent results to general parametric composite hypotheses and to various other error metrics in Sections  \ref{section: generalization to composite hypotheses} and \ref{section: generalization to other GEM}, respectively.  Therefore, until then,  we assume that  the hypotheses in \eqref{Testing problem} are of the form
$$\cP_k^i =\{\Pro_k^i\},\quad i\in\{0,1\}, \quad k \in [K]$$   and, for any $\cA\in\Pi$, we denote by $\Pro_\cA$ the joint distribution of $(X_1,\ldots, X_k)$ when  the subset of signals is $\cA$, i.e.,
\begin{equation} \label{Product measure}
    \Pro_\cA\Def  \Pro_1\otimes \cdots \otimes  \Pro_K, \quad \text{where} \quad
    \Pro_k=\begin{cases}
    \begin{aligned}
        & \Pro_k^0, \; && \text{if } k\notin\cA\\
        & \Pro_k^1, \; && \text{if } k\in\cA,
    \end{aligned}
    \end{cases}
\end{equation}
and by $\Exp_\cA$ the corresponding expectation.  Moreover, for a sequential multiple testing procedure  
$\chi\equiv (\bfT, \bfD) \in\Delta$  that terminates almost surely  in every stream  when the true subset of signals is $\cA$, i.e.,
\begin{equation} \label{a.s. finite}
    \Pro_\cA(T_k<\infty)=1,\quad \forall\; k\in[K],    
\end{equation} 
we denote by  $\text{FWE}_\cA^1(\chi)$  its \emph{type-I  familywise error rate}, i.e., the probability of at least one noise  being identified as signal, when the true subset of signals is $\cA$, 
\begin{equation} \label{def of FWE1}
    \text{FWE}_\cA^1(\chi)  \Def \Pro_\cA(\bfD\backslash\cA\neq \emptyset)  = \Pro_\cA( \exists \;  j\notin\cA:  D_j=1), 
\end{equation}
and by $\text{FWE}_\cA^2(\chi)$  its \emph{type-II  familywise error rate},  i.e., the probability of at least one signal being identified as noise,   when the true subset of signals is $\cA$,
\begin{equation} \label{def of FWE2}
\text{FWE}_\cA^2(\chi)  \Def \Pro_\cA(\cA\backslash\bfD\neq\emptyset)= 
    \Pro_\cA( \exists \;  i \in \cA: D_i=0).
\end{equation}

Thus, for any class of prior information  $\Pi$ and any $\alpha,\beta\in (0,1)$,
\begin{equation} \label{Definition of Delta_alpha,beta(P)}
\begin{aligned}
    \Delta(\alpha,\beta,\Pi)\Def \bigg\{  \chi \in\Delta :  \text{FWE}_\cA^1(\chi) &\leq \alpha \; \; \& \; \;
    \text{FWE}_\cA^2(\chi)\leq \beta, \;   \forall\, \cA\in\Pi  \bigg\}
\end{aligned}
\end{equation}
is the family of sequential multiple testing  procedures that control, under every signal configuration consistent with $\Pi$,\
the type-I and type-II familywise error rates below  $\alpha$ and $\beta$, respectively,
 and 
\begin{equation} \label{def of AO}
  \cL_{k,\cA} (\alpha, \beta, \Pi) \equiv  \inf \left\{ \Exp_\cA [T_k]:  (\bfT, \bfD) \in \Delta(\alpha,\beta,\Pi) \right\}
\end{equation}
is the  optimal in $\Delta(\alpha, \beta, \Pi)$ expected decision time in stream $k \in [K]$  when the  subset of signals is $\cA \in \Pi$. 

The first goal of the present work is to introduce a  testing procedure that  achieves  \eqref{def of AO}  to a first-order asymptotic approximation  as  $\alpha$  and $\beta$ go to 0, \textit{simultaneously for every $\cA\in\Pi$ and $k\in[K]$},   when $\Pi$ is of the form $\Pi_{l,u}$.  The second goal  is to  quantify the asymptotic gains of the proposed procedure over decentralized and  synchronous procedures.  Specifically, for any $\cA\in \Pi$,  $k \in [K]$, $\alpha, \beta \in (0,1)$,  we consider the optimal   expected decision time in stream $k$ when the subset of signals is $\cA$ in the subfamily of \textit{decentralized}  procedures in
$\Delta(\alpha, \beta, \Pi)$, i.e., 
\begin{equation} \label{def of AO for decentralized}
  \cL'_{k,\cA} (\alpha, \beta, \Pi) \equiv  \inf \left\{ \Exp_\cA [T_k]:    (\bfT, \bfD)  \in   \Delta' \cap \Delta(\alpha,\beta,\Pi) \right\},
\end{equation}
and the optimal expected decision time when the subset of signals is $\cA$ in the subfamily of \textit{synchronous}  procedures in  $\Delta(\alpha, \beta, \Pi)$, i.e., 
\begin{equation} \label{def of AO for synchronous}
\begin{aligned}
    \cL''_{\cA}(\alpha, \beta, \Pi)   &\equiv   \inf \left\{\Exp_\cA[T_1]:
     (\bfT, \bfD)  \in  \Delta'' \cap \Delta(\alpha,\beta,\Pi)\right\}.
\end{aligned}
\end{equation} 
Our second goal is to  evaluate, for any  $\cA\in \Pi_{l,u}$ and $k \in [K]$,     the asymptotic relative efficiencies
\begin{align}
\Ae'_{k,\cA}(l,u) &\equiv   \lim_{\alpha,\beta\to 0} \frac{\cL_{k,\cA}(\alpha, \beta, \Pi_{ l,u} ) }
{ \cL'_{k,\cA}(\alpha, \beta, \Pi_{ l,u} ) }, \label{AE'} \\
\Ae''_{k,\cA}(l,u)   &\equiv   \lim_{\alpha,\beta\to 0} \frac{\cL_{k,\cA}(\alpha, \beta, \Pi_{ l,u} ) }
{ \cL''_{\cA}(\alpha, \beta, \Pi_{ l,u} ) }, \label{AE''}
\end{align}
where the inverse of \eqref{AE'} (resp. \eqref{AE''}) represents the limiting value as $\alpha, \beta \to 0$ of the  factor by which the best possible expected decision time  in stream $k$ when  the subset of signals is $ \cA \in \Pi_{l,u}$  increases  \textit{when allowing only for decentralized  (resp.  synchronous) procedures} in   $\Delta(\alpha, \beta, \Pi)$.

\subsection{Distributional assumptions} \label{subsec: distributional assumptions}
For each $k\in [K]$ and $n \in \bN$, we assume that the probability measures $\Pro_k^1$ and $\Pro_k^0$ are mutually absolutely continuous when restricted to $\cF_k(n)$, and denote by $\lambda_k(n)$ the corresponding \emph{log-likelihood ratio} (LLR), i.e.,
\begin{equation} \label{lambdak(n)}
    \lambda_k(n) \Def \log \frac{d\Pro_k^1}{d\Pro_k^0} \left(\cF_k(n)\right).
\end{equation}

For the proposed test to terminate with probability 1 and to achieve the prescribed error control,  it suffices to assume that,  for each $k\in [K]$,
\begin{align} \label{orthogonal}
\begin{split}
   \Pro_k^1\left( \lim_{n\to \infty} \lambda_k(n) = \infty \right) &=1\\
  \Pro_k^0\left( \lim_{n\to \infty} \lambda_k(n) =-\infty \right) &=1.
  \end{split}
\end{align}

To  establish our asymptotic optimality theory, we  need to make some further distributional assumptions. Specifically, we  assume that for each $k\in [K]$ there are positive numbers, $I_k$ and $ J_k$, so that 
\begin{equation} \label{SLLN, assumption for ALB}
\begin{aligned}
    \Pro_k^1\left( \limsup_{n\to\infty} \frac{\lambda_k(n)}{n} \leq  I_k \right) & = 1 \\
   \Pro_k^0\left( \limsup_{n\to\infty} \frac{-\lambda_k(n)}{n} \leq  J_k \right) & = 1, 
\end{aligned}
\end{equation}
and
\begin{align} \label{Complete convergence, assumption for AUB}
\begin{split}
  \forall \; \epsilon>0 \quad   \sum_{n=1}^\infty \Pro_k^1\left( \frac{\lambda_k(n)}{n} \leq I_k -\epsilon \right) & <\infty \\
  \forall \; \epsilon>0 \quad   \sum_{n=1}^\infty \Pro_k^0\left( \frac{-\lambda_k(n)}{n} \leq J_k - \epsilon \right) & <\infty.
\end{split}
\end{align}

\begin{remark}
By the Borel-Cantellil lemma, condition  \eqref{Complete convergence, assumption for AUB} implies 
\begin{equation*} 
\begin{aligned}
    \Pro_k^1\left( \liminf_{n\to\infty} \frac{\lambda_k(n)}{n} \geq I_k \right) & = 1 \\
    \Pro_k^0\left( \liminf_{n\to\infty} \frac{-\lambda_k(n)}{n} \geq  J_k \right) & = 1.
\end{aligned}
\end{equation*}
Therefore, conditions  \eqref{SLLN, assumption for ALB}-\eqref{Complete convergence, assumption for AUB}  imply
\begin{equation} \label{SLLN}
\begin{aligned}
    \Pro_k^1\left(\lim_{n\to\infty} \frac{\lambda_k(n)}{n}=I_k \right) & = 1 \\
    \Pro_k^0\left(\lim_{n\to\infty} \frac{\lambda_k(n)}{n}=-J_k \right) & = 1,
\end{aligned}
\end{equation}
and, as a result, they characterize $I_k$ and $J_k$.  
\end{remark}

\begin{remark}
In the special case that $ \{\lambda_k(n), \,n\in\bN\}$ has i.i.d. increments with mean $I_k>0$ under $\Pro_1^k$ and $-J_k<0$ under  $\Pro_0^k$,   by  Kolmogorov's Strong Law of Large Numbers   and  Chernoff's bound 
it follows that  conditions \eqref{SLLN, assumption for ALB}-\eqref{Complete convergence, assumption for AUB} are equivalent to \eqref{SLLN} and are automatically satisfied. 
\end{remark}

\subsection{Notations}
For any $\cA\subseteq [K]$, we denote by $\cI_\cA$
and   $\cJ_\cA$ the minimum of the numbers in $\{I_i: i \in \cA\}$  and $\{J_j: j\notin\cA\}$, respectively, i.e., 
\begin{equation} \label{KL_min}
 \cI_\cA\equiv \min_{i\in \cA} I_i \qquad \text{and} \qquad \cJ_\cA\equiv \min_{j\notin\cA} J_j.
 \end{equation}

We denote the   LLRs  at time  $n\in\bN$ in non-increasing order as $\lambda_{(k)}(n)$, $k\in [K]$, i.e.,
$$\lambda_{(1)}(n) \geq \lambda_{(2)}(n) \geq \cdots \geq \lambda_{(K)}(n),$$
where ties are ordered arbitrarily. Moreover, we denote by $p(n)$ the number of positive LLRs at time $n$, i.e., 
$$ p(n) \Def \big\vert \left\{ k\in[K]: \lambda_k(n)>0 \right\} \big\vert. $$

\section{The proposed sequential multiple testing procedure}  \label{section: the proposed test}
In this section we introduce the sequential multiple testing procedure that we propose in this work,   contrasting it with two existing procedures in the literature, a decentralized one and a synchronous one.

\subsection{A  decentralized procedure (parallel SPRT)}
When considering the  binary sequential testing problem in stream $k$ \textit{locally},  the most  natural  sequential test is arguably Wald's Sequential Probability Ratio Test  (SPRT),
\begin{equation}  \label{def: SPRT}
    \begin{aligned}
    \Tilde{T}_k& \equiv \inf\left\{ n\in\bN: \lambda_k(n)\notin(-b,a) \right\}, \\
    \Tilde{D}_k & \equiv 1\{   \lambda_k(\Tilde{T}_k)\geq a \big\},
    \end{aligned}
\end{equation}
where  $a,b>0$ are thresholds to be determined.
Thus, a natural sequential \emph{multiple} testing procedure is to simply apply an SPRT to each  binary testing problem, i.e., 
$\Tilde{\chi}\equiv (\Tilde{\bfT},\Tilde{\bfD} )$,  where 
\begin{equation}  \label{def: decentralized procedure}
    \begin{aligned}
    \tilde{\mathbf{T}} & \Def (\tilde{T}_1,\ldots,\tilde{T}_K),  \\
  \tilde{\mathbf{D}} & \Def (\tilde{D}_1,\ldots,\tilde{D}_K).
    \end{aligned}
\end{equation} 

This   procedure  has been considered, at least as a competitor,  in various works (e.g., \cite{De_Baron_Seq_Bonf,Malloy_Nowak_2014,Song_prior, PaperII}). It is clearly decentralized and, as we will see in Section \ref{sec: AO},  it achieves asymptotic optimality in the  subfamily of \text{decentralized} procedures for any given class of prior information,  $\Pi$. However, it 
does \textit{not} preserve this asymptotic optimality property  in  the general family of sequential multiple testing procedures considered in this work  
when $\Pi$ is of the form $\Pi_{l,u}$  \textit{apart from the case of no prior information} $(l=0,u=K)$.

 \subsection{The proposed procedure}
Suppose that we are given bounds on the number of signals, i.e.,  $\Pi$ is of the form $\Pi_{l,u}$.  Then, the proposed  procedure, $\hat{\chi}\equiv (\hat{\bfT},\hat{\bfD} )$, is defined as follows:
\begin{equation}  \label{def: general}
    \begin{aligned}
    \bfTs & \Def (\Ts_1,\ldots,\Ts_K), \text{ where  } \Ts_k  \Def \Ts_{k,1}\wedge \Ts_{k,2}, \;\; k \in [K], \\
    \bfDs & \Def \big\{k\in [K]: \Ts_k=\Ts_{k,1} \big\},
    \end{aligned}
\end{equation}
and the form of $\Ts_{k,1}$ and $\Ts_{k,2}$ depends on whether   $l=u$ or $l<u$.   When the number of signals is a priori known to be equal to  some $1\leq m\leq K-1$, i.e.,  $l=u\equiv m$, we set 
\begin{equation}  \label{def: gap rule}
    \begin{aligned}
    \Ts_{k,1} & \Def \inf\left\{ n\in\bN: \lambda_k(n)\geq \lambda_{(m+1)}(n) + c \right\}, \\
    \Ts_{k, 2} & \Def \inf\left\{ n\in\bN: \lambda_k(n)\leq \lambda_{(m)}(n) - d \right\}, 
    \end{aligned}
\end{equation}
where  $c,d>0$ are thresholds to be determined.  When the number of signals is not a priori known, i.e.,  $l<u$,  we combine the stopping rules in \eqref{def: SPRT} and  \eqref{def: gap rule}  and set
\begin{equation}  \label{def: gap-intersection rule}
    \begin{aligned}
    \Ts_{k,1} & \Def \inf\left\{ n\in\bN: \lambda_k(n)\geq \min\{ a, \; \lambda_{(l+1)}(n) + c\} \right\}, \\
    \Ts_{k, 2} & \Def \inf\left\{ n\in\bN: \lambda_k(n)\leq \max\{ -b, \; \lambda_{(u)}(n) - d \} \right\},
    \end{aligned}
\end{equation}
where  $a,b,c,d>0$ are again thresholds to be determined.\\

In the case of no prior information ($l=0$ and $u=K$),  thresholds $c$ and $d$ are inactive and   $\hat{\chi}$ reduces to the parallel SPRT,  introduced in \eqref{def: SPRT}-\eqref{def: decentralized procedure}, as  it is not possible,  for any $c,d>0$, $k\in[K]$, $n\in\bN$, to have 
  $$\lambda_k(n)\geq \lambda_{(1)}(n)+c \qquad   \text{or} \qquad  \lambda_k(n)\leq \lambda_{(K)}(n)-d. $$
    
    On the other hand,  in the presence of non-trivial  a priori bounds on the number of signals, i.e., when either $l>0$ or $u<K$,   $\hat{\chi}$  is not decentralized, as it requires comparisons between statistics of different streams. 
Indeed,  according to this scheme,    when $l>0$ (resp. $u<K$), 
a decision can be made in a stream once its  LLR statistic is  sufficiently larger  (resp. smaller) than the $(l+1)$-th (resp. $u$-th) largest LLR.

\subsection{A  synchronous procedure}
Assuming that $\Pi$ is of the form\ $\Pi_{l,u}$, 
the synchronous procedure, $\check{\chi}\equiv (\Tc, \bfDc),$   in   \cite{Song_prior} is defined as follows:
\begin{itemize}
\item When $l=u\equiv m$, 
\begin{equation} \label{def: complete gap rule}
\begin{aligned}
    \Tc \Def \inf\left\{ n\in\bN: \lambda_{(m)}(n) - \lambda_{(m+1)}(n) \geq c\vee d \right\},
\end{aligned}
\end{equation}
where  $c,d>0$ are thresholds to be determined, and $\bfDc$ is the subset of  the  $m$ streams with the largest LLRs at time $\Tc$.

\item When $l<u$, 
\begin{equation} \label{def: complete gap-intersection rule}
\begin{split}
  \Tc & \equiv \tau_1\wedge\tau_2\wedge\tau_3, \quad \text{where} \\
 \tau_1 & \equiv \inf\left\{ n\in\bN: \lambda_{(l+1)}(n) \leq \min \{-b, -c+ \lambda_{(l)}(n) \}  \right\} \\
    \tau_2 & \equiv \inf\left\{n\in\bN: \lambda_k(n) \notin (-b,a) \;\; \forall \,  k\in[K], \text{ and } l\leq p(n)\leq u \right\} \\
    \tau_3 & \equiv \inf\left\{ n\in\bN: \lambda_{(u)}(n)\geq  \max \{ a,  \, d+ \lambda_{(u+1)}(n)\} \right\},
\end{split}
\end{equation}
 where  $a,b,c,d>0$ are thresholds to be determined, 
and $\bfDc$ is the subset of the  $\left(p(\Tc)\vee l\right)\wedge u$ streams with the largest LLRs at time $\Tc$. \\
\end{itemize}

In the case of no prior information ($l=0$ and $u=K$),    
\begin{align*} 
  \Tc &= \inf\left\{n\in\bN: \lambda_k(n) \notin (-b,a) \;\; \forall\, k\in[K] \right\}, 
\end{align*}
and $\check\bfD$ is the subset of streams with positive LLRs at time $\Tc$. Therefore, in this case,  $\check{\chi}$   reduces to  the \textit{Intersection rule} introduced in \cite{De_Baron_Seq_Bonf} and  makes its  decisions once all LLRs are \textit{simultaneously} outside the interval  $(-b,a)$. 
Clearly, this is never sooner than the last decision time of the parallel SPRT with the same thresholds, $a$ and $b$.
In general, for any $l$ and $u$,  the last decision time of the proposed procedure occurs no later than the decision time of this synchronous procedure, i.e., 
\begin{equation} \label{max hat Tk leq check T}
    \max_{k \in [K]} \hat{T}_k \leq \check{T}, \quad \forall\;k\in[K],
\end{equation}
when the two procedures use the same thresholds $a,b,c, d$.

In  \cite{Song_prior} it was shown, in the case that each data stream consists of i.i.d. observations, that $\check{\chi}$ is asymptotically optimal in the subfamily of  \textit{synchronous} procedures,  i.e.,  it achieves  \eqref{def of AO for synchronous} asymptotically as $\alpha, \beta \to 0$  for every $ \cA \in \Pi_{l,u}$.  In Section \ref{sec: AO} we show that  this asymptotic optimality property  holds under the general distributional assumptions of Subection \ref{subsec: distributional assumptions}, but does not extend,  apart from a very specific setup, to the general  family of sequential multiple testing procedures that we consider in this work.

\section{Analysis} \label{sec: analysis}
In this section,  we show how the thresholds of the proposed procedure can be selected in order to control the two types of familywise error rates below arbitrary prescribed levels. Moreover, we establish asymptotic upper bounds
on its expected decision times 
in each stream as its thresholds go to infinity. 
For comparison purposes, we  also include the corresponding  results for the parallel SPRT  and the synchronous procedure introduced in the previous section. 

\subsection{Error control}
\begin{proposition}  \label{thm: a.s. finite and error control}
Suppose that \eqref{orthogonal} holds for every $k \in [K]$ and let   $\cA\subseteq [K]$, $a,b,c,d>0$. Then
    \begin{align}
        \text{FWE}_\cA^1(\Tilde{\chi}) & \leq |\cA^c|\, e^{-a}, \label{decentralized procedure, type I} \\
        \text{FWE}_\cA^2(\Tilde{\chi}) & \leq |\cA|\, e^{-b}. \label{decentralized procedure, type II}
    \end{align}
    If, also,  $\Pi=\Pi_{l,u}$ and $\cA\in \Pi_{l,u}$, then we have the following:
    \begin{itemize}
     \item If $l=u\equiv m$, then
        \begin{align}
        \text{FWE}_\cA^1(\hat{\chi}) & \leq m(K-m)\, e^{-c}, \label{gap rule, type I} \\
        \text{FWE}_\cA^2(\hat{\chi}) & \leq m(K-m)\, e^{-d}, \label{gap rule, type II} \\
        \text{FWE}_\cA^1(\check{\chi}) = \text{FWE}_\cA^2(\check{\chi}) &\leq m(K-m)\, e^{-c\vee d}. \label{synchronous gap rule, type-I and -II}
        \end{align}
        \item  If $l<u$, then
    \begin{align}
        & \text{FWE}_\cA^1(\hat{\chi}), \; \text{FWE}_\cA^1(\check{\chi}) \leq |\cA^c| \, (e^{-a} + |\cA|\,  e^{-c}), \label{gap-intersection rule, type I} \\
        & \text{FWE}_\cA^2(\hat{\chi}), \; \text{FWE}_\cA^2(\check{\chi}) \leq |\cA| \, (e^{-b} + |\cA^c| \, e^{-d}).  \label{gap-intersection rule, type II} 
    \end{align}
    \end{itemize}
\end{proposition}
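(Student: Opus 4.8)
My plan rests on two tools: a change-of-measure (Wald likelihood-ratio) identity that converts each single-stream or pairwise boundary-crossing event into an exponentially small probability, together with a union bound over streams or over signal/noise pairs. The key point is that the mutual absolute continuity assumed on each $\cF_k(n)$ makes $\Pro_\cA\ll\Pro_{\cA'}$ on every $\cF(n)$ for any configuration $\cA'$, so that for a stopping time $\sigma$ with respect to $\{\cF(n)\}$ and an event $A\in\cF(\sigma)$ one has $\Pro_\cA(A,\sigma<\infty)=\Exp_{\cA'}[\bfone\{A,\sigma<\infty\}\,(d\Pro_\cA/d\Pro_{\cA'})(\cF(\sigma))]$. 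I will use this in two flavors. In the \emph{single-stream} flavor I swap the hypothesis of one noise stream $j$ (log-derivative $-\lambda_j$) or one signal stream $i$ (log-derivative $\lambda_i$); along $\rho_j\Def\inf\{n:\lambda_j(n)\ge a\}$ this gives $\Pro_\cA(\rho_j<\infty)\le e^{-a}$, and symmetrically $e^{-b}$ for a down-crossing of $-b$. In the \emph{pairwise} flavor I pass to $\cA'\Def(\cA\setminus\{i\})\cup\{j\}$, for which $\log(d\Pro_\cA/d\Pro_{\cA'})(\cF(n))=\lambda_i(n)-\lambda_j(n)$; along $\sigma_{ij}\Def\inf\{n:\lambda_j(n)-\lambda_i(n)\ge c\}$ this gives $\Pro_\cA(\sigma_{ij}<\infty)\le e^{-c}$ (and $e^{-d}$ with threshold $d$).

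With these bounds the parallel SPRT is immediate: a false positive in noise $j$ forces $\rho_j<\infty$ and a false negative in signal $i$ forces the symmetric down-crossing, so summing over the $|\cA^c|$ noises and $|\cA|$ signals yields \eqref{decentralized procedure, type I}--\eqref{decentralized procedure, type II}. For the gap rule with $l=u\equiv m$ (so $|\cA|=m$) the engine is a pigeonhole argument: if noise $j$ is declared a signal at $\hat T_{j,1}$ then $\lambda_j>\lambda_{(m+1)}$ there, so $\lambda_j$ sits among the top $m$ LLRs; since only $m$ signals exist, some signal $i$ has $\lambda_i\le\lambda_{(m+1)}\le\lambda_j-c$, i.e. $\sigma_{ij}<\infty$. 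The union over the $m(K-m)$ pairs gives \eqref{gap rule, type I}, and the mirror-image argument (a mis-declared signal sits below $\lambda_{(m)}$, forcing a noise above it) gives \eqref{gap rule, type II}. For the synchronous rule $\check\chi$ with $l=u=m$, a type-I and a type-II error are the \emph{same} event $\{\check\bfD\ne\cA\}$ because $|\check\bfD|=m=|\cA|$; on it the top-$m$ set contains a noise $j$ and omits a signal $i$, and the stopping rule forces $\lambda_j-\lambda_i\ge\lambda_{(m)}-\lambda_{(m+1)}\ge c\vee d$, so the pairwise bound with threshold $c\vee d$ summed over pairs gives \eqref{synchronous gap rule, type-I and -II}.

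For the gap--intersection rule $\hat\chi$ with $l<u$, each misclassification splits into a threshold part and a gap part. If noise $j$ is declared a signal at $\hat T_{j,1}$ then $\lambda_j(\hat T_{j,1})\ge\min\{a,\lambda_{(l+1)}+c\}$, so either $\lambda_j\ge a$ (hence $\rho_j<\infty$) or $\lambda_j\ge\lambda_{(l+1)}+c$; in the latter case $\lambda_j$ is among the top $l$, and since $|\cA|\ge l$ some signal $i$ lies outside, giving $\sigma_{ij}<\infty$. Summing $e^{-a}+|\cA|e^{-c}$ over the $|\cA^c|$ noises yields \eqref{gap-intersection rule, type I}; the symmetric split (down-crossing $-b$ versus a signal pushed below $\lambda_{(u)}$, where $|\cA|\le u$ forces a noise above it) yields \eqref{gap-intersection rule, type II}.

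The only genuinely delicate part is the synchronous gap--intersection rule, whose selected set $\check\bfD$ is the top $(p(\check T)\vee l)\wedge u$ streams and whose stopping time $\check T=\tau_1\wedge\tau_2\wedge\tau_3$ can be triggered in three ways. The plan is to show, by a case analysis on which of $\tau_1,\tau_2,\tau_3$ attains $\check T$, that membership of a noise $j$ in $\check\bfD$ still forces either $\lambda_j(\check T)\ge a$ or a $c$-gap to some signal: on $\{\check T=\tau_3\}$ one has $p(\check T)\ge u$, so $\check\bfD$ is the top $u$ and $\lambda_j\ge\lambda_{(u)}\ge a$; on $\{\check T=\tau_2\}$ the set $\check\bfD$ is exactly the positive LLRs, all $\ge a$; and on $\{\check T=\tau_1\}$ one has $p(\check T)\le l$, so $\check\bfD$ is the top $l$ with $\lambda_{(l)}\ge\lambda_{(l+1)}+c$, and the same pigeonhole produces a signal with a $c$-gap. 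The symmetric analysis handles type-II errors with $-b$ and $d$. Once this containment is in place, the identical union-bound-plus-change-of-measure reasoning delivers \eqref{gap-intersection rule, type I}--\eqref{gap-intersection rule, type II} for $\check\chi$. I expect this bookkeeping --- correctly identifying $\check\bfD$ and the governing inequality in each of the three regimes, including the edge cases where several $\tau$'s coincide --- to be the main obstacle; all probabilistic content is carried by the two change-of-measure bounds of the first paragraph.
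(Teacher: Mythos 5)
Your argument is correct, and its probabilistic core --- Wald's likelihood ratio identity under a change of signal configuration, combined with union bounds and a pigeonhole step that locates a signal with a $c$-gap below a mis-declared noise (and symmetrically for type-II errors) --- is exactly the mechanism the paper uses. Where you genuinely differ is in scope. The paper proves only the bounds for the proposed test $\hat\chi$, and disposes of \eqref{decentralized procedure, type I}--\eqref{decentralized procedure, type II}, \eqref{synchronous gap rule, type-I and -II} and the $\check\chi$ parts of \eqref{gap-intersection rule, type I}--\eqref{gap-intersection rule, type II} by citation to earlier work on the parallel SPRT and the synchronous test. You instead derive all three tests' bounds from the same two change-of-measure estimates: for $l=u\equiv m$ you note that a type-I and a type-II error of $\check\chi$ are the same event because $|\check\bfD|=m=|\cA|$, and that selection of a noise $j$ while a signal $i$ is omitted forces $\lambda_j(\Tc)-\lambda_i(\Tc)\geq \lambda_{(m)}(\Tc)-\lambda_{(m+1)}(\Tc)\geq c\vee d$; for $l<u$ your three-regime analysis is sound --- on $\{\Tc=\tau_3\}$ and $\{\Tc=\tau_2\}$ any selected noise has LLR at least $a$, while on $\{\Tc=\tau_1\}$ one has $p(\Tc)\leq l$, so $\check\bfD$ is the top $l$ and the pigeonhole yields a signal $i$ with $\lambda_i(\Tc)\leq\lambda_{(l+1)}(\Tc)\leq\lambda_j(\Tc)-c$ --- and the mirror argument for type-II errors works as you describe. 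Your route buys self-containedness (and a uniform presentation in terms of the two first-passage events $\rho_j$, $\sigma_{ij}$, which cleanly dominate every error event); the paper buys brevity by delegation. The bookkeeping you flag as the main obstacle checks out in every regime.

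One omission should be repaired: hypothesis \eqref{orthogonal} never enters your argument. In the paper it is used precisely for the part of the proposition your write-up skips, namely that all decision times are almost surely finite --- e.g., when $l=u$, $\Ts_{i,1}$ for $i\in\cA$ is bounded by the first time $\lambda_i$ exceeds $\lambda_k+c$ for every $k\notin\cA$, which is finite $\Pro_\cA$-a.s. because $\lambda_i\to+\infty$ and $\lambda_k\to-\infty$ under \eqref{orthogonal}, and similarly for the other decision times and for $\check{T}$. This is not cosmetic: the quantities $\text{FWE}^1_\cA$ and $\text{FWE}^2_\cA$ appearing in the statement are defined in the paper only for procedures satisfying \eqref{a.s. finite}, so without this step the inequalities you prove are about well-chosen surrogate events rather than the stated error rates. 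The fix is the two-line domination argument above, but it needs to be said.
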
 
  
\begin{proof}
    See Appendix \ref{Proof related to AUB}. \\   
\end{proof}

The previous proposition provides  a concrete selection for  thresholds that guarantee prescribed familywise error rates. Indeed, let  $\alpha,\beta\in(0,1)$. Then,  for any given $\Pi\subseteq 2^{[K]}$, we have          $\Tilde{\chi} \in\Delta(\alpha,\beta,\Pi)$ when 
    \begin{align} \label{intersection rule, a, b_general}
    \begin{split}
        a &= |\log\alpha| + \max_{\cA\in \Pi}\, \log |\cA^c| \\        
      b &= |\log\beta| +  \max_{\cA\in \Pi}\, \log |\cA|.
      \end{split}
    \end{align}    
Similarly, for any given $l,u$, 
we have the following:
\begin{itemize}
   \item If $l=u\equiv m$, then $\hat{\chi}$ and $\check{\chi}$ both belong to $\Delta(\alpha,\beta,\Pi_{l,u})$ when
        \begin{align} \label{gap rule, c, d}
            \begin{split}
            c &= |\log\alpha| + \log(m(K-m)) \\
             d &= |\log\beta| + \log(m(K-m)).
            \end{split}
        \end{align}
    \item If $l<u$,
    then $\hat{\chi}$ and $\check{\chi}$ both belong to $\Delta(\alpha,\beta,\Pi_{l,u})$ when
    \begin{align} \label{gap-inter rule, a, b, c, d}
    \begin{split}
        a &= |\log\alpha| + \log K \\
        b &= |\log\beta| + \log K \\
        c &=|\log\alpha|+\log((K-l) \, K) \\
        d &=|\log\beta|+\log(u\, K).
    \end{split}
    \end{align} 
    \end{itemize}
    
These thresholds suffice for the asymptotic optimality theory  we develop in the next section, however they can be quite conservative for practical use.  Alternatively, 
one may use Monte Carlo simulation to determine the thresholds that equate, at least approximately, the maximum familywise error rates to their target levels. When the target levels are very small, this can be done efficiently 
using importance sampling, as we discuss in Subsection \ref{remark: importance sampling}.\\

\begin{proposition} \label{prop: AUB}
Suppose that \eqref{Complete convergence, assumption for AUB} holds and  let $\cA \subseteq [K]$, $i \in \cA$,  $j\notin\cA$.  
Then, as $a, b \to \infty$, 
\begin{align} \label{AUB, decentralized}
    & \Exp_\cA[\tilde{T}_i] \lesssim \frac{a}{I_i} 
    \qquad \text{and} \qquad
    \Exp_\cA[\tilde{T}_j] \lesssim \frac{b}{J_j}. 
\end{align}      
If, also, $\Pi=\Pi_{l,u}$ and $\cA\in \Pi_{l,u}$, then we have the following:
\begin{itemize}
    \item If $l=u$, then, as $c,d\to \infty$,
    \begin{gather}
    \Exp_\cA[\Ts_i] \lesssim \frac{c}{I_i + \cJ_\cA}, \qquad
    \Exp_\cA[\Ts_j] \lesssim \frac{d}{J_j+\cI_\cA}, \label{AUB, proposed, gap} \\
    \Exp_\cA[\check T]\lesssim \frac{c\vee d}{\cI_\cA+\cJ_\cA}. \label{AUB, synchronous, gap}
    \end{gather}

    \item If $l<u$, then, as $a,b,c,d\to\infty$,
    \begin{equation} \label{AUB, proposed, gap-inter}
    \begin{aligned}
        \Exp_\cA[\Ts_i] & \lesssim \begin{cases}
        \begin{aligned}
            & \frac{a}{I_i}\wedge\frac{c}{I_i+\cJ_\cA}, && \text{if } |\cA|=l \\
            & \;\; \frac{a}{I_i}, && \text{if } l<|\cA|\leq u,
        \end{aligned}
        \end{cases} \\
        \Exp_\cA[\Ts_j] & \lesssim \begin{cases}
        \begin{aligned}
            & \;\; \frac{b}{J_j}, && \text{if } l\leq |\cA|<u \\
            & \frac{b}{J_j}\wedge\frac{d}{J_j+\cI_\cA}, && \text{if } |\cA|=u,
        \end{aligned}
        \end{cases}
    \end{aligned}
    \end{equation}
    and 
    \begin{equation} \label{AUB, synchronous, gap-inter}
        \Exp_\cA[\Tc] \lesssim \begin{cases}
        \begin{aligned}
            & \max\left\{ \frac{b}{\cJ_\cA},\; \frac{a}{\cI_\cA}\wedge\frac{c}{\cI_\cA+\cJ_\cA} \right\}, && \text{if } |\cA|=l \\
            & \max\left\{ \frac{a}{\cI_\cA}, \; \frac{b}{\cJ_\cA} \right\}, && \text{if } l<|\cA|<u \\
            & \max\left\{ \frac{a}{\cI_\cA},\; \frac{b}{\cJ_\cA}\wedge\frac{d}{\cJ_\cA+\cI_\cA} \right\}, && \text{if } |\cA|=u.
        \end{aligned}
        \end{cases}
    \end{equation}
    
    In particular, as $a,b,c,d\to\infty$ so that $a\sim c$ and $b\sim d$, 
    \begin{equation} \label{AUB, proposed, gap-inter, a sim c}
    \begin{aligned}
        \Exp_\cA[\Ts_i] & \lesssim \frac{a}{I_i+\cJ_\cA\cdot\bfone\{|\cA|=l\}}, \\
        \Exp_\cA[\Ts_j] & \lesssim \frac{b}{J_j+\cJ_\cA\cdot\bfone\{|\cA|=u\}}, 
    \end{aligned}
    \end{equation}
    and 
    \begin{equation} \label{AUB, synchronous, gap-inter, a sim c}
        \Exp_\cA[\Tc] \lesssim 
        \max\left\{ \frac{a}{\cI_\cA+\cJ_\cA\cdot\bfone\{|\cA|=l\}},\;\frac{b}{\cJ_\cA+\cI_\cA\cdot\bfone\{|\cA|=u\}} \right\}. \\
    \end{equation}
\end{itemize}
\end{proposition}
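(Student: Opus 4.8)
The plan is to reduce every decision time to a lattice combination (minima and maxima) of first-passage times of the form $\inf\{n : Y(n)\geq\gamma\}$, and to control each such time through the complete-convergence condition \eqref{Complete convergence, assumption for AUB}. The single tool I would isolate first is the following elementary bound, obtained from $\Exp_\cA[\tau]=\sum_{n\geq 0}\Pro_\cA(\tau>n)$: if $\tau=\inf\{n: Y_r(n)\geq\gamma_r \text{ for all } r\}$ is a \emph{simultaneous} first-passage time and, for each $r$, $Y_r(n)/n\to\mu_r$ with $\sum_n\Pro_\cA(Y_r(n)/n<\mu_r-\epsilon)<\infty$ for every $\epsilon>0$, then, writing $n^*\equiv\max_r\lceil\gamma_r/(\mu_r-\epsilon)\rceil$ and $G_n\equiv\bigcap_r\{Y_r(n)\geq(\mu_r-\epsilon)n\}$, the inclusion $\{\tau>n\}\subseteq G_n^c$ holds for all $n\geq n^*$, whence $\Exp_\cA[\tau]\leq n^*+\sum_{n\geq n^*}\sum_r\Pro_\cA(Y_r(n)/n<\mu_r-\epsilon)$. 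Since the tail of a convergent series vanishes, letting the thresholds and then $\epsilon$ go to $0$ gives $\Exp_\cA[\tau]\lesssim\max_r\gamma_r/\mu_r$. The single-threshold case $r=1$ yields at once the decentralized bounds \eqref{AUB, decentralized}, using $\tilde T_i\leq\inf\{n:\lambda_i(n)\geq a\}$ with rate $I_i$ and $\tilde T_j\leq\inf\{n:-\lambda_j(n)\geq b\}$ with rate $J_j$.

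The rates $\mu_r$ are read off from the a.s.\ limits under $\Pro_\cA$: each signal LLR obeys $\lambda_i(n)/n\to I_i$ and each noise LLR $\lambda_j(n)/n\to-J_j$, so eventually the $|\cA|$ largest ordered LLRs are the signals and the rest the noises, giving $\lambda_{(|\cA|)}(n)/n\to\cI_\cA$ and $\lambda_{(|\cA|+1)}(n)/n\to-\cJ_\cA$. The key point that makes $G_n^c$ summable is that these order statistics can be sandwiched by individual streams: when $|\cA|=m$, at least $m$ values sit at or above $\min_{i\in\cA}\lambda_i(n)$, so $\lambda_{(m)}(n)\geq\min_{i\in\cA}\lambda_i(n)$, while $\{\lambda_{(m+1)}(n)>t\}\subseteq\bigcup_{j\notin\cA}\{\lambda_j(n)>t\}$ because a value exceeding the $(m+1)$-th largest forces some noise above $t$. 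Hence each bad event in $G_n^c$ lies in a finite union of single-stream events of the type appearing in \eqref{Complete convergence, assumption for AUB}, which are summable. Applying the tool to $\Ts_{i,1}=\inf\{n:\lambda_i(n)\geq\lambda_{(m+1)}(n)+c\}$ (rate $I_i+\cJ_\cA$), to $\Ts_{i,2}$ read as $\inf\{n:\lambda_{(m)}(n)-\lambda_j(n)\geq d\}$ (rate $J_j+\cI_\cA$), and to $\Tc=\inf\{n:\lambda_{(m)}(n)-\lambda_{(m+1)}(n)\geq c\vee d\}$ (rate $\cI_\cA+\cJ_\cA$) proves \eqref{AUB, proposed, gap}--\eqref{AUB, synchronous, gap}.

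For the proposed procedure with $l<u$ I would bound $\Ts_i\leq\Ts_{i,1}$ and $\Ts_j\leq\Ts_{j,2}$, then split $\Ts_{i,1}=\sigma_i^a\wedge\sigma_i^c$ with $\sigma_i^a\equiv\inf\{n:\lambda_i(n)\geq a\}$ and $\sigma_i^c\equiv\inf\{n:\lambda_i(n)\geq\lambda_{(l+1)}(n)+c\}$ (and symmetrically $\Ts_{j,2}=\sigma_j^b\wedge\sigma_j^d$). The case split is governed by whether $\lambda_{(l+1)}(n)$ is a noise or a signal: when $|\cA|=l$ it is the top noise (rate $-\cJ_\cA$), so $\sigma_i^c$ has rate $I_i+\cJ_\cA$, whereas when $|\cA|>l$ it is a signal tending to $+\infty$ and only $\sigma_i^a$ (rate $I_i$) is useful. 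Using $\Exp_\cA[\sigma_i^a\wedge\sigma_i^c]\leq\min\{\Exp_\cA[\sigma_i^a],\Exp_\cA[\sigma_i^c]\}$ together with the elementary fact that $x_\theta\lesssim A_\theta$ and $x_\theta\lesssim C_\theta$ imply $x_\theta\lesssim A_\theta\wedge C_\theta$ (partition the threshold vectors according to which of $A_\theta,C_\theta$ is smaller and take limsups separately) yields \eqref{AUB, proposed, gap-inter}, and the symmetric argument handles the noise streams. The specialization \eqref{AUB, proposed, gap-inter, a sim c} then follows by substituting $a\sim c$, $b\sim d$ and noting, e.g., $(a/I_i)\wedge(c/(I_i+\cJ_\cA))\sim a/(I_i+\cJ_\cA)$ since $I_i<I_i+\cJ_\cA$.

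The synchronous procedure with $l<u$ is the most delicate part, and is where I expect the real work to lie. I would decompose $\Tc=\tau_1\wedge\tau_2\wedge\tau_3$ and bound each piece with the simultaneous-crossing tool: $\tau_2$ (all signals above $a$ and all noises below $-b$, with $l\leq p(n)\leq u$ automatic once the LLRs settle) gives $\max\{a/\cI_\cA,\,b/\cJ_\cA\}$; $\tau_1$, read as $\{-\lambda_{(l+1)}(n)\geq b\}\cap\{\lambda_{(l)}(n)-\lambda_{(l+1)}(n)\geq c\}$, gives $\max\{b/\cJ_\cA,\,c/(\cI_\cA+\cJ_\cA)\}$ when $|\cA|=l$ and is vacuous (equal to $\infty$, since then $\lambda_{(l+1)}(n)\to+\infty$) when $|\cA|>l$; and $\tau_3$ is symmetric, active only when $|\cA|=u$ with value $\max\{a/\cI_\cA,\,d/(\cI_\cA+\cJ_\cA)\}$. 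Taking $\Exp_\cA[\Tc]\leq\min$ of the relevant bounds and invoking the lattice identity $\max\{P,R\}\wedge\max\{P,Q\}=\max\{P,\,Q\wedge R\}$ collapses these to the three stated expressions in \eqref{AUB, synchronous, gap-inter}, and \eqref{AUB, synchronous, gap-inter, a sim c} follows as before. The main obstacle throughout is not the algebra but verifying summability of the $G_n^c$ probabilities uniformly for the order-statistic conditions — that is, making the informal statement ``the top $|\cA|$ LLRs are the signals'' rigorous at the level of summable single-stream deviations — and ensuring that the $\Exp[\max]\lesssim\max$ step (rather than the weaker $\Exp[\max]\leq$ sum) goes through, which is precisely what the good-event decomposition $\{\tau>n\}\subseteq G_n^c$ secures.
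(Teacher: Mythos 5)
Your proposal is correct and follows essentially the same route as the paper: both arguments reduce each decision time to a simultaneous first-passage time of individual-stream LLRs (or their pairwise differences), using the same combinatorial sandwiching of the order statistics by signal and noise streams, and then apply the complete-convergence bound $\Exp[\nu]\lesssim\max_r \gamma_r/\mu_r$, which the paper invokes as Lemma \ref{basic lemma for AUB} (cited from prior work) and which you prove inline via $\{\tau>n\}\subseteq G_n^c$. The only blemish is the parenthetical claim that $\tau_1=\infty$ when $|\cA|>l$ — it need not be infinite, merely useless as a bound — but this is harmless since in that case you only use $\Tc\leq\tau_2$ (and $\tau_3$ when $|\cA|=u$).
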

\begin{proof}
    See Appendix \ref{Proof related to AUB}. \\
\end{proof}

The above \textit{first-order} asymptotic upper bounds are free of  $l$ and $u$.  In the next proposition we consider an i.i.d. setup and obtain  second-order terms that  depend on $l$ (resp.  $u$) when the true number of signals  is $l$ (resp. $u$).

\begin{proposition} \label{prop: higher-order AUB}
Suppose  that  the increments of $\{\lambda_k(n), n \in \bN\}$ are i.i.d. with finite variance and mean $I_k>0$ under $\Pro_k^1$ and $-J_k<0$ under $\Pro_k^0$  for every $k\in[K]$.  Let $\cA\subseteq [K]$, $i\in\cA$, $j\notin\cA$. Then, as $a,b \to \infty$, 
       \begin{align} \label{AUB with higher-order term, decentralized}
       \Exp_\cA[\tilde{T}_i] \leq \frac{a}{I_i} +O(1) \qquad \text{and} \qquad \Exp_\cA[\tilde{T}_j] \leq \frac{b}{J_j}+O(1).
    \end{align}      
    If, also,  $\Pi=\Pi_{l,u}$ and $\cA\in \Pi_{l,u}$, then we have the following:
\begin{itemize}       
     \item  If $l=u\equiv m$, then,
        as $c,d\to\infty$,        
        \begin{align}
           \Exp_\cA[\Ts_i] &
            \leq \frac{c}{I_i+\cJ_\cA} + O\left( (K-m) \sqrt{c} \right) , \label{AUB with higher-order term, proposed, gap, i} \\
            \Exp_\cA[\Ts_j] & \leq \frac{d}{J_j+\cI_\cA} + O( m\sqrt{d} ),  \label{AUB with higher-order term, proposed, gap, j} \\
            \Exp_\cA[\Tc] & \leq \frac{c\vee d}{\cI_\cA+\cJ_\cA} + O\left( m(K-m)\sqrt{c\vee d} \right). \label{AUB with higher-order term, synchronous, gap}
        \end{align}
    \item If $l<u$, then, as $a,b,c,d\to\infty$ so that $|c-a| = O(1)$ and $|d-b|=O(1)$,
    \begin{align} 
        \Exp_\cA[\Ts_i] & \lesssim \begin{cases}
        \begin{aligned}
            & \frac{a}{I_i+\cJ_\cA}+O\left( (K-l)\sqrt a \right), && \text{if } |\cA|=l,  \\
            & \;\; \frac{a}{I_i} + O(1), && \text{if } l<|\cA|\leq u.
        \end{aligned}
        \end{cases} \label{AUB with higher-order term, proposed, gap-inter, i} \\
        \Exp_\cA[\Ts_j] & \lesssim \begin{cases}
        \begin{aligned}
            & \;\; \frac{b}{J_j}+O(1), && \text{if } l\leq|\cA|<u, \\
            & \frac{b}{J_j+\cI_\cA}+O(u\sqrt b ), && \text{if } |\cA|=u.
        \end{aligned}
        \end{cases} \label{AUB with higher-order term, proposed, gap-inter, j}
    \end{align}
      \end{itemize}      
\end{proposition}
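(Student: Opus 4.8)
The bounds \eqref{AUB with higher-order term, decentralized} for the parallel SPRT are the classical second-order expansions for Wald's test: under $\Pro_\cA$ the walk $\{\lambda_i(n)\}$ has i.i.d.\ increments of mean $I_i>0$ and finite variance, so by Wald's identity and the fact that the expected overshoot over a level is $O(1)$ under finite variance, one gets $\Exp_\cA[\tilde T_i]=a/I_i+O(1)$, with the bound for $\tilde T_j$ following by symmetry. The entire difficulty therefore lies in the gap and gap-intersection rules, and my plan is to reduce each relevant decision time to a \emph{simultaneous crossing} of a family of pairwise-difference walks, and then to control the expectation of such a crossing to second order.

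The first ingredient is an order-statistic domination. Fix a signal $i\in\cA$ with $|\cA|=m$ (resp.\ $|\cA|=l$ in the gap-intersection case). If at some time $n$ every noise stream satisfies $\lambda_j(n)\le\lambda_i(n)-c$ for all $j\notin\cA$, then at most the $m$ signals can exceed the level $\lambda_i(n)-c$, whence $\lambda_{(m+1)}(n)\le\lambda_i(n)-c$ and the gap rule has already fired in stream $i$. Consequently $\Ts_i\le\Ts_{i,1}\le\sigma_i$, where $\sigma_i\equiv\inf\{n:\lambda_i(n)-\lambda_j(n)\ge c\ \text{for all}\ j\notin\cA\}$ (in the gap-intersection case one first discards the $\min\{a,\cdot\}$, which only makes stopping easier). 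The symmetric argument for a noise stream $j\notin\cA$ gives $\Ts_j\le\Ts_{j,2}\le\sigma_j'$ with $\sigma_j'\equiv\inf\{n:\lambda_i(n)-\lambda_j(n)\ge d\ \text{for all}\ i\in\cA\}$, and for the synchronous rule the same reasoning yields $\Tc\le\inf\{n:\lambda_i(n)-\lambda_j(n)\ge c\vee d\ \text{for all}\ i\in\cA,\,j\notin\cA\}$. Under $\Pro_\cA$, by independence of the streams, each difference $\lambda_i(n)-\lambda_j(n)$ is a random walk with i.i.d.\ increments of positive mean $I_i+J_j$ and finite variance.

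The second ingredient is a simultaneous-crossing estimate, which I would isolate as a lemma: if $Z_1,\dots,Z_L$ are random walks with i.i.d.\ increments of positive means $\mu_1,\dots,\mu_L$ and finite variances $v_1,\dots,v_L$, and $\sigma\equiv\inf\{n:Z_\ell(n)\ge x\ \forall\ell\}$ with $\mu^*\equiv\min_\ell\mu_\ell$, then $\Exp[\sigma]\le x/\mu^*+O(L\sqrt x)$ as $x\to\infty$. The proof is by summation of tail probabilities: since $\{\sigma>n\}\subseteq\bigcup_\ell\{Z_\ell(n)<x\}$, one has $\Pro(\sigma>n)\le\min\{1,\sum_\ell\Pro(Z_\ell(n)<x)\}$, so with $n^*\equiv x/\mu^*$ we get $\Exp[\sigma]\le n^*+\sum_\ell\sum_{n>n^*}\Pro(Z_\ell(n)<x)$. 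For each $\ell$ I would bound $\sum_{n>n^*}\Pro(Z_\ell(n)<x)$ by splitting the range into three pieces. On a window of width $O(\sqrt x)$ just above $n^*$ the crude bound $\Pro\le 1$ contributes $O(\sqrt x)$. On the intermediate range, using $x\le\mu_\ell n^*$ and Chebyshev's inequality, $\Pro(Z_\ell(n)<x)\le v_\ell n/(\mu_\ell^2(n-n^*)^2)$, which sums to $O(n^*/\sqrt x)=O(\sqrt x)$ because $n^*=O(x)$. For the far tail $n>2n^*$ one has $x\le\mu_\ell n/2$, so the terms are dominated by $\Pro(Z_\ell(n)\le\mu_\ell n/2)$, whose total is finite under finite variance by the Hsu--Robbins--Erd\H{o}s (Baum--Katz) complete-convergence theorem; as this tail begins at $2n^*\to\infty$, it contributes $o(1)$. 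Summing over the $L$ walks gives the $O(L\sqrt x)$ correction.

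Combining the two ingredients delivers all the stated bounds. For the gap rule ($l=u=m$), applying the lemma to $\sigma_i$ (with $L=K-m$ difference walks whose slowest mean is $I_i+\cJ_\cA$) gives \eqref{AUB with higher-order term, proposed, gap, i}, and to $\sigma_j'$ ($L=m$, slowest mean $J_j+\cI_\cA$) gives \eqref{AUB with higher-order term, proposed, gap, j}, while the $m(K-m)$-walk crossing at level $c\vee d$ with slowest mean $\cI_\cA+\cJ_\cA$ gives \eqref{AUB with higher-order term, synchronous, gap}. For the gap-intersection rule ($l<u$) I would split on $|\cA|$: when $|\cA|=l$ (resp.\ $|\cA|=u$) the gap trigger binds and the lemma applies exactly as above with threshold $c$ (resp.\ $d$) and $L=K-l$ (resp.\ $L=u$); when $l<|\cA|\le u$ (resp.\ $l\le|\cA|<u$) the SPRT trigger $\Ts_{i,1}\le\inf\{n:\lambda_i(n)\ge a\}$ (resp.\ $\Ts_{j,2}\le\inf\{n:\lambda_j(n)\le -b\}$) binds and the classical $O(1)$ overshoot bound applies. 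Rewriting $c=a+O(1)$ and $d=b+O(1)$ under $|c-a|,|d-b|=O(1)$ then yields \eqref{AUB with higher-order term, proposed, gap-inter, i}--\eqref{AUB with higher-order term, proposed, gap-inter, j}. The main obstacle is the simultaneous-crossing lemma: extracting the sharp $O(L\sqrt x)$ correction while assuming only finite variance—so that exponential tail bounds are unavailable—forces the delicate three-region split above, and one must verify that complete convergence is inherited by the difference walks $\lambda_i-\lambda_j$, which it is because their increments are i.i.d.\ with finite variance.
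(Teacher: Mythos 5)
Your proposal is correct and follows essentially the same route as the paper: the order-statistic domination you describe is exactly the paper's bound \eqref{upper bound on hat Ti, the gap part} (and its analogues), and your simultaneous-crossing lemma is precisely Lemma \ref{lemma, for higher-order upper bound}, which the paper invokes by citation (to Song and Fellouris) rather than proving, and which it likewise applies with $|c-a|=O(1)$, $|d-b|=O(1)$ to transfer thresholds in the gap-intersection case. The only difference is that you prove that lemma from scratch (via the three-region tail split, with Chebyshev in the intermediate range and Hsu--Robbins complete convergence in the far tail) and derive the synchronous bound \eqref{AUB with higher-order term, synchronous, gap} from it as well, where the paper cites both to the literature.
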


\begin{proof}
    See Appendix \ref{Proof related to AUB}. \\
\end{proof}

\begin{remark} \label{remark: intuition from higher-order terms} 
The upper bounds in \eqref{AUB with higher-order term, proposed, gap, i}  and \eqref{AUB with higher-order term, proposed, gap-inter, i}  (resp.  \eqref{AUB with higher-order term, proposed, gap, j} and \eqref{AUB with higher-order term, proposed, gap-inter, j})   suggest that the expected decision time of the proposed test in a signal (resp. noise)  stream should be  decreasing in $l$ (resp. increasing in $u$) when the true number of signals is $l$ (resp. $u$), and independent of $l$ and  $u$  when the true number of signals is larger than $l$ (resp. smaller than $u$).  On the other hand, when the number of signals is a priori known i.e., $l=u\equiv m$, the  upper bound in \eqref{AUB with higher-order term, synchronous, gap}  suggests that the expected decision time of the synchronous test  increases  as $m$ approaches $K/2$. The intuition from these bounds  will be corroborated in the simulation studies of Section \ref{section: simulation studies}. 
\end{remark}

\section{Asymptotic optimality} \label{sec: AO}
In this section we establish the asymptotic optimality theory of the paper.  First,  we state a universal,  asymptotic  (as $\alpha,\beta\to 0$) lower bound on $\cL_{k,\cA} (\alpha, \beta, \Pi)$, defined in \eqref{def of AO}, for any  $k\in[K]$,  $A\in\Pi$, and any  class of prior information $\Pi$,  
and then we  show  that   it  is attained  by the proposed procedure simultaneously for every $A \in \Pi$ and $k\in[K]$ when $\Pi$ is of the form $\Pi_{l,u}$.  \\

\begin{lemma} \label{Lemma, ALB}
Let $\Pi$  satisfy \eqref{no trivial}, $\cA\in \Pi$,  $i\in \cA$, $j \notin\cA$. 
If condition \eqref{SLLN, assumption for ALB} holds for every $k \in [K]$, then,      as $\alpha,\beta\to 0$, 
    \begin{align} 
           & 
          \cL_{i,\cA} (\alpha, \beta, \Pi)  \gtrsim \frac{|\log\alpha|}{\min \{I_{\cA,\cC}:  \cC\in \Pi, \,  i\notin \cC \}  }, \label{ALB, i, general} \\
           & 
           \cL_{j, A} (\alpha, \beta, \Pi)  \gtrsim
            \frac{|\log\beta|}{\min \{ I_{\cA,\cC}: \cC\in \Pi, \,  j\in \cC \} }, \label{ALB, j, general}
    \end{align} 
where 
\begin{equation} \label{Definition of D^A,C}
    I_{\cA,\cC} \Def \sum_{k\in \cA\backslash\cC} I_k + \sum_{k\in \cC\backslash\cA} J_k.
\end{equation}

\end{lemma}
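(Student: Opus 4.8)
The plan is to derive each inequality by reducing the multiple testing problem to a binary discrimination between $\Pro_\cA$ and a single competing configuration, and then to optimize over that configuration. I focus on \eqref{ALB, i, general}; the bound \eqref{ALB, j, general} follows by an identical argument with the two error types interchanged. Fix $i\in\cA$ and let $\cC\in\Pi$ with $i\notin\cC$ attain $\min\{I_{\cA,\cC}:\cC\in\Pi,\,i\notin\cC\}$; such a $\cC$ exists by \eqref{no trivial}, and $I_{\cA,\cC}\geq I_i>0$. Under $\cA$ stream $i$ is a signal whereas under $\cC$ it is noise, so for any $\chi=(\bfT,\bfD)\in\Delta(\alpha,\beta,\Pi)$ the decision $D_i$ must separate the two measures: declaring $D_i=0$ is a type-II error under $\cA$, giving $\Pro_\cA(D_i=0)\leq\beta$, while declaring $D_i=1$ is a type-I error under $\cC$, giving $\Pro_\cC(D_i=1)\leq\alpha$. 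The relevant log-likelihood ratio of $\Pro_\cA$ against $\Pro_\cC$ on $\cF(n)$ is $\Lambda_{\cA,\cC}(n)\Def\sum_{k\in\cA\setminus\cC}\lambda_k(n)-\sum_{k\in\cC\setminus\cA}\lambda_k(n)$, and summing the one-sided limits in \eqref{SLLN, assumption for ALB} yields $\limsup_n \Lambda_{\cA,\cC}(n)/n\leq I_{\cA,\cC}$ almost surely under $\Pro_\cA$.

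Fix $\epsilon\in(0,1)$ and set $\psi_\alpha\Def(1-\epsilon)|\log\alpha|/((1+\epsilon)I_{\cA,\cC})$. The target is to show $\Pro_\cA(T_i\geq\psi_\alpha)\to1$, since then $\Exp_\cA[T_i]\geq\psi_\alpha\,\Pro_\cA(T_i\geq\psi_\alpha)$, and letting first $\alpha,\beta\to0$ and then $\epsilon\to0$ gives \eqref{ALB, i, general} uniformly in $\chi$, hence for $\cL_{i,\cA}(\alpha,\beta,\Pi)$. Because $\Pro_\cA(D_i=1)\geq1-\beta$, it suffices to prove $\Pro_\cA(T_i<\psi_\alpha,\,D_i=1)\to0$. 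Since $\{D_i=1,\,T_i<\psi_\alpha\}\in\cF(T_i)$, a change of measure from $\Pro_\cA$ to $\Pro_\cC$ on the stopped $\sigma$-field (Wald's likelihood-ratio identity) gives $\Pro_\cC(D_i=1,\,T_i<\psi_\alpha)=\Exp_\cA[e^{-\Lambda_{\cA,\cC}(T_i)}\bfone\{D_i=1,\,T_i<\psi_\alpha\}]$. Restricting the expectation to the event $\{\Lambda_{\cA,\cC}(T_i)\leq(1+\epsilon)I_{\cA,\cC}\psi_\alpha\}$, on which $e^{-\Lambda_{\cA,\cC}(T_i)}\geq e^{-(1+\epsilon)I_{\cA,\cC}\psi_\alpha}$, and using $\Pro_\cC(D_i=1,\,T_i<\psi_\alpha)\leq\alpha$, I obtain $\Pro_\cA(D_i=1,\,T_i<\psi_\alpha,\,\Lambda_{\cA,\cC}(T_i)\leq(1+\epsilon)I_{\cA,\cC}\psi_\alpha)\leq\alpha\,e^{(1+\epsilon)I_{\cA,\cC}\psi_\alpha}=\alpha^{\epsilon}\to0$ by the choice of $\psi_\alpha$.

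It remains to discard the complementary event $\{T_i<\psi_\alpha,\,\Lambda_{\cA,\cC}(T_i)>(1+\epsilon)I_{\cA,\cC}\psi_\alpha\}$, which is contained in $\{\sup_{n\leq\psi_\alpha}\Lambda_{\cA,\cC}(n)>(1+\epsilon)I_{\cA,\cC}\psi_\alpha\}$. Here I invoke the limsup bound through the random variable $Z\Def\sup_n(\Lambda_{\cA,\cC}(n)-(1+\epsilon/2)I_{\cA,\cC}n)$, which satisfies $Z<\infty$ almost surely under $\Pro_\cA$; then $\sup_{n\leq\psi_\alpha}\Lambda_{\cA,\cC}(n)\leq(1+\epsilon/2)I_{\cA,\cC}\psi_\alpha+Z$, and since $\psi_\alpha\to\infty$ the event forces $Z>(\epsilon/2)I_{\cA,\cC}\psi_\alpha$, whose probability tends to $0$. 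Combining the two pieces yields $\Pro_\cA(T_i<\psi_\alpha,\,D_i=1)\to0$. Both bounds are uniform over $\chi\in\Delta(\alpha,\beta,\Pi)$—the change-of-measure bound because $\alpha^{\epsilon}$ is free of $\chi$, and the second because it involves only the law of the LLR path—so the conclusion transfers to the infimum $\cL_{i,\cA}$.

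For \eqref{ALB, j, general} I run the same argument after relabeling: for $j\notin\cA$ I select $\cC\in\Pi$ with $j\in\cC$ minimizing $I_{\cA,\cC}$ (so that $J_j$ enters and $I_{\cA,\cC}\geq J_j>0$), use $\Pro_\cA(D_j=0)\geq1-\alpha$ together with $\Pro_\cC(D_j=0)\leq\beta$, and repeat the change-of-measure step with $\beta$ in place of $\alpha$ and $\psi_\beta\Def(1-\epsilon)|\log\beta|/((1+\epsilon)I_{\cA,\cC})$. I expect the only genuinely delicate point to be the burn-in issue handled in the third paragraph: the single-sided assumption \eqref{SLLN, assumption for ALB} controls $\Lambda_{\cA,\cC}(n)/n$ only asymptotically, so one must rule out an early stop $T_i$ coinciding with an atypically large transient of the likelihood-ratio path, which is precisely what the almost-sure finiteness of $Z$ delivers.
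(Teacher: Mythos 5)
Your proof is correct and follows essentially the same route as the paper's: reduction to a binary discrimination between $\Pro_\cA$ and a competing configuration $\Pro_\cC$ with $\cC\in\Pi$, use of the two error constraints under $\Pro_\cA$ and $\Pro_\cC$, a change of measure via Wald's likelihood ratio identity with the log-likelihood ratio truncated at level $(1-\epsilon)|\log\alpha|$, and Markov's inequality to pass to the expectation. The only cosmetic difference is that you control the maximal term $\Pro_\cA\bigl(\sup_{n\leq\psi_\alpha}\Lambda_{\cA,\cC}(n)>(1+\epsilon)I_{\cA,\cC}\psi_\alpha\bigr)$ directly through the a.s.\ finite variable $Z$, whereas the paper invokes its supporting Lemma \ref{lemma used for proving the ALB}, which encapsulates the same consequence of \eqref{SLLN, assumption for ALB}.
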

\begin{proof}
    See Appendix \ref{Proof related to ALB}. \\
\end{proof}

\begin{theorem} \label{thm: AO} 
Let $l,u$ satisfy \eqref{lu}. Suppose that the thresholds $a, b,c,d$ of $\hat\chi$ are selected so that $\hat\chi \in \Delta(\alpha, \beta, \Pi_{l,u})$ for any $\alpha,\beta\in(0,1)$, and  also
\begin{equation} \label{thres_proposed}
a,c \sim |\log \alpha| \quad \text{and}  \quad b,d \sim |\log \beta| \quad \text{as} \quad \alpha, \beta \to 0,
\end{equation}
 e.g.,       according to \eqref{gap rule, c, d} when $l=u$ and   \eqref{gap-inter rule, a, b, c, d} when $l<u$.
 
 If conditions \eqref{SLLN, assumption for ALB}-\eqref{Complete convergence, assumption for AUB} hold for every $k \in [K]$, then,   as $\alpha,\beta\to 0$, 
    \begin{align} 
       \Exp_\cA[\Ts_i] & \sim   \cL_{i,\cA} (\alpha, \beta, \Pi_{l,u})  \sim \frac{|\log\alpha|}{I_i+\cJ_\cA \cdot \bfone\left\{ |\cA|=l \right\}},  \label{AO, i} \\
     \Exp_\cA[\Ts_j] & \sim   \cL_{j,\cA} (\alpha, \beta, \Pi_{l,u})  \sim \frac{|\log\beta|}{J_j+\cI_\cA \cdot \bfone\left\{ |\cA|=u \right\}}, \label{AO, j}
    \end{align}
simultaneously for every   $i\in\cA$,  $j\notin\cA$, $\cA\in\Pi_{l,u}$.
\end{theorem}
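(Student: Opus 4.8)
The plan is a sandwich argument: the achievability upper bounds of Proposition \ref{prop: AUB} are matched against the universal lower bounds of Lemma \ref{Lemma, ALB}, the two sides meeting \emph{exactly} after a short combinatorial evaluation of the minimum appearing in the lower bound. Since the theorem's thresholds guarantee $\hat\chi\in\Delta(\alpha,\beta,\Pi_{l,u})$, the definition \eqref{def of AO} gives $\cL_{i,\cA}(\alpha,\beta,\Pi_{l,u})\leq\Exp_\cA[\Ts_i]$ and likewise for $j$, so it suffices to prove the upper bound for $\Exp_\cA[\Ts_i]$ and the lower bound for $\cL_{i,\cA}$ and show they agree.

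For the upper bound I would invoke Proposition \ref{prop: AUB} with the theorem's thresholds. When $l<u$, substituting $a\sim c\sim|\log\alpha|$ and $b\sim d\sim|\log\beta|$ into \eqref{AUB, proposed, gap-inter, a sim c} gives directly
\[
\Exp_\cA[\Ts_i] \lesssim \frac{|\log\alpha|}{I_i+\cJ_\cA\cdot\bfone\{|\cA|=l\}}, \qquad \Exp_\cA[\Ts_j] \lesssim \frac{|\log\beta|}{J_j+\cI_\cA\cdot\bfone\{|\cA|=u\}}.
\]
When $l=u\equiv m$, every $\cA\in\Pi_{l,u}$ has $|\cA|=m=l=u$, so both indicators equal $1$; then \eqref{AUB, proposed, gap} with $c\sim|\log\alpha|$ and $d\sim|\log\beta|$ yields the same two bounds. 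This supplies the right-hand $\lesssim$ in \eqref{AO, i}--\eqref{AO, j}.

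For the lower bound I would start from \eqref{ALB, i, general}--\eqref{ALB, j, general} and evaluate the two minima over $\Pi_{l,u}$, using $I_{\cA,\cC}=\sum_{k\in\cA\setminus\cC}I_k+\sum_{k\in\cC\setminus\cA}J_k$ from \eqref{Definition of D^A,C}. Fix $\cA\in\Pi_{l,u}$ and $i\in\cA$. Any admissible $\cC$ has $i\notin\cC$, hence $i\in\cA\setminus\cC$ contributes $I_i$ while all remaining summands are nonnegative, so $\min\{I_{\cA,\cC}\}\geq I_i$. If $|\cA|>l$, the choice $\cC=\cA\setminus\{i\}\in\Pi_{l,u}$ attains $I_{\cA,\cC}=I_i$, so the minimum is $I_i$. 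If $|\cA|=l$, then $|\cC|\geq l=|\cA|$ forces $|\cC\setminus\cA|\geq|\cA\setminus\cC|\geq 1$, so a noise index must be added; taking $\cC=(\cA\setminus\{i\})\cup\{j^*\}$ with $j^*\in\argmin_{j\notin\cA}J_j$ (possible since $|\cA|=l<K$) attains the optimal value $I_i+\cJ_\cA$. Thus the minimum equals $I_i+\cJ_\cA\cdot\bfone\{|\cA|=l\}$, giving $\cL_{i,\cA}\gtrsim|\log\alpha|/(I_i+\cJ_\cA\bfone\{|\cA|=l\})$. The symmetric argument for $j\notin\cA$ (now $\cC$ must contain $j$, and when $|\cA|=u$ one is forced to drop the signal index of smallest $I_i$, costing $\cI_\cA$) gives $\cL_{j,\cA}\gtrsim|\log\beta|/(J_j+\cI_\cA\bfone\{|\cA|=u\})$.

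Chaining the lower bound, the inequality $\cL_{i,\cA}\leq\Exp_\cA[\Ts_i]$, and the upper bound sandwiches all three quantities between asymptotically equal expressions, which proves \eqref{AO, i}; the argument for \eqref{AO, j} is identical. Because $\Pi_{l,u}$ is finite and each $\cA$ has finitely many indices, the finitely many asymptotic equivalences hold simultaneously. I expect the only genuine work to be the combinatorial evaluation of the minimum in Lemma \ref{Lemma, ALB}: this is precisely where the indicators $\bfone\{|\cA|=l\}$ and $\bfone\{|\cA|=u\}$ emerge, and where one must verify that the achievability bound and the universal bound coincide \emph{exactly} rather than merely up to a constant — the coincidence being what certifies the first-order optimality of $\hat\chi$.
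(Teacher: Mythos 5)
Your proof is correct and follows essentially the same route as the paper: the paper likewise sandwiches $\cL_{k,\cA}$ and $\Exp_\cA[\Ts_k]$ between the lower bound of Lemma \ref{Lemma, ALB} and the upper bounds of Proposition \ref{prop: AUB}, closing the gap with the same witnesses $\cC=\cA\setminus\{i\}$ (when $|\cA|>l$) and $\cC=(\cA\setminus\{i\})\cup\{j\}$ (when $|\cA|=l$). The only difference is that the paper proves just the one-sided inequality $\min\{I_{\cA,\cC}:\cC\in\Pi_{l,u},\,i\notin\cC\}\leq I_i+\cJ_\cA\cdot\bfone\{|\cA|=l\}$, which is all the sandwich requires, whereas you also establish the reverse inequality; that extra verification is harmless but not needed, since the chain of asymptotic inequalities forces the coincidence automatically.
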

\begin{proof} 
    See Appendix \ref{Proof related to ALB}. \\
\end{proof}

\subsection{Comparison with  decentralized procedures} 
We next establish the asymptotic optimality of  the parallel SPRT, $\tilde\chi$, defined in \eqref{def: SPRT}-\eqref{def: decentralized procedure}, in the subfamily of  \textit{decentralized} procedures,  for any given class of prior information.

\begin{theorem} \label{thm: AA for decentralized procedures}
Let $\Pi$ satisfy \eqref{no trivial}. Suppose that the thresholds $a, b$ of $\Tilde{\chi}$ are selected so that $\Tilde\chi \in \Delta(\alpha, \beta, \Pi)$  for any  $\alpha,\beta\in(0,1)$, and also
 \begin{equation} \label{thres_decentralized}
a \sim |\log \alpha| \quad \text{and} \quad b \sim |\log \beta| \quad \text{as} \quad \alpha, \beta \to 0,
\end{equation}
e.g.,       according to \eqref{intersection rule, a, b_general}. 

If conditions \eqref{SLLN, assumption for ALB}-\eqref{Complete convergence, assumption for AUB} hold for every $k \in [K]$, then, as $\alpha,\beta\to 0$,
\begin{equation} \label{AA for decentralized}
\begin{aligned}
    \Exp_\cA[\tilde{T}_i] & \sim \cL'_{i,\cA}(\alpha, \beta, \Pi)   \sim \frac{|\log\alpha|}{I_i},\\
    \Exp_\cA[\tilde{T}_j] & \sim \cL'_{j,\cA}(\alpha, \beta, \Pi)  \sim \;  \frac{|\log\beta|}{J_j},
\end{aligned}
\end{equation}
simultaneously for every $i\in\cA$, $j\notin\cA$, $\cA\in\Pi$.
\end{theorem}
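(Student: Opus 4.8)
The plan is to establish each asymptotic chain by squeezing $\cL'$ between a matching upper and lower bound: the parallel SPRT $\tilde\chi$ furnishes the upper bound (an achievable value), while a decentralization-specific information inequality furnishes the lower bound valid for every competitor in $\Delta'\cap\Delta(\alpha,\beta,\Pi)$. Since $\tilde\chi$ itself attains the lower bound, the two displays follow.

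For the upper bound, by hypothesis the thresholds render $\tilde\chi\in\Delta(\alpha,\beta,\Pi)$, and as $\tilde\chi$ is manifestly decentralized we have $\tilde\chi\in\Delta'\cap\Delta(\alpha,\beta,\Pi)$; hence $\cL'_{i,\cA}\le\Exp_\cA[\tilde T_i]$ and $\cL'_{j,\cA}\le\Exp_\cA[\tilde T_j]$. Substituting the threshold asymptotics \eqref{thres_decentralized} into the decentralized bounds \eqref{AUB, decentralized} of Proposition \ref{prop: AUB} immediately gives $\Exp_\cA[\tilde T_i]\lesssim|\log\alpha|/I_i$ and $\Exp_\cA[\tilde T_j]\lesssim|\log\beta|/J_j$.

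The substance is the lower bound, and here decentralization is precisely what collapses the change of measure to a single stream. Fix $i\in\cA$ and, using \eqref{no trivial}, pick $\cB\in\Pi$ with $i\notin\cB$. For any $\chi\in\Delta'\cap\Delta(\alpha,\beta,\Pi)$ the pair $(T_i,D_i)$ is $\{\cF_i(n)\}$-adapted, so by independence of the streams every event $E\in\cF_i(T_i)$ obeys the likelihood-ratio identity $\Pro_\cB(E)=\Exp_\cA[e^{-\lambda_i(T_i)}\bfone\{E\}]$. Taking $E=\{D_i=1\}$, a false positive under $\cB$, the constraint $\text{FWE}^1_\cB(\chi)\le\alpha$ yields $\alpha\ge\Exp_\cA[e^{-\lambda_i(T_i)}\bfone\{D_i=1\}]$, and Jensen's inequality conditional on $\{D_i=1\}$ gives
\begin{equation*}
\Exp_\cA\!\left[\lambda_i(T_i)\,\bfone\{D_i=1\}\right]\ \ge\ \Pro_\cA(D_i=1)\bigl(|\log\alpha|+\log\Pro_\cA(D_i=1)\bigr).
\end{equation*}
Because $\text{FWE}^2_\cA(\chi)\le\beta$ forces $\Pro_\cA(D_i=1)\ge 1-\beta\to1$, the right-hand side is $|\log\alpha|\,(1-o(1))$. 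Combining this with the control $\Exp_\cA[\lambda_i(T_i)^+]\lesssim I_i\,\Exp_\cA[T_i]$ supplied by \eqref{SLLN, assumption for ALB} produces $\Exp_\cA[T_i]\gtrsim|\log\alpha|/I_i$, uniformly over $\chi$, whence $\cL'_{i,\cA}\gtrsim|\log\alpha|/I_i$. The mirror-image argument for $j\notin\cA$—choosing $\cB'\in\Pi$ with $j\in\cB'$ so that $\{D_j=0\}$ is a false negative there, invoking $\text{FWE}^2_{\cB'}(\chi)\le\beta$ and the identity $\Pro_{\cB'}(E)=\Exp_\cA[e^{\lambda_j(T_j)}\bfone\{E\}]$, and using the second line of \eqref{SLLN, assumption for ALB}—yields $\cL'_{j,\cA}\gtrsim|\log\beta|/J_j$. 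Matching these with the upper bounds closes both chains simultaneously over all $i\in\cA$, $j\notin\cA$, $\cA\in\Pi$.

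I expect the sole delicate point to be the passage $\Exp_\cA[\lambda_i(T_i)^+]\lesssim I_i\,\Exp_\cA[T_i]$: assumption \eqref{SLLN, assumption for ALB} asserts only the sample-path bound $\limsup_n\lambda_i(n)/n\le I_i$, so transferring it to the stopped expectation demands a truncation and uniform-integrability argument that absorbs the overshoot of $\lambda_i$ at the $\{\cF_i(n)\}$-stopping time $T_i$. This is exactly the single-stream specialization of the device already used to prove Lemma \ref{Lemma, ALB}, which I would reuse verbatim; the effect of decentralization is that only stream $i$'s likelihood ratio enters the change of measure, so the relevant rate is the individual $I_i$ rather than the aggregate $\min\{I_{\cA,\cC}:\cC\in\Pi,\,i\notin\cC\}$ of \eqref{ALB, i, general}, which is why $\tilde\chi$ cannot recover the $\cJ_\cA$- and $\cI_\cA$-driven speedups enjoyed by the proposed test in Theorem \ref{thm: AO}.
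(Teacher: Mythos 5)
Your upper bound and your single-stream reduction are both sound, and they agree with the paper's strategy: the paper likewise gets the achievability half from Proposition \ref{prop: AUB}, and for the converse half it observes that, by decentralization and \eqref{no trivial}, the local test $(T_k,D_k)$ of any procedure in $\Delta'\cap\Delta(\alpha,\beta,\Pi)$ controls the \emph{local} type-I and type-II error probabilities below $\alpha$ and $\beta$, and then invokes the classical asymptotic lower bound for binary sequential testing under \eqref{SLLN, assumption for ALB} (citing \cite[Lemma 3.4.1]{Tartakovsky_Book}). The genuine gap is in how you finish the lower bound. The passage $\Exp_\cA[\lambda_i(T_i)^+]\lesssim I_i\,\Exp_\cA[T_i]$ is not merely ``delicate'': it is false in general under the stated assumptions. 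Conditions \eqref{SLLN, assumption for ALB}--\eqref{Complete convergence, assumption for AUB} only constrain the almost-sure behavior of $\lambda_i(n)/n$ (from above and from below, respectively); they impose no moment or uniform-integrability structure on the increments, which may be temporally dependent and have arbitrarily heavy upper tails. For example, a process with $\lambda_i(n)=I_i n$ for all $n$ except a single spike $\lambda_i(\tau)=I_i\tau+2^{\tau}$ at a geometric random time $\tau$ satisfies both conditions, yet $\Exp[\lambda_i(T)^{+}]/\Exp[T]$ is unbounded over stopping times $T$. No truncation/uniform-integrability argument can rescue a false inequality, so your step from $\Exp_\cA[\lambda_i(T_i)\bfone\{D_i=1\}]\geq|\log\alpha|\,(1-o(1))$ to $\Exp_\cA[T_i]\gtrsim|\log\alpha|/I_i$ does not go through. (In the i.i.d.\ finite-mean setting your Wald--Jensen argument is the classical one and would be fine; but the theorem is claimed under much weaker assumptions, which is precisely why the literature uses a different argument there.)

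Relatedly, your proposed repair mischaracterizes Lemma \ref{Lemma, ALB}: its proof contains no device for controlling stopped expectations and so cannot be ``reused verbatim'' for that purpose; what it actually does is bypass stopped expectations altogether. It applies Markov's inequality to $\Pro_\cA(T_i>N)$ with $N=(1-\epsilon)|\log\alpha|/I_{\cA,\cC}$, splits the event $\{T_i\leq N,\,D_i=1\}$ according to whether the LLR at time $T_i$ exceeds $(1-\epsilon^2)|\log\alpha|$, kills the first piece by a change of measure at the \emph{probability} level, and kills the second by the running-maximum Lemma \ref{lemma used for proving the ALB}, which requires exactly the a.s.\ limsup bound in \eqref{SLLN, assumption for ALB} and nothing more. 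The correct way to complete your proof is therefore to discard the Jensen step and rerun that probability-level argument with the single-stream LLR $\lambda_i$ in place of $\lambda_{\cA,\cC}$, using your (correct) decentralized change-of-measure identity $\Pro_\cB(E)=\Exp_\cA[e^{-\lambda_i(T_i)}\bfone\{E\}]$ for $E\in\cF_i(T_i)$: take $N=(1-\epsilon)|\log\alpha|/I_i$, bound $\Pro_\cA(D_i=1,\,\lambda_i(T_i)\leq(1-\epsilon^2)|\log\alpha|)\leq\alpha^{\epsilon^2}$ by the change of measure and $\text{FWE}^1_\cB(\chi)\leq\alpha$, and bound $\Pro_\cA(T_i\leq N,\,\lambda_i(T_i)>(1-\epsilon^2)|\log\alpha|)$ by Lemma \ref{lemma used for proving the ALB}. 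The denominator then becomes $I_i$ rather than $I_{\cA,\cC}$, which is the entire content of the decentralized bound; with that replacement your argument becomes a self-contained version of the single-stream result the paper cites.
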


\begin{proof}
    See Appendix \ref{Proof related to ALB}. \\
\end{proof}

Using Theorems  \ref{thm: AO} and  \ref{thm: AA for decentralized procedures}  we can  now evaluate the asymptotic relative  efficiencies    in \eqref{AE'}.\\

\begin{corollary} \label{coro1}
Let $\Pi$ satisfy \eqref{no trivial}. Suppose that conditions \eqref{SLLN, assumption for ALB}-\eqref{Complete convergence, assumption for AUB} hold for every $k \in [K]$.  Then, for every  $i\in\cA$, $j\notin\cA$, $\cA\in\Pi_{l,u}$,
\begin{equation} \label{AE' relative to decentralized procedures}
\begin{split}    
    \Ae'_{i,\cA}(l,u) &= 
    \begin{cases}
    \begin{aligned}
    & I_i/(I_i+\cJ_\cA), && \text{when} \quad |\cA|=l\\
    & 1,   && \text{otherwise},
    \end{aligned}
    \end{cases}
    \\
    \Ae'_{j,\cA}(l,u) &= 
    \begin{cases}
    \begin{aligned}    
    &  J_j/(J_j+\cI_\cA), && \text{when} \quad |\cA|=u\\
    & 1,   && \text{otherwise}.
    \end{aligned}
     \end{cases}
\end{split}
\end{equation}
When, in particular, $l=0$, 
$$  \Ae'_{i,\cA}(l,u) =1, \quad \forall \; i \in A,  $$
 and when $u=K$, 
 $$  \Ae'_{j,\cA}(l,u) =1, \quad \forall \; j \notin A.$$
\end{corollary}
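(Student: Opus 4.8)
The plan is to read off the claimed efficiencies by simply forming the ratio in \eqref{AE'} of the first-order asymptotics already established: Theorem \ref{thm: AO} supplies the numerator and Theorem \ref{thm: AA for decentralized procedures} the denominator. Both theorems are in force here, since conditions \eqref{SLLN, assumption for ALB}--\eqref{Complete convergence, assumption for AUB} are assumed for every $k\in[K]$. The one preliminary point I would verify is that Theorem \ref{thm: AA for decentralized procedures} applies with $\Pi=\Pi_{l,u}$: by the convention attached to the symbol $\Pi_{l,u}$, the pair $l,u$ satisfies \eqref{lu}, and \eqref{lu} forces $\Pi_{l,u}$ to satisfy \eqref{no trivial} (since $u>0$ there is a set of size in $[l,u]$ containing any fixed $k$, and since $l<K$ there is one of admissible size avoiding it).

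For a signal stream $i\in\cA$, I would combine \eqref{AO, i} and \eqref{AA for decentralized} to obtain, as $\alpha,\beta\to 0$,
\[
\cL_{i,\cA}(\alpha,\beta,\Pi_{l,u}) \sim \frac{|\log\alpha|}{I_i + \cJ_\cA\cdot\bfone\{|\cA|=l\}}, \qquad \cL'_{i,\cA}(\alpha,\beta,\Pi_{l,u}) \sim \frac{|\log\alpha|}{I_i}.
\]
Dividing, the common factor $|\log\alpha|$ cancels and the limit of the ratio in \eqref{AE'} equals the constant $I_i/(I_i+\cJ_\cA\cdot\bfone\{|\cA|=l\})$. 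Splitting on whether $|\cA|=l$ or $l<|\cA|\le u$ — the only two possibilities for an $\cA\in\Pi_{l,u}$ that contains $i$ — yields the first line of \eqref{AE' relative to decentralized procedures}. The noise case $j\notin\cA$ is verbatim the same argument after replacing $(\alpha, I_i, \cJ_\cA, l)$ by $(\beta, J_j, \cI_\cA, u)$ and invoking \eqref{AO, j} instead of \eqref{AO, i}.

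The two displayed special cases then follow immediately from the indicators. When $l=0$, any $\cA\in\Pi_{l,u}$ with $i\in\cA$ has $|\cA|\ge 1>0=l$, so $\bfone\{|\cA|=l\}=0$ and $\Ae'_{i,\cA}(l,u)=1$; symmetrically, when $u=K$, any $\cA$ with $j\notin\cA$ has $|\cA|\le K-1<K=u$, so $\bfone\{|\cA|=u\}=0$ and $\Ae'_{j,\cA}(l,u)=1$. There is no substantive obstacle in this corollary; the only step deserving a word of care is the passage from the two asymptotic equivalences to the limit of their quotient, which is legitimate because each of $\cL_{i,\cA}$ and $\cL'_{i,\cA}$ is asymptotically equivalent to an explicit positive multiple of the same nonvanishing factor $|\log\alpha|$ (respectively $|\log\beta|$ in the noise case), so that factor cancels and the quotient converges to the ratio of the two multiples.
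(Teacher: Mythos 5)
Your proposal is correct and follows exactly the paper's route: the paper proves this corollary by the one-line remark that it suffices to compare \eqref{AO, i}--\eqref{AO, j} with \eqref{AA for decentralized}, i.e., to take the quotient of the two first-order asymptotics, which is what you do. Your additional checks (that $\Pi_{l,u}$ satisfies \eqref{no trivial} so Theorem \ref{thm: AA for decentralized procedures} applies, and that the cancellation of the common factor $|\log\alpha|$ or $|\log\beta|$ is legitimate) are sound elaborations of steps the paper leaves implicit.
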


\begin{proof}
It suffices to  compare \eqref{AO, i}-\eqref{AO, j} and \eqref{AA for decentralized}. \\
\end{proof}

If the multiple testing problem is \textit{homogeneous} in the sense that 
\begin{equation} \label{homogeneous}
    I_k=I \quad \text{and} \quad J_k=J, \quad \forall\; k\in [K],
\end{equation}
then \eqref{AE' relative to decentralized procedures} reduces to 
\begin{equation} \label{AE' in homo setup}
\begin{split}    
    \Ae'_{i,\cA}(l,u) &= 
    \begin{cases}
    \begin{aligned}
    & I/(I+J), && \text{when} \quad |\cA|=l\\
    & 1,   && \text{otherwise},
    \end{aligned}
    \end{cases}
    \\
    \Ae'_{j,\cA}(l,u) &= 
    \begin{cases}
    \begin{aligned}    
    & J/(I+J), && \text{when} \quad |\cA|=u\\
    & 1,   && \text{otherwise}.
    \end{aligned}
     \end{cases}
\end{split}
\end{equation} 
If the multiple testing problem is also \textit{symmetric} in the sense that 
\begin{equation} \label{symmetric}
    I_k=J_k, \quad \forall\; k\in [K],
\end{equation}
then 
\begin{equation} \label{AE' in homo and symm setup}
\begin{split}    
    \Ae'_{i,\cA}(l,u) &= 
    \begin{cases}
    \begin{aligned}
    & 1/2, && \text{when} \quad |\cA|=l\\
    & 1,   && \text{otherwise},
    \end{aligned}
    \end{cases}
    \\
    \Ae'_{j,\cA}(l,u) &= 
    \begin{cases}
    \begin{aligned}    
    & 1/2, && \text{when} \quad |\cA|=u\\
    & 1,   && \text{otherwise}.
    \end{aligned}
     \end{cases}
\end{split}
\end{equation}

Theorems  \ref{thm: AO} and  \ref{thm: AA for decentralized procedures} and  Corollary \ref{coro1} imply that 
when the  thresholds of the  parallel SPRT and the proposed test  are selected so that \eqref{thres_proposed} and \eqref{thres_decentralized} hold, the two tests induce, asymptotically as $\alpha, \beta \to 0$,  the same expected decision time  in every signal (resp. noise) stream apart from when the true number of signals is equal to its a priori lower (resp. upper) bound. In the latter case,  the expected decision time of the  parallel SPRT is larger in every signal (resp. noise) stream and,  in particular, \textit{twice as large}  when 
\eqref{homogeneous} and \eqref{symmetric} hold.\\

Finally, we stress that the above comparisons are  only valid to a first-order asymptotic approximation as $\alpha, \beta \to 0$.  The actual, i.e., non-asymptotic, relative efficiencies of the parallel SPRT over the proposed test are computed  in various simulation studies in Section \ref{section: simulation studies}, where they are compared with their limiting values in  \eqref{AE' relative to decentralized procedures}.

\subsection{Comparison with  synchronous procedures} 
We next establish the  asymptotic optimality of the  synchronous procedure, defined  in \eqref{def: complete gap rule}-\eqref{def: complete gap-intersection rule}, in the subfamily of \textit{synchronous} procedures,  extending and generalizing the corresponding result in \cite{Song_prior} that applies only to the i.i.d. setup and under a second moment assumption on the log-likelihood ratios.  
    
\begin{theorem} \label{thm: AA for synchronous stopping}
Let $l,u$ satisfy \eqref{lu}. Suppose that the thresholds $a, b,c,d$ of $\check\chi$ are selected so that $\check{\chi} \in \Delta(\alpha, \beta, \Pi_{l,u})$ for any  $\alpha,\beta\in(0,1)$ and also so that  \eqref{thres_proposed} holds. 

If conditions \eqref{SLLN, assumption for ALB}-\eqref{Complete convergence, assumption for AUB} hold for every $k \in [K]$, then, as $\alpha,\beta\to 0$, 
    \begin{equation} \label{AA for the complete rule}
    \begin{aligned}
        \Exp_\cA[\Tc] & \sim \cL''_\cA(\alpha, \beta, \Pi_{ l,u} )  \\
        & \sim \max\left\{ \frac{|\log\alpha|}{\cI_\cA+\cJ_\cA \cdot \bfone\left\{ |\cA|=l \right\}},\;  \frac{|\log\beta|}{\cJ_\cA+\cI_\cA\cdot \bfone\left\{ |\cA|=u \right\}}\right\}
    \end{aligned}
    \end{equation}
simultaneously    for every $\cA\in\Pi_{l,u}$.
\end{theorem}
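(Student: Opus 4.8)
The plan is to pin both $\Exp_\cA[\Tc]$ and $\cL''_\cA(\alpha,\beta,\Pi_{l,u})$ to the common target $M \equiv \max\{|\log\alpha|/(\cI_\cA+\cJ_\cA\bfone\{|\cA|=l\}),\, |\log\beta|/(\cJ_\cA+\cI_\cA\bfone\{|\cA|=u\})\}$ by a two-sided sandwich. Since $\check\chi$ is synchronous and, by hypothesis, lies in $\Delta(\alpha,\beta,\Pi_{l,u})$, it is an admissible competitor in the infimum defining $\cL''_\cA$, so $\cL''_\cA \leq \Exp_\cA[\Tc]$ automatically. It therefore suffices to establish the upper bound $\Exp_\cA[\Tc]\lesssim M$ and the lower bound $\cL''_\cA\gtrsim M$ as $\alpha,\beta\to 0$; chaining $M \lesssim \cL''_\cA \leq \Exp_\cA[\Tc] \lesssim M$ then forces both quantities to be $\sim M$.

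For the upper bound I would simply invoke Proposition~\ref{prop: AUB}. Under the prescribed scaling $a,c\sim|\log\alpha|$ and $b,d\sim|\log\beta|$ one has $a\sim c$ and $b\sim d$, so equation~\eqref{AUB, synchronous, gap} (when $l=u$, in which case $|\cA|=l=u$, the two terms of $M$ share denominator $\cI_\cA+\cJ_\cA$, and $c\vee d\sim\max\{|\log\alpha|,|\log\beta|\}$) and equation~\eqref{AUB, synchronous, gap-inter, a sim c} (when $l<u$) both reduce, after substituting the threshold asymptotics, to $\Exp_\cA[\Tc]\lesssim M$. This step is a direct citation and needs no new work.

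The lower bound is where the structure enters. The key observation is that a synchronous procedure is a constrained member of $\Delta(\alpha,\beta,\Pi_{l,u})$: for any $\chi\in\Delta''\cap\Delta(\alpha,\beta,\Pi_{l,u})$ all decision times coincide, so $\Exp_\cA[T_1]=\Exp_\cA[T_k]\geq\cL_{k,\cA}(\alpha,\beta,\Pi_{l,u})$ for every $k\in[K]$; taking the infimum over such $\chi$ yields $\cL''_\cA\geq\max_{k\in[K]}\cL_{k,\cA}$. Feeding in Lemma~\ref{Lemma, ALB} for every signal $i\in\cA$ and noise $j\notin\cA$, and using that $x\mapsto 1/x$ is decreasing, reduces the bound to the two combinatorial quantities $\min_{i\in\cA}\min\{I_{\cA,\cC}:\cC\in\Pi_{l,u},\,i\notin\cC\}$ and $\min_{j\notin\cA}\min\{I_{\cA,\cC}:\cC\in\Pi_{l,u},\,j\in\cC\}$.

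Evaluating these minima is the step I expect to carry the real content. For the signal quantity, any competing $\cC$ that omits some $i\in\cA$ pays at least $\cI_\cA$ for the dropped signal; when $|\cA|>l$ the choice $\cC=\cA\setminus\{\argmin_{i\in\cA}I_i\}$ is size-feasible and attains exactly $\cI_\cA$, whereas when $|\cA|=l$ the constraint $|\cC|\geq l$ forces the simultaneous inclusion of some $j\notin\cA$, so the cheapest feasible swap attains $\cI_\cA+\cJ_\cA$. A symmetric argument for the noise quantity, where the binding constraint becomes $|\cC|\leq u$, gives $\cJ_\cA$ when $|\cA|<u$ and $\cI_\cA+\cJ_\cA$ when $|\cA|=u$. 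Substituting these values shows the lower bound equals $M$, which closes the sandwich and gives $\Exp_\cA[\Tc]\sim\cL''_\cA\sim M$ for each fixed $\cA\in\Pi_{l,u}$; since the thresholds do not depend on $\cA$ and there are finitely many configurations, the simultaneity over $\cA$ is immediate. The main obstacle is thus purely the least-favorable-configuration bookkeeping and the case split on whether $|\cA|$ saturates $l$ or $u$; everything else is assembly of Proposition~\ref{prop: AUB} and Lemma~\ref{Lemma, ALB}.
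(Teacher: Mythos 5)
Your proof is correct and takes essentially the same route as the paper: the upper bound is cited from Proposition \ref{prop: AUB}, and the lower bound comes from observing that a synchronous procedure's common decision time dominates $\cL_{k,\cA}(\alpha,\beta,\Pi_{l,u})$ for every $k$, so that $\cL''_\cA \geq \max_{k\in[K]}\cL_{k,\cA}$. The only difference is that where the paper evaluates $\max_{k\in[K]}\cL_{k,\cA}$ by directly citing the already-established asymptotics \eqref{AO, i}--\eqref{AO, j} of Theorem \ref{thm: AO}, you re-derive them from Lemma \ref{Lemma, ALB} together with the case analysis on whether $|\cA|$ saturates $l$ or $u$ --- which is exactly the combinatorial argument in the paper's proof of Theorem \ref{thm: AO}, so you have inlined that step rather than diverged from it.
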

\begin{proof}
    See Appendix \ref{Proof related to ALB}. \\
\end{proof}

From  Theorems \ref{thm: AO} and \ref{thm: AA for synchronous stopping} it follows that  the optimal expected decision time in the subfamily of synchronous procedures agrees, to a first-order asymptotic approximation as $\alpha, \beta \to 0$,  with the maximum (with respect to the streams)
optimal expected  decision time  in  the general family of sequential multiple testing procedures. This is the content of  the following corollary. 

\begin{corollary}
Let  $l,u$ satisfy \eqref{lu} and suppose that 
conditions \eqref{SLLN, assumption for ALB}-\eqref{Complete convergence, assumption for AUB} hold for every $k\in[K]$.  Then,  as $\alpha,\beta\to 0$,
\begin{equation} \label{wait for the slowest one}
    \cL''_\cA(\alpha, \beta, \Pi_{ l,u} )  \sim \max_{k\in[K]} \cL_{k,\cA}(\alpha, \beta, \Pi_{ l,u})
\end{equation}
simultaneously    for every $\cA\in\Pi_{l,u}$.
\end{corollary}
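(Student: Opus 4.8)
The plan is to derive \eqref{wait for the slowest one} by comparing the two asymptotic expressions already computed: the characterization of $\cL''_\cA(\alpha,\beta,\Pi_{l,u})$ in \eqref{AA for the complete rule} from Theorem \ref{thm: AA for synchronous stopping}, and the stream-wise characterizations of $\cL_{k,\cA}(\alpha,\beta,\Pi_{l,u})$ in \eqref{AO, i}-\eqref{AO, j} from Theorem \ref{thm: AO}. Since both theorems hold under the standing assumptions \eqref{SLLN, assumption for ALB}-\eqref{Complete convergence, assumption for AUB}, the corollary should follow purely algebraically, with no new probabilistic input. The heart of the argument is to show that taking the maximum over $k\in[K]$ of the right-hand sides of \eqref{AO, i}-\eqref{AO, j} reproduces exactly the right-hand side of \eqref{AA for the complete rule}.

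First I would split the maximum over all $K$ streams into the maximum over signal streams $i\in\cA$ and the maximum over noise streams $j\notin\cA$. For the signal streams, \eqref{AO, i} gives $\cL_{i,\cA}\sim |\log\alpha|/(I_i+\cJ_\cA\cdot\bfone\{|\cA|=l\})$, and since this is decreasing in $I_i$, maximizing over $i\in\cA$ amounts to minimizing $I_i$ over $i\in\cA$; by the definition \eqref{KL_min} this minimum is $\cI_\cA$. Hence $\max_{i\in\cA}\cL_{i,\cA}\sim |\log\alpha|/(\cI_\cA+\cJ_\cA\cdot\bfone\{|\cA|=l\})$. Symmetrically, for the noise streams \eqref{AO, j} and $\cJ_\cA=\min_{j\notin\cA}J_j$ give $\max_{j\notin\cA}\cL_{j,\cA}\sim |\log\beta|/(\cJ_\cA+\cI_\cA\cdot\bfone\{|\cA|=u\})$. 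Taking the outer maximum of these two quantities yields precisely the expression inside the $\max\{\cdot,\cdot\}$ in \eqref{AA for the complete rule}, which equals $\cL''_\cA$ by Theorem \ref{thm: AA for synchronous stopping}.

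The one point requiring mild care is the interchange of the finite maximum with the asymptotic equivalence relation $\sim$: I must argue that $\max_k \cL_{k,\cA}(\alpha,\beta,\Pi_{l,u})\sim \max_k(\text{leading term for stream }k)$. Because $[K]$ is a \emph{finite} index set and each of the $K$ ratios $\cL_{k,\cA}/(\text{its leading term})$ tends to $1$, the pointwise maximum of finitely many functions each asymptotic to a known positive quantity is asymptotic to the maximum of those quantities; this is a routine consequence of the finiteness of $K$ and positivity of all the constants $I_k,J_k$. I do not expect any genuine obstacle here—the only thing to be explicit about is that the constants $\cI_\cA,\cJ_\cA$ are strictly positive (guaranteed since $I_k,J_k>0$), so no degenerate $1/0$ terms arise and the maxima are well defined.

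Assembling these pieces, the chain
$$
\cL''_\cA(\alpha,\beta,\Pi_{l,u}) \sim \max\left\{\frac{|\log\alpha|}{\cI_\cA+\cJ_\cA\cdot\bfone\{|\cA|=l\}},\;\frac{|\log\beta|}{\cJ_\cA+\cI_\cA\cdot\bfone\{|\cA|=u\}}\right\} \sim \max_{k\in[K]}\cL_{k,\cA}(\alpha,\beta,\Pi_{l,u})
$$
completes the proof, with the first $\sim$ from Theorem \ref{thm: AA for synchronous stopping} and the second from the maximization of \eqref{AO, i}-\eqref{AO, j} as above. Since every step holds for each fixed $\cA\in\Pi_{l,u}$ and the class $\Pi_{l,u}$ is finite, the equivalence holds simultaneously over all $\cA\in\Pi_{l,u}$, as claimed.
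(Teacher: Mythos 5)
Your proof is correct and takes essentially the same approach as the paper: the paper's own proof is just the instruction ``Compare \eqref{AO, i}-\eqref{AO, j} and \eqref{AA for the complete rule},'' and your argument spells out exactly what that comparison entails (maximizing the stream-wise limits over $i\in\cA$ and $j\notin\cA$ to recover the two terms in \eqref{AA for the complete rule}, plus the routine interchange of a finite maximum with $\sim$).
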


\begin{proof}
Compare \eqref{AO, i}-\eqref{AO, j} and \eqref{AA for the complete rule}. \\
\end{proof}

Using  Theorems \ref{thm: AO} and \ref{thm: AA for synchronous stopping}, we next  evaluate the asymptotic relative  efficiencies  in \eqref{AE''}.

\begin{corollary} \label{coro: AE''}
  Let  $l,u$ satisfy \eqref{lu},  $A\in\Pi_{l,u}$, $i\in A$, $j\notin A$.  Suppose that  conditions \eqref{SLLN, assumption for ALB}-\eqref{Complete convergence, assumption for AUB} hold for every $k \in [K]$.
\begin{enumerate}
\item[(i)] If   $\alpha,\beta\to 0$ so that  $|\log \alpha| \ll |\log \beta|$, then
\begin{align*} 
\Ae''_{i,\cA}(l,u) &=0, \\
 \Ae''_{j,\cA}(l,u) 
 & = \frac{\cJ_\cA+\cI_\cA \cdot\bfone\{|\cA|=u\}}{J_j+\cI_\cA\cdot\bfone\{|\cA|=u\}}.
 \end{align*}
\item[(ii)] If   $\alpha,\beta\to 0$ so that $|\log\alpha|\gg|\log\beta|$, then
\begin{align*}  \Ae''_{i,\cA}(l,u) 
& =\frac{\cI_\cA+\cJ_\cA\cdot\bfone\{|\cA|=l\}}{I_i+\cJ_\cA\cdot\bfone\{|\cA|=l\}} ,\\
\Ae''_{j,\cA}(l,u) &=0. 
 \end{align*}
\item[(iii)] If $\alpha,\beta\to 0$ so that  
$|\log\alpha|\sim r|\log\beta|$ for some $r>0$, then 
    \begin{equation} \label{AE''i or j, cA, with r}
    \begin{aligned}
        \Ae''_{i,\cA}(l,u) 
        & = \frac{r/(I_i+\cJ_\cA\cdot\bfone\{|\cA|=l\})}{\max\big\{ r/(\cI_\cA+\cJ_\cA\cdot\bfone\{|\cA|=l\}),\, 1/(\cJ_\cA+\cI_\cA\cdot\bfone\{|\cA|=u\}) \big\}}, \\
        \Ae''_{j,\cA}(l,u) 
        & = \frac{1/(J_j+\cI_\cA\cdot\bfone\{|\cA|=u\})}{\max\big\{ r/(\cI_\cA+\cJ_\cA\cdot\bfone\{|\cA|=l\}),\, 1/(\cJ_\cA+\cI_\cA\cdot\bfone\{|\cA|=u\}) \big\}}.
    \end{aligned}
    \end{equation}
\end{enumerate}
\end{corollary}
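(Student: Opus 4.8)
The plan is to substitute the first-order asymptotics of $\cL_{i,\cA}$, $\cL_{j,\cA}$, and $\cL''_\cA$ established in Theorems~\ref{thm: AO} and~\ref{thm: AA for synchronous stopping} directly into the definition \eqref{AE''} of the asymptotic relative efficiency, and then to evaluate the resulting limit of ratios in each of the three scaling regimes. To keep the bookkeeping light, I would abbreviate the four denominators appearing in \eqref{AO, i}, \eqref{AO, j}, and \eqref{AA for the complete rule} as
\begin{equation*}
\begin{aligned}
P &\equiv I_i+\cJ_\cA\,\bfone\{|\cA|=l\}, & Q &\equiv J_j+\cI_\cA\,\bfone\{|\cA|=u\}, \\
M &\equiv \cI_\cA+\cJ_\cA\,\bfone\{|\cA|=l\}, & N &\equiv \cJ_\cA+\cI_\cA\,\bfone\{|\cA|=u\},
\end{aligned}
\end{equation*}
so that $\cL_{i,\cA}\sim|\log\alpha|/P$, $\cL_{j,\cA}\sim|\log\beta|/Q$, and $\cL''_\cA \sim \max\{|\log\alpha|/M,\,|\log\beta|/N\}$.

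The key observation is that in each regime one of the two terms inside the maximum defining $\cL''_\cA$ either dominates the other or is comparable to it with a known ratio, so the limit is immediate once the common logarithmic scale is factored out. In regime (i), where $|\log\alpha|/|\log\beta|\to 0$, the term $|\log\beta|/N$ dominates, whence $\cL''_\cA\sim|\log\beta|/N$; the signal-stream ratio $(|\log\alpha|/P)/(|\log\beta|/N)$ then vanishes, while in the noise-stream ratio the two $|\log\beta|$ factors cancel and leave $N/Q$, which is exactly the stated value. Regime (ii) is handled identically by symmetry, with the roles of $\alpha$ and $\beta$, of $P$ and $Q$, and of $M$ and $N$ interchanged. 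In regime (iii), where $|\log\alpha|\sim r|\log\beta|$, both terms are of order $|\log\beta|$; factoring it out gives $\cL''_\cA\sim|\log\beta|\,\max\{r/M,\,1/N\}$, and dividing $|\log\alpha|/P\sim r|\log\beta|/P$ and $|\log\beta|/Q$ by this quantity yields the two displayed expressions in \eqref{AE''i or j, cA, with r}.

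There is essentially no obstacle here: once Theorems~\ref{thm: AO} and~\ref{thm: AA for synchronous stopping} are in hand, the corollary reduces to a direct computation of limits of ratios of logarithmic scales, and all four denominators $P,Q,M,N$ are strictly positive so no degeneracy arises. The only point requiring a moment's care is the treatment of the maximum defining $\cL''_\cA$, but since each of the three regimes either makes one argument of the max dominant (regimes (i) and (ii)) or keeps the two arguments proportional with the explicit constant $r$ (regime (iii)), the limit of the maximum is determined without further effort.
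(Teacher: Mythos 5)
Your proposal is correct and is essentially the paper's own argument: the paper's proof consists precisely of comparing the asymptotics \eqref{AO, i}--\eqref{AO, j} from Theorem \ref{thm: AO} with \eqref{AA for the complete rule} from Theorem \ref{thm: AA for synchronous stopping} and taking the limit of the ratio in each regime, which is what you do with the shorthand $P,Q,M,N$. The only difference is that you spell out the elementary limit computations that the paper leaves implicit.
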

\begin{proof}
Compare \eqref{AO, i}-\eqref{AO, j} and \eqref{AA for the complete rule}. \\
\end{proof}

\begin{remark}
Suppose that the multiple testing problem is \textit{homogeneous}, i.e.,  \eqref{homogeneous} holds, and also  
\begin{equation} \label{balanced}
    |\log\alpha|\sim |\log\beta| \quad \text{as} \quad \alpha,\beta\to 0.
\end{equation}
In this case,  if the  number of signals is a priori known, i.e., $l=u$, then 
\begin{equation} \label{asy_opt}
  \Ae''_{k,\cA}(l,u)=1, \quad \forall\; k \in [K].
  \end{equation}
On the other hand,  if the  number of signals is not a priori known, i.e., $l<u$, then
\begin{center}
    \begin{tabular}{c|c|c|c}
    & $|\cA|=l$ & $l<|\cA|<u$ & $|\cA|=u$ \\ \hline 
    $\Ae''_{i,\cA}(l,u)$ &  $J/(I+J)$ & $(I\wedge J)/I$ & 1 \\ \hline
    $\Ae''_{j,\cA}(l,u)$ &  1 & $(I\wedge J)/J$ & $I/(I+J)$ \\ \hline
    \end{tabular}
\end{center}

Suppose, also, that the multiple testing problem is \textit{symmetric}, i.e., \eqref{symmetric} holds, then 
\begin{equation} \label{AE'' in homo and symm setup}
\begin{split}
    \Ae''_{i,\cA}(l,u) &= 
    \begin{cases}
    \begin{aligned}
    & 1/2, && \text{when} \quad |\cA|=l<u\\
    & 1,   && \text{otherwise},
    \end{aligned}
    \end{cases}
    \\
    \Ae''_{j,\cA}(l,u) &= 
    \begin{cases}
    \begin{aligned}    
    & 1/2, && \text{when} \quad |\cA|=u>l\\
    & 1,   && \text{otherwise}. \\
    \end{aligned}
     \end{cases}
\end{split}
\end{equation}\\
\end{remark}

Theorems  \ref{thm: AO} and \ref{thm: AA for synchronous stopping} and  Corollary \ref{coro: AE''} imply that when the thresholds of  synchronous and the proposed test  are selected so that \eqref{thres_proposed} holds, then the following hold:
\begin{itemize}
\item When $|\log\alpha|$ diverges at a slower  (resp. faster) rate than  $|\log\beta|$, the expected decision time of the synchronous test  is  asymptotically   much larger than that of the proposed test in a signal (resp. noise) stream and asymptotically larger  in noise (resp. signal) stream $j$ (resp. $i$) unless $J_j$ (resp. $I_i$) is equal to the minimum $\cJ_\cA$ (resp. $\cI_\cA$), defined in \eqref{KL_min}. 
\item  When $|\log\alpha|$ and $|\log\beta|$ are of the same order of magnitude and the multiple testing problem is homogeneous,  then  the expected decision time of the synchronous test is asymptotically the same as that of the proposed test  in every signal (resp. noise) stream when $|\cA|=u$ (resp. $|\cA|=l$).
When the multiple testing problem is also symmetric, the expected decision time of the synchronous test is asymptotically the same as that of the proposed test in every signal (resp. noise) stream also when 
$l<|\cA|<u$, and  twice as large when 
$l=|\cA|<u$ (resp. $l<|\cA|=u$).
\end{itemize}

We reiterate that the above comparisons are valid to a first-order asymptotic approximation as $\alpha, \beta \to 0$.  The 
actual, i.e., non-asymptotic, relative efficiencies of the synchronous test over the proposed  test are computed  in various simulation studies in Section \ref{section: simulation studies}, where they are compared with their limiting values in  \eqref{AE''i or j, cA, with r}.

\section{Simulation studies} \label{section: simulation studies}
In  this section we compare  the proposed test, $\hat\chi$,   the parallel SPRT, $\Tilde{\chi}$,   and the  synchronous test, $\check\chi$,  in various simulation studies  where  each  $X_k$ is a sequence of i.i.d. Gaussian random variables with variance 1 and mean $0$ (resp. $\mu_k$) under $\Pro_k^0$ (resp. $\Pro_k^1$). Thus, for each $k \in [K]$,  
$$\lambda_k(n)=\mu_k \sum_{i=1}^n \big(X_k(i)-\mu_k/2\big), \quad n \in \bN, $$
and conditions  \eqref{SLLN, assumption for ALB}-\eqref{Complete convergence, assumption for AUB} are satisfied  with $I_k=J_k=  \mu_k^2/2$. Therefore, the multiple testing problem is symmetric, i.e., \eqref{symmetric} holds, but not necessarily homogeneous. Indeed, we consider two cases for the  means under the alternative hypotheses, 
\begin{itemize}
\item a homogeneous one, where
\begin{align} \label{mu_homogeneous}
\mu_k=\mu, \quad k \in [K], 
\end{align}
\item and  a non-homogeneous one, where there is some $\phi \in (0,1)$ so that 
\begin{equation} \label{mu_nonhomogeneous}
\begin{aligned}
    \mu_k &= \phi \, \mu, \quad 1 \leq k\leq K/2, \\
    \mu_k &=\mu, \quad K/2< k \leq K.
\end{aligned}
\end{equation}
\end{itemize}

Moreover, we consider two cases  regarding prior information on the number of signals: 
\begin{itemize}
\item  the number of signals is a priori known, i.e., $l=u$,
\item  the number of signals is at least $l$ and at most $u$ where $l<u$ and  $l+u=K$.
\end{itemize}

\subsection{Design}
For each of the three tests under consideration,  we let  only  one free parameter for its design. Specifically,   for the parallel SPRT we set $a=b$, whereas for the proposed and the synchronous test,  we set $ c=d$ when  $l=u$, and 
\begin{equation} \label{thresholds when l<u}
    a = b, \quad c = a+\log(K-l), \quad d= b+\log u, \quad \text{when} \quad l<u.
\end{equation}
 For a wide range of values of the free parameter, 
we compute,  for each $\cA\in\Pi_{l,u}$, the expected decision times in each stream, i.e.,
\begin{align} \label{quantities of E[T]}
       \Exp_\cA[\Ts_k], \quad  \Exp_\cA[\Tilde{T}_k], \quad \Exp_\cA[\Tc], \quad \forall \; k\in[K], 
\end{align}
and the corresponding familywise error rates, i.e.,
\begin{equation} \label{quantities of FWER}
     \text{FWE}_\cA^i(\hat\chi),\quad 
     \text{FWE}_\cA^i(\Tilde\chi),  \quad \text{FWE}_\cA^i(\check\chi), \quad  i\in\{0,1\}.
\end{equation}
From the latter  we obtain the maximum 
familywise error rates
\begin{align*} 
\begin{split}
    \ha &\Def\max_{\cA\in\Pi_{l,u}} \text{FWE}_\cA^1(\hat\chi) \qquad \text{and} \qquad 
    \hb  \Def\max_{\cA\in\Pi_{l,u}} \text{FWE}_\cA^2(\hat\chi),\\
    \ta &\Def\max_{\cA\in\Pi_{l,u}} \text{FWE}_\cA^1(\Tilde\chi)  \qquad \text{and} \qquad
    \tb \Def\max_{\cA\in\Pi_{l,u}} \text{FWE}_\cA^2(\Tilde\chi) ,\\
    \ca & \Def\max_{\cA\in\Pi_{l,u}} \text{FWE}_\cA^1(\check\chi)  \qquad \text{and} \qquad
     \cb \Def\max_{\cA\in\Pi_{l,u}} \text{FWE}_\cA^2(\check\chi).
     \end{split}
\end{align*}

\subsection{Computation}
\label{remark: importance sampling}
For the computation of  the expected decision times in \eqref{quantities of E[T]}  we used plain Monte Carlo, but for the estimation of the familywise error rates in  \eqref{quantities of FWER} we used importance sampling.  To be  specific,  for each $A \in \cP_{l,u}$, the familywise error rates of the parallel SPRT are given by 
\begin{equation*}
\begin{aligned}
    \text{FWE}_\cA^1(\tilde\chi) & = 1-\prod_{j\notin\cA} \big(1-\Pro_j^0(D_j=1)\big) \\
    \text{FWE}_\cA^2(\tilde\chi) & = 1-\prod_{i\in\cA} \big(1-\Pro_i^1(D_i=0)\big),
\end{aligned}
\end{equation*}
and it suffices to estimate the error probabilities of the SPRT in each individual testing problem. This can be done using the importance sampling approach described  in \cite{Siegmumd_IS}.

A generalization of this importance sampling approach is proposed in  \cite[Section 4]{Song_prior} for the estimation of the familywise error rates of the synchronous test, and we follow a similar approach for the estimation of the familywise error rates of the proposed test.   To be specific, let us consider the estimation of the type-I familywise error rate under some $\cA\subseteq [K]$. Then, 
 we select the importance sampling distribution $\Pro_\cA^*$ as  in \cite[Section 4]{Song_prior}
and we evaluate, via plain Monte Carlo, the right-hand side of the following identity:
\begin{align*}
  \text{FWE}_\cA^1(\hat{\chi})&\equiv  \Pro_\cA(\hat{\bfD}\backslash\cA\neq \emptyset)  =
  \Exp_\cA^*\left[ \left( \frac{d\Pro_\cA^*}{d\Pro_\cA}(\cF(\tau_A^1)) \right)^{-1} ; \; \hat{\bfD}\backslash\cA\neq \emptyset  \right], 
  \end{align*} 
where  $\tau_A^1$ is  the first time the proposed test commits a type-I error when the true subset of signals is $\cA$, i.e., 
$$ \tau_A^1 \equiv \inf\left\{ n\in \bN: \exists\,j\notin\cA, \; \Ts_j=n \text{ and } \Ds_j=1 
\right\}.$$
We note that $\tau_A^1$ can be replaced in the above identity by $\max_{k\in[K]}\hat T_k$, but in that case the variance of the resulting estimator  increases considerably. 

Overall, for the computation of each quantity in 
\eqref{quantities of E[T]}-\eqref{quantities of FWER}   we used   $10^4$ Monte Carlo replications and obtained  relative errors below  $0.5\%$   in all cases.

\subsection{Comparisons}
For  each setup under consideration and each  $\cA\in\Pi_{l,u}$, we plot the expected decision time of each of the three tests in every signal (resp. noise) stream against the absolute value of the base-10 logarithm of the maximum 
 type-I (resp. -II) familywise error rate. That is,  for each $i\in\cA$ we plot 
\begin{align} \label{E[T] against |lg alpha|} 
\begin{split}
    \Exp_\cA[\Ts_i] \quad \text{against} \quad & |\lg\ha|, \\
      \Exp_\cA[\Tilde{T}_i] \quad \text{against} \quad & |\lg\ta|, \\
         \Exp_\cA[\Tc] \quad \text{against} \quad & |\lg\ca|,
         \end{split}
         \end{align} 
and for each $j\not\in\cA$ we plot
\begin{equation} \label{E[T] against |lg beta|}
\begin{aligned}
    \Exp_\cA[\Ts_j] \quad \text{against} \quad & |\lg\hb|, \\
    \Exp_\cA[\Tilde{T}_j] \quad \text{against} \quad & |\lg\tb|, \\
    \Exp_\cA[\Tc] \quad \text{against} \quad & |\lg\cb|.
\end{aligned}
\end{equation}

Moreover, we  plot the  ratios of the expected decision times  of the parallel SPRT and of the synchronous test over  those of   the proposed test in every signal (resp. noise) stream, also against the common absolute value of the base-10 logarithm of the maximum type-I (resp. -II) familywise error rate. That is,  for each $i\in\cA$ we plot
\begin{align} \label{ARE against |lg alpha|} 
\begin{split}
    \frac{\Exp_\cA[\Ts_i]}{\Exp_\cA[\Tilde{T}_i]} \;\quad \text{against} \quad & |\lg\ha|=|\lg\ta|, \\
  \frac{\Exp_\cA[\Ts_i]}{\Exp_\cA[\Tc]} \;\quad \text{against} \quad & |\lg\ha|=|\lg\ca|,
  \end{split}
\end{align}
and for each $j\notin\cA$ we plot
\begin{equation} \label{ARE against |lg beta|}
\begin{aligned}
    \frac{\Exp_\cA[\Ts_j]}{\Exp_\cA[\Tilde{T}_j]}  \quad \text{against} \quad & |\lg\hb|=|\lg\tb|, \\
    \frac{\Exp_\cA[\Ts_j]}{\Exp_\cA[\Tc]}  \quad \text{against} \quad & |\lg\hb|=|\lg\cb|.
\end{aligned}
\end{equation}
Meanwhile, we compare the above  relative efficiencies with their limiting values. For the parallel SPRT, they are given by  \eqref{AE' relative to decentralized procedures}.  For the synchronous test,  they are given by   \eqref{AE''i or j, cA, with r} with $r=1$, since in all cases we consider we have 
 $$ \ca=\cb.$$ 
Indeed, when $l=u=m$, the decision of the synchronous test,  $\check\bfD$,  is always of size $m$ and it commits a type-I error  if and only if it commits a type-II error. 
When $l<u$, $ \ca=\cb$ holds  because the thresholds are selected according to   \eqref{thresholds when l<u},
$l+u=K$ and 
\begin{equation} \label{same dist}
\begin{aligned}
    \text{the distribution of $\lambda_k$ under} &\text{ $\Pro_k^1$ is the same as} \\
    \text{that of $-\lambda_k$ under} & \text{ $\Pro_k^0$, \; for every }   k\in[K]. 
\end{aligned}
\end{equation}

\subsection{Results in the homogeneous setup}
We  first consider  the  homogeneous setup, \eqref{mu_homogeneous}, with $K=10$, $\mu=0.5$.
In this case,  the expected decision time is the same in every signal (resp. noise) stream, i.e., \eqref{E[T] against |lg alpha|}  and \eqref{ARE against |lg alpha|} (resp. \eqref{E[T] against |lg beta|} and  \eqref{ARE against |lg beta|}) do not depend on  $i\in\cA$ (resp. $j\notin\cA$).   Moreover, \eqref{E[T] against |lg alpha|} and  \eqref{ARE against |lg alpha|}   coincide with  \eqref{E[T] against |lg beta|} and  \eqref{ARE against |lg beta|} when $l,u,A$ are replaced by  $K-u,K-l,A^c$. When $l=u$,  this is the case because    $c=d$ and  \eqref{same dist} holds.  When $l<u$,  this is the case because the thresholds are selected according to \eqref{thresholds when l<u}, $l+u=K$, and  \eqref{same dist} holds. 

In view of these observations, we  only plot   \eqref{E[T] against |lg alpha|} and \eqref{ARE against |lg alpha|} for an  arbitrary  $A$  and an arbitrary $i\in\cA$ when $l=u$
in Figure \ref{Figure, gap, ESS against actual error rates},  
and for  every  $|\cA|\in\{l,\ldots,u\}$  and an arbitrary $i\in\cA$ when $l<u$ 
in Figure \ref{Figure, gap-inter, ESS against actual error rates}.
In each figure, the first row depicts 
the expected decision time in an arbitrary signal stream,  \eqref{E[T] against |lg alpha|}, and the second row
the relative efficiency,   \eqref{ARE against |lg alpha|},
both  against  the absolute value of the base-10 logarithm of the  maximum type-I familywise error rate. The first column refers to the proposed test, the second to the parallel SPRT, and the third to the  synchronous test.

\subsubsection{Known number of signals}
\begin{figure} [ht]
    \centering
    \hfill
    \begin{subfigure}[b]{0.325\textwidth} \label{homo, gap, 1}
    \centering
    \includegraphics[width=\textwidth]{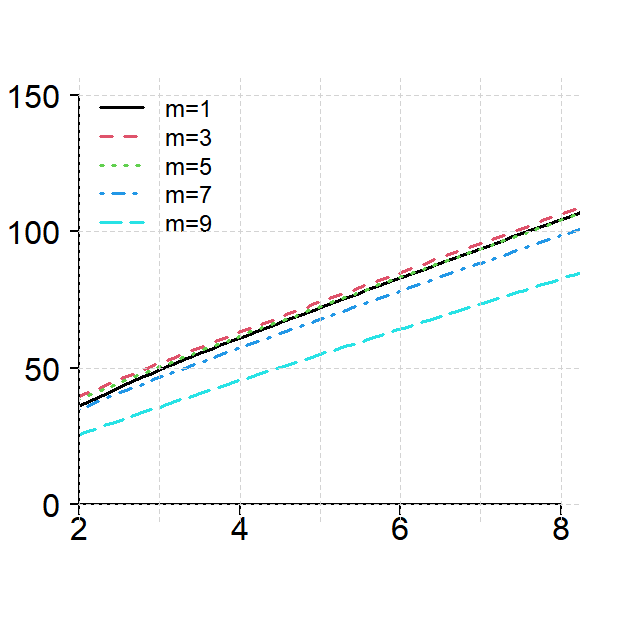}
    \caption{$\hat\chi$}
    \end{subfigure}
    \hfill
    \begin{subfigure}[b]{0.325\textwidth} \label{homo, gap, 2}
    \centering
    \includegraphics[width=\textwidth]{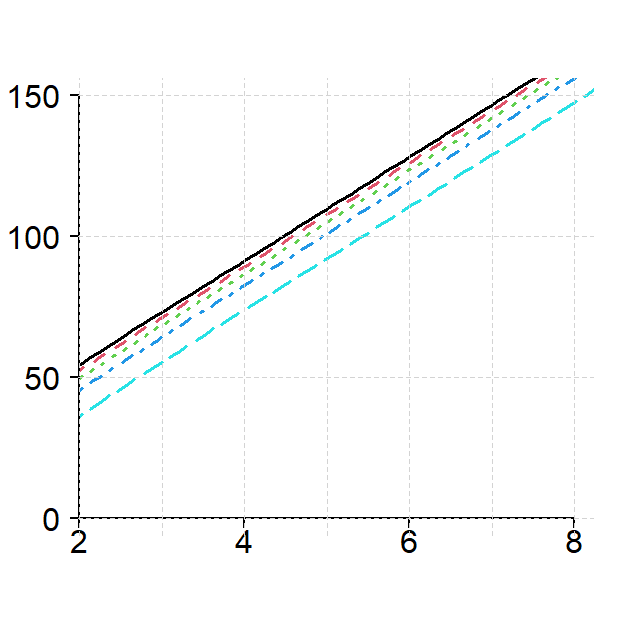}
    \caption{$\tilde\chi$}
    \end{subfigure}
    \hfill
    \begin{subfigure}[b]{0.325\textwidth} \label{homo, gap, 3}
    \centering
    \includegraphics[width=\textwidth]{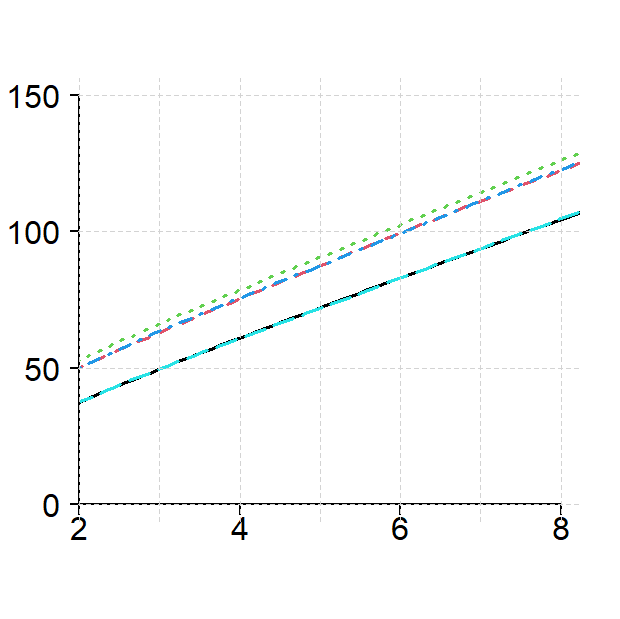}
    \caption{$\check\chi$}
    \end{subfigure} \\
    \hfill
    \begin{subfigure}[b]{0.325\textwidth} \label{homo, gap, 4}
    \centering
    \includegraphics[width=\textwidth]{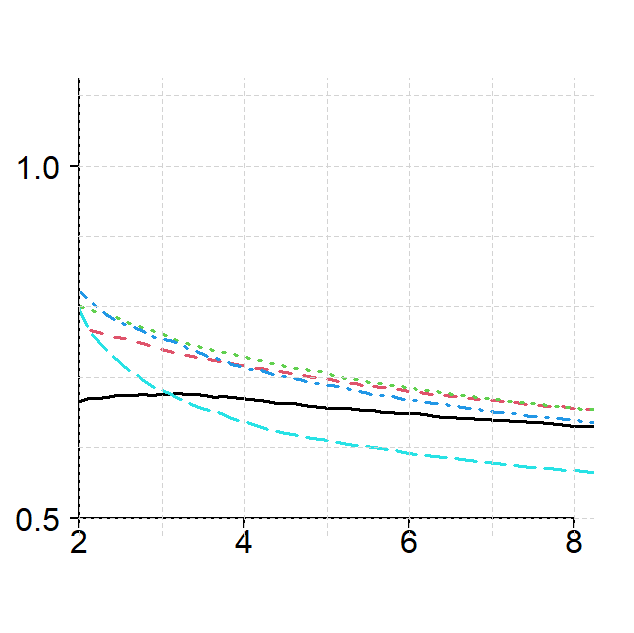}
    \caption{$\hat\chi/\tilde\chi$}
    \end{subfigure}
    \begin{subfigure}[b]{0.325\textwidth} \label{homo, gap, 5}
    \centering
    \includegraphics[width=\textwidth]{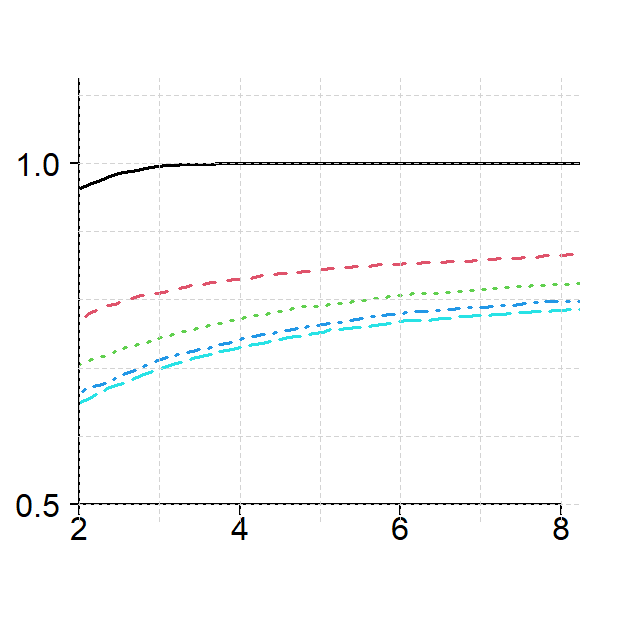}
    \caption{$\hat\chi/\check{\chi}$}
    \end{subfigure}
    \caption{Homogeneous setup with known number of signals, $l=u\equiv m\in\{1,3,5,7,9\}$.     The first row depicts the expected decision times in \eqref{E[T] against |lg alpha|} and the second   the relative efficiencies in \eqref{ARE against |lg alpha|}. The first column refers to the proposed test, the second to the parallel SPRT, and the third to the  synchronous test.}
    \label{Figure, gap, ESS against actual error rates}
\end{figure}

We consider  first  the homogeneous setup when the number of signals is a priori known to be equal to some  $m \in \{1, \ldots, K-1\}$. We start with the first row in Figure \ref{Figure, gap, ESS against actual error rates},  where the expected decision times  of the three tests are presented.

\begin{itemize}
\item From Figure \ref{Figure, gap, ESS against actual error rates}.(a) we can see that, for the proposed test, the expected decision time in a signal  stream  is the smallest  when $m=9$, and it is overall decreasing in $m$, as  expected in view of Remark \ref{remark: intuition from higher-order terms}.

\item From Figure \ref{Figure, gap, ESS against actual error rates}.(b) we can see that, for  the parallel SPRT, the curves are horizontal translations of each other, moving to the right as $m$ increases. This is  because  we set $a=b$, in which case the expected decision time is the same in every stream and for every $m$, whereas the type-I familywise error rate decreases as $m$ increases. 

\item From Figure \ref{Figure, gap, ESS against actual error rates}.(c) we can see that, for  the synchronous test, the  expected decision time 
is the largest when $m=K/2$, as expected in view of Remark \ref{remark: intuition from higher-order terms}. We also note that  the expected decision time and the maximum familywise error rate coincide when $l=u$ is equal to $m$ and when it is equal to $K-m$, as expected by the symmetry of the setup.  
\end{itemize}

We continue with the second row in Figure \ref{Figure, gap, ESS against actual error rates},  where the relative efficiencies of the  parallel SPRT and of the synchronous test  against the proposed test are compared with their limits, which are  given by 
\eqref{AE' in homo and symm setup}  and \eqref{AE'' in homo and symm setup}, respectively. 
Specifically, in a homogeneous and symmetric setup with known number of signals, as the current one, the latter  are  in all streams equal to $1/2$ for the parallel SPRT and to  $1$ for the synchronous test.

\begin{itemize}
    \item From Figure \ref{Figure, gap, ESS against actual error rates}.(d) we can see that, for the parallel SPRT, the convergence rate to $1/2$ is about the same for all values of $m$.
    Specifically, the relative efficiencies of the parallel SPRT compared with the proposed test are at most $82\%$ when $\alpha=0.01$ and at most $65\%$ when $\alpha=10^{-8}$.
    
    \item  From Figure \ref{Figure, gap, ESS against actual error rates}.(e) we can see that, for the synchronous test,  the convergence to $1$ is  very fast when $m=1$, but much slower when $m>1$.
    Indeed, when $m=1$, the synchronous test performs similarly to the proposed one,  as expected  by the second-order term in \eqref{AUB with higher-order term, proposed, gap, i} and \eqref{AUB with higher-order term, synchronous, gap}. 
    On the other hand, when $m>1$, the relative efficiencies of the synchronous test compared with the proposed test are at most $78\%$ when $\alpha=0.01$ and at most $88\%$ when $\alpha=10^{-8}$. Therefore, when $m>1$, the proposed test performs significantly better than the synchronous test  in this setup, even though the  \textit{asymptotic} relative efficiency is equal to 1 in every stream. 
\end{itemize}

\subsubsection{Lower and upper bounds on the number of signals}
\begin{figure} [ht]
    \centering
    \hfill
    \begin{subfigure}[b]{0.325\textwidth} \label{homo, gap-inter, 1}
    \centering
    \includegraphics[width=\textwidth]{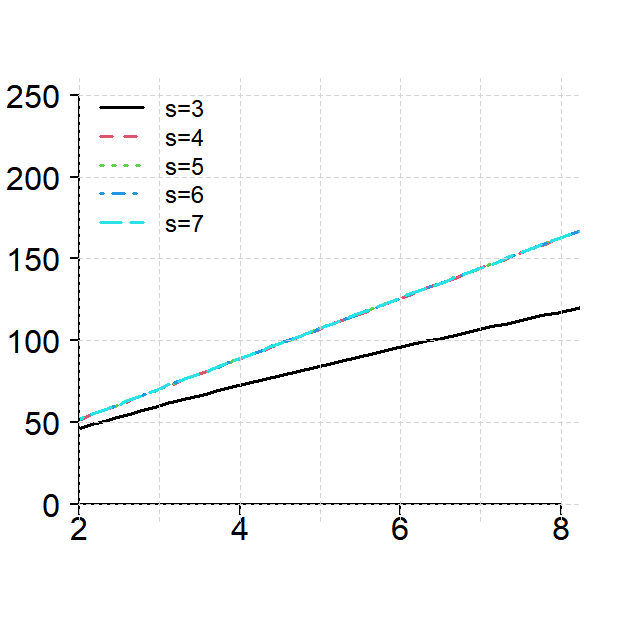}
    \caption{$\hat\chi$}
    \end{subfigure}
    \hfill
    \begin{subfigure}[b]{0.325\textwidth} \label{homo, gap-inter, 2}
    \centering
    \includegraphics[width=\textwidth]{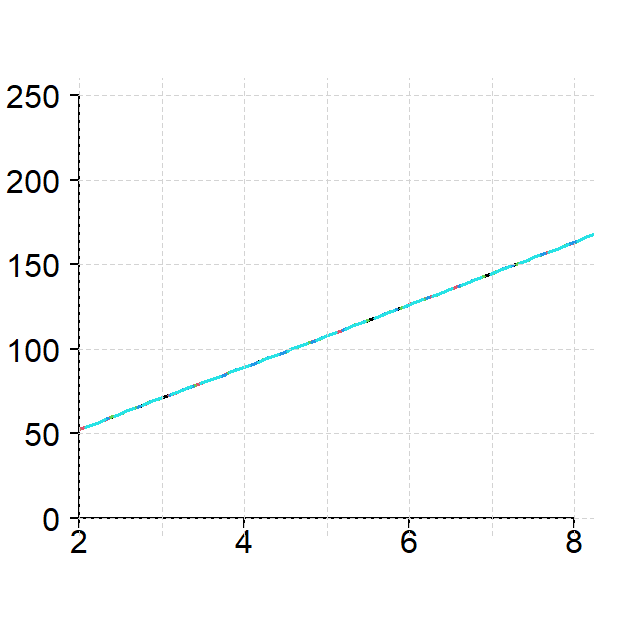}
    \caption{$\tilde\chi$}
    \end{subfigure}
    \hfill
    \begin{subfigure}[b]{0.325\textwidth} \label{homo, gap-inter, 3}
    \centering
    \includegraphics[width=\textwidth]{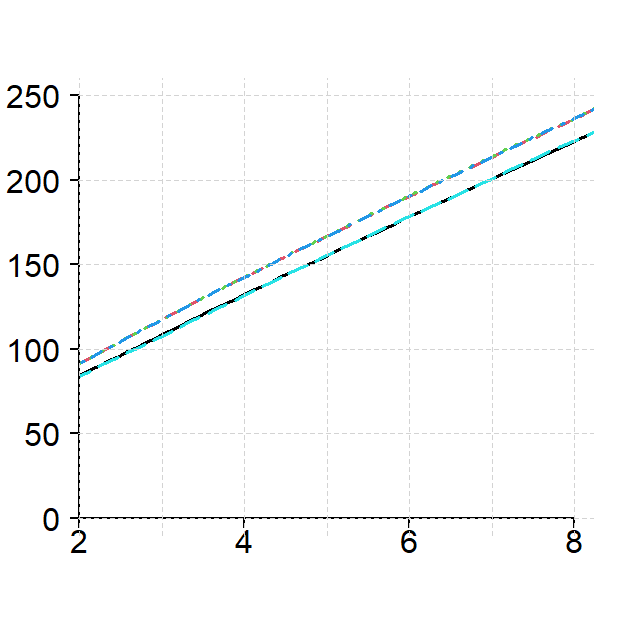}
    \caption{$\check\chi$}
    \end{subfigure} \\
    \hfill
    \begin{subfigure}[b]{0.325\textwidth} \label{homo, gap-inter, 4}
    \centering
    \includegraphics[width=\textwidth]{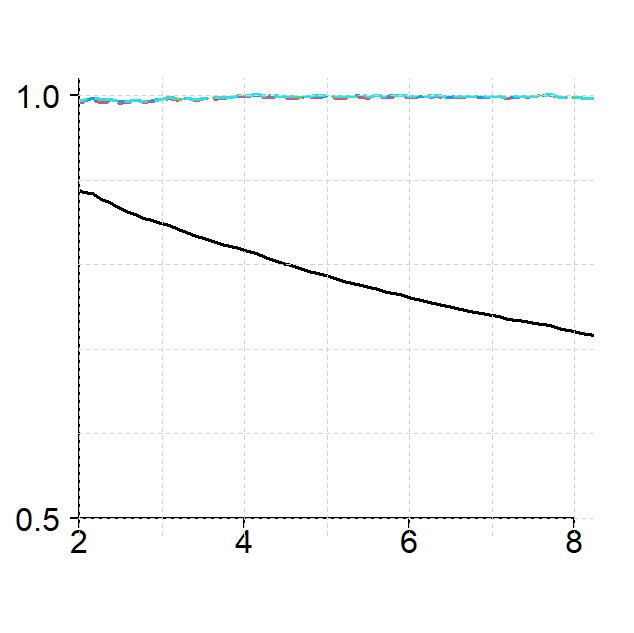}
    \caption{$\hat\chi/\tilde\chi$}
    \end{subfigure}
    \begin{subfigure}[b]{0.325\textwidth} \label{homo, gap-inter, 5}
    \centering
    \includegraphics[width=\textwidth]{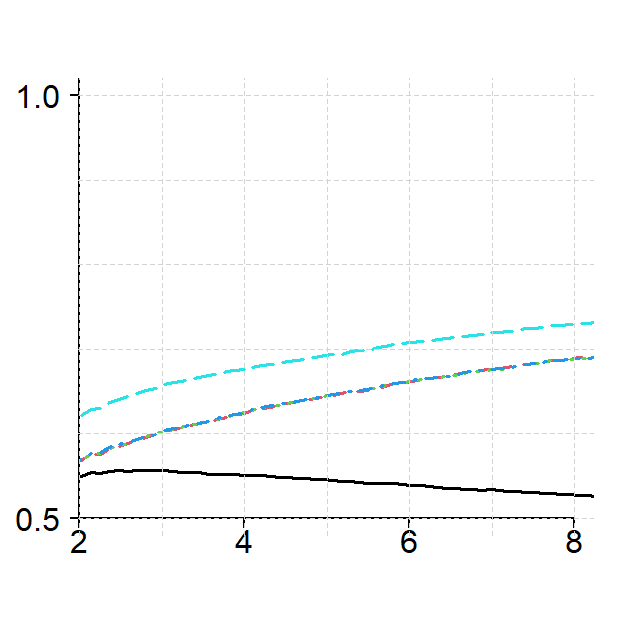}
    \caption{$\hat\chi/\check\chi$}
    \end{subfigure}
    \caption{Homogeneous setup with 
    $l=3$, $u=7$ and $|\cA|\equiv s\in\{3,4,5,6,7\}$.
    The first row depicts the expected decision times in \eqref{E[T] against |lg alpha|} and the second  the relative efficiencies in  \eqref{ARE against |lg alpha|}. The first column refers to the proposed test, the second to the parallel SPRT, and the third to the  synchronous test.} 
    \label{Figure, gap-inter, ESS against actual error rates}
\end{figure}

We next consider the homogeneous setup when the number of signals is a priori known to be at least $l=3$ and at most $u=7$ and the true number of signals, $|A|$, is equal to some $s \in \{l, \ldots, u\}$.  We start with the  first row of Figure \ref{Figure, gap-inter, ESS against actual error rates}, where the expected decision times of the three tests are presented. 

\begin{itemize}
\item   From Figure \ref{Figure, gap-inter, ESS against actual error rates}.(a) we can see that the  expected decision time  of the proposed test  in a signal stream is essentially  the same for all values of $s$ with $s>l$, but  much smaller when $s=l$, as expected  by  \eqref{AO, i}.

\item From Figure \ref{Figure, gap-inter, ESS against actual error rates}.(b) we can see that the curves for the parallel SPRT  coincide for all values of $s$. This is expected since its expected  decision time  in each stream does not depend on the true subset of signals when $a=b$. 

\item From Figure \ref{Figure, gap-inter, ESS against actual error rates}.(c) we can see that the curves for the synchronous test coincide when $s=l$ and  $s=u=K-l$, and that they  also coincide for all values of $s$ between $l$ and $u$. 
\end{itemize}

We continue with the second row in Figure  \ref{Figure, gap-inter, ESS against actual error rates},  where the relative efficiencies of the  parallel SPRT and of the synchronous test  against the proposed test are compared with their limits. From  \eqref{AE' in homo and symm setup} and \eqref{AE'' in homo and symm setup} it follows that the latter are equal,  for both tests,  to $1$ when $s>l$ and to $1/2$ when $s=l$.

From Figure \ref{Figure, gap-inter, ESS against actual error rates}.(d) and  Figure \ref{Figure, gap-inter, ESS against actual error rates}.(e)  we can see that the parallel SPRT performs substantially better than the synchronous test in all cases. Specifically, when $s>l$, its relative efficiencies are essentially equal to 1 even for large error probabilities, whereas those of the synchronous test never surpass $75\%$.  When 
$s=l$, the relative efficiency of the parallel SPRT decreases to $1/2$ in a relatively slow rate, whereas that of the synchronous test is always below $60\%$

\subsection{Results in the  non-homogeneous setup}
We next consider the non-homogeneous setup, \eqref{mu_nonhomogeneous}, with  $K=4$, $\mu=0.5$, $\phi=0.5$. In the case where the number of signals is not a priori known, we set $l=1$ and $u=3$.   Unlike the homogeneous setup of the previous subsection,  the expected decision times in  \eqref{quantities of E[T]} now do not depend only on  the true number of signals, but also on the true subset of signals itself. We  next focus on the following cases for the true subsets of signals:
\begin{itemize}
    \item [(i)] $\cA=\{1\}$, where
    $\cI_\cA=\cJ_\cA=1/32$,
    \item [(ii)] $\cA=\{3\}$, where
    $\cI_\cA=1/8$ and $\cJ_\cA=1/32$,
    \item [(iii)] $\cA=\{1,2\}$, where
     $\cI_\cA=1/32$ and $\cJ_\cA=1/8$,
    \item [(iv)] $\cA=\{1,3\}$, where
    $\cI_\cA=\cJ_\cA=1/32$.
\end{itemize}
As before, we consider two cases regarding the prior information. One where the number of signals is a priori known, i.e., $|A|=l=u$, and one where it is a priori known to  be at least $l=1$ and at most  $u=3$.  The asymptotic relative efficiencies, given by   \eqref{AE' relative to decentralized procedures} and \eqref{AE''i or j, cA, with r} with $r=1$,  are presented in Table \ref{Table, AE, ell=u} and \ref{Table, AE, ell<u}.

\begin{table}[ht]
    \centering
    \begin{tabular}{|c|c|c|c|c|}
        \hline
        \backslashbox{$\cA$}{k} & 1 & 2 & 3 & 4 \\
        \hline
         $\{1\}$ &  1/2 & 1/2 & 4/5 & 4/5 \\
        \hline
          $\{3\}$ & 1/5 & 1/5 & 4/5 & 1/2 \\
        \hline
         $\{1,2\}$ & 1/5 & 1/5 & 4/5 & 4/5 \\
        \hline
         $\{1,3\}$ & 1/2 & 1/2 & 4/5 & 4/5 \\
        \hline
    \end{tabular}
\qquad \qquad 
    \begin{tabular}{|c|c|c|c|c|}
        \hline
        \backslashbox{$\cA$}{k} & 1 & 2 & 3 & 4 \\
        \hline
         $\{1\}$ &  1 & 1 & 2/5 & 2/5 \\
        \hline
          $\{3\}$ & 1 & 1 & 1 & 5/8 \\
        \hline
         $\{1,2\}$ & 1 & 1 & 1 & 1 \\
        \hline
         $\{1,3\}$ & 1 & 1 & 2/5 & 2/5 \\
        \hline
    \end{tabular}
    \caption{$\Ae'_{k,\cA}(l,u)$ and $\Ae''_{k,\cA}(l,u)$ when $l=u=|\cA|$.} 
    \label{Table, AE, ell=u}
\end{table}

\begin{table}[ht]
    \centering
    \begin{tabular}{|c|c|c|c|c|}
        \hline
        \backslashbox{$\cA$}{k} & 1 & 2 & 3 & 4 \\
        \hline
         $\{1\}$  & 1/2 & 1 & 1 & 1 \\
        \hline
         $\{3\}$  & 1 & 1 & 4/5 & 1 \\
        \hline
         $\{1,2\}$  & 1 & 1 & 1 & 1 \\
        \hline
      $\{1,3\}$  & 1 & 1 & 1 & 1 \\
        \hline
    \end{tabular}
\qquad \qquad 
    \begin{tabular}{|c|c|c|c|c|}
        \hline
        \backslashbox{$\cA$}{k} & 1 & 2 & 3 & 4 \\
        \hline
         $\{1\}$  & 1/2 & 1 & 1/4 & 1/4 \\
        \hline
         $\{3\}$  & 1 & 1 & 1/5 & 1/4 \\
        \hline
         $\{1,2\}$  & 1 & 1 & 1/4 & 1/4 \\
        \hline
      $\{1,3\}$  & 1 & 1 & 1/4 & 1/4 \\
        \hline
    \end{tabular}
    \caption{$\Ae'_{k,\cA}(l,u)$ and $\Ae''_{k,\cA}(l,u)$ when $l=1, u=3$.}
    \label{Table, AE, ell<u}
\end{table}

We present the plots for the case of $l=u$ in Table \ref{Figure, non-homo, gap} and for the case of $l=1$, $u=3$ in Table \ref{Figure, non-homo, gap-inter}.
In each of these two tables of plots, each column corresponds to one of the four cases for $\cA$ in (i)-(iv). The  first row depicts the expected decision times  of the parallel SPRT in each of the four streams.  The second row depicts the corresponding plots  for the proposed test and the synchronous test. The third (resp. fourth) row depicts the relative efficiencies of the parallel SPRT (resp. synchronous test) against the proposed test in each of the four streams.  The corresponding limiting values from Tables \ref{Table, AE, ell=u} and \ref{Table, AE, ell<u} are marked on the vertical axis. In all plots, the horizontal axis represents the absolute value of the base-10 logarithm of the  maximum type-I (resp. -II) familywise error rate if that stream is a signal (resp. noise).

From the third and the fourth row in each table of plots we can see that    in all cases
the relative efficiencies converge to their  limiting values very quickly. Thus, in this setup, the latter provide  a very accurate approximation of the  actual relative efficiencies.

\begin{table}[]
    \begin{tabular}{ m{1.745em} m{9em} m{9em} m{9em} m{9em} } 
         & \centering (i)  & \centering (ii) & \centering (iii) & \begin{center} (iv) \end{center} \\
    $\tilde\chi$
    & \includegraphics[width=0.25\textwidth]{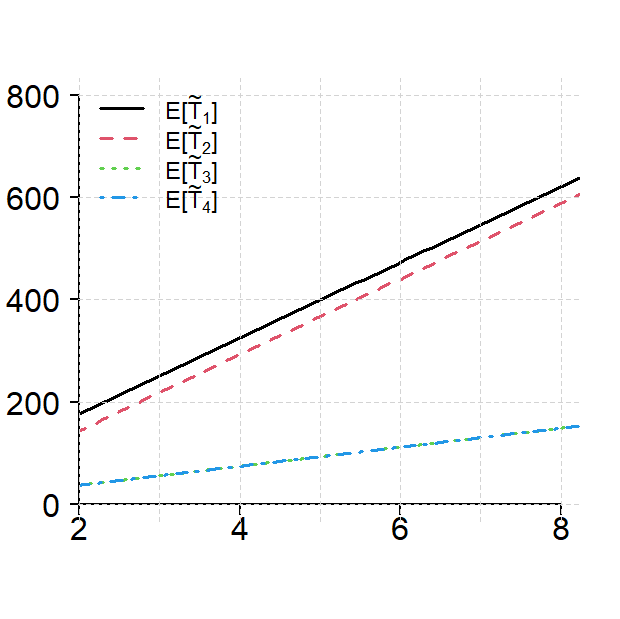} & \includegraphics[width=0.25\textwidth]{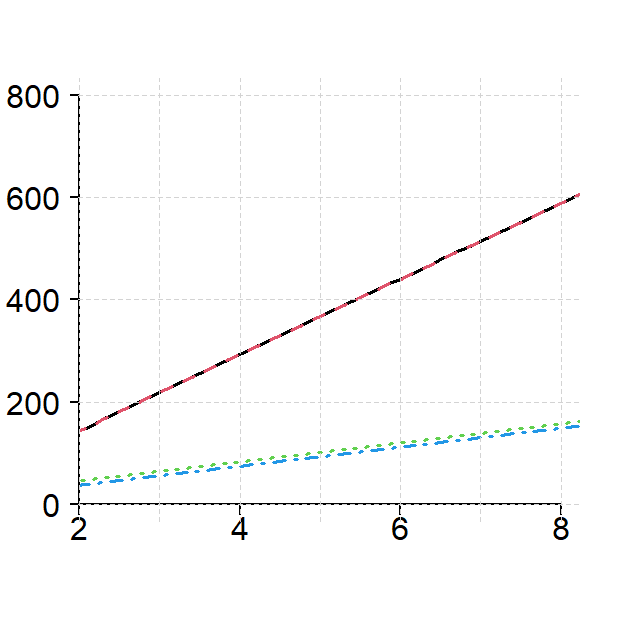} & \includegraphics[width=0.25\textwidth]{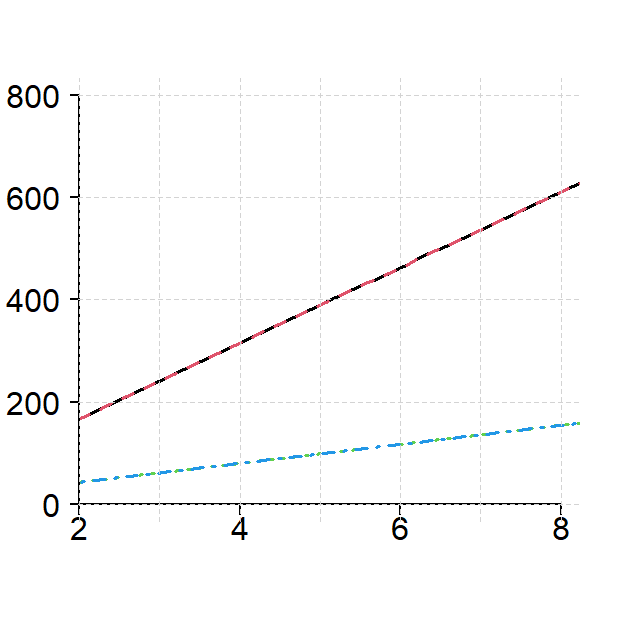} & \includegraphics[width=0.25\textwidth]{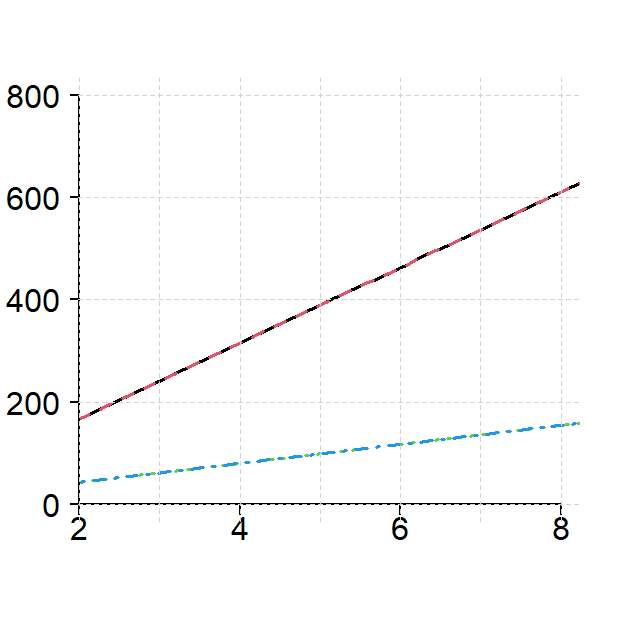} \\
    $\hat\chi$, $\check\chi$
    & \includegraphics[width=0.25\textwidth]{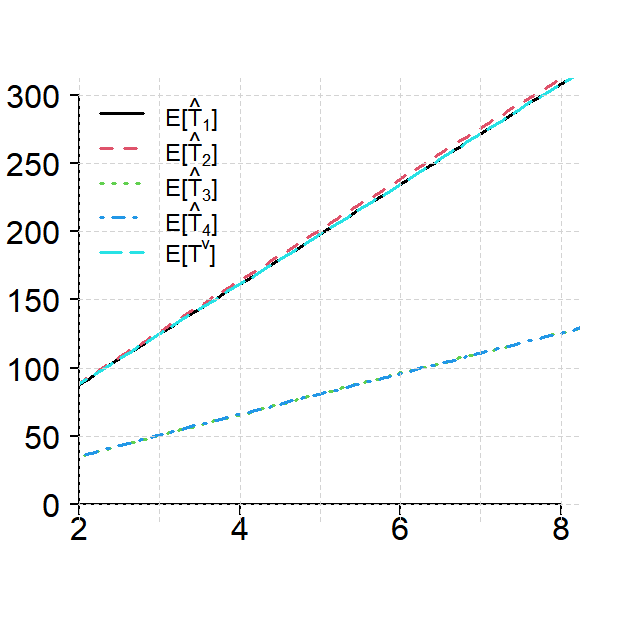} & \includegraphics[width=0.25\textwidth]{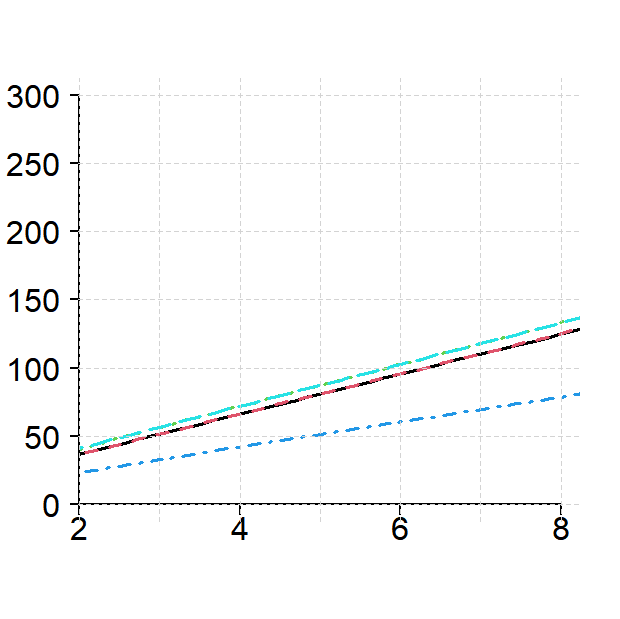} & \includegraphics[width=0.25\textwidth]{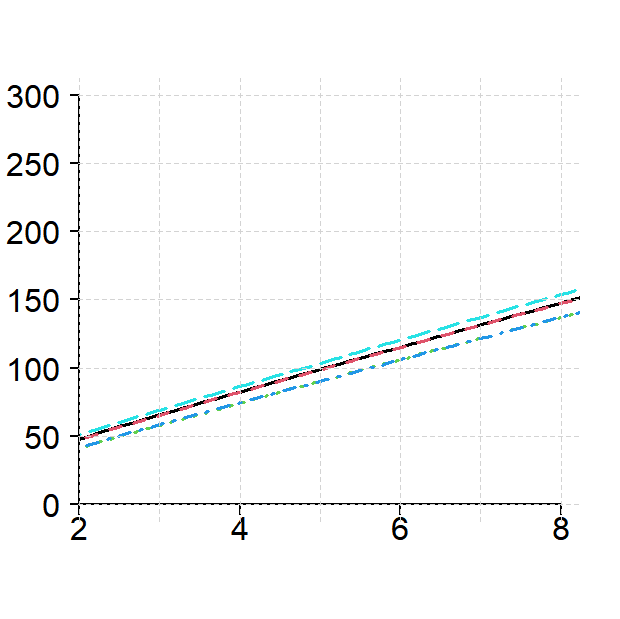} & \includegraphics[width=0.25\textwidth]{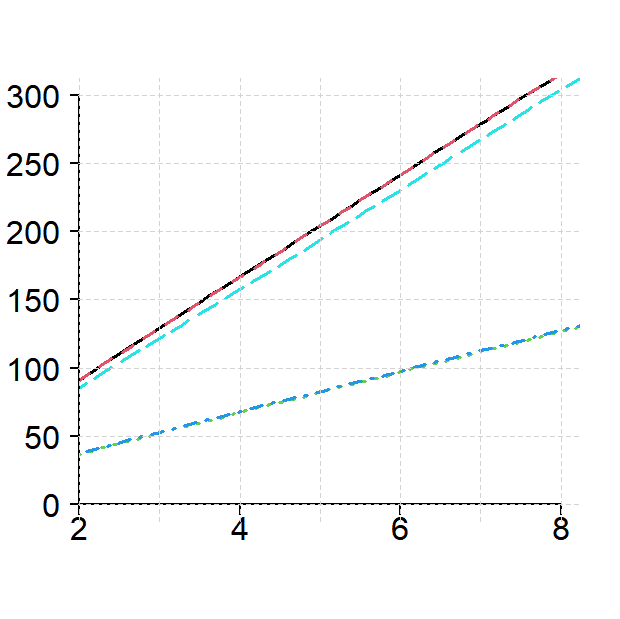}  \\
    $\hat\chi/\tilde\chi$
    & \includegraphics[width=0.25\textwidth]{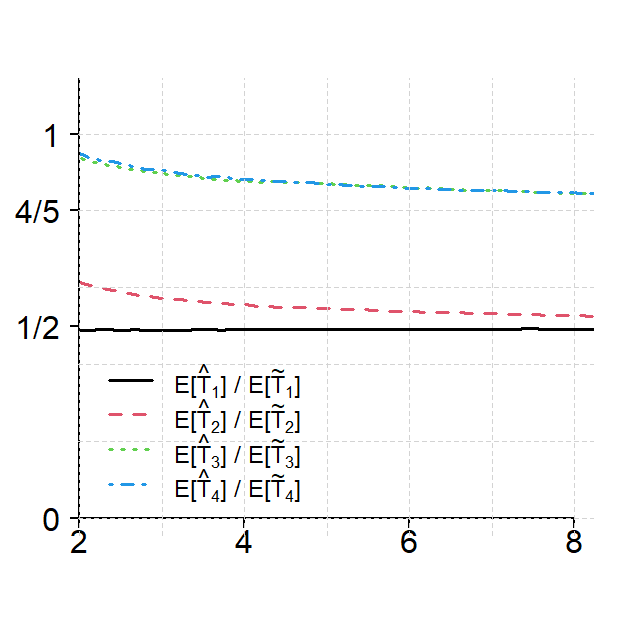} & \includegraphics[width=0.25\textwidth]{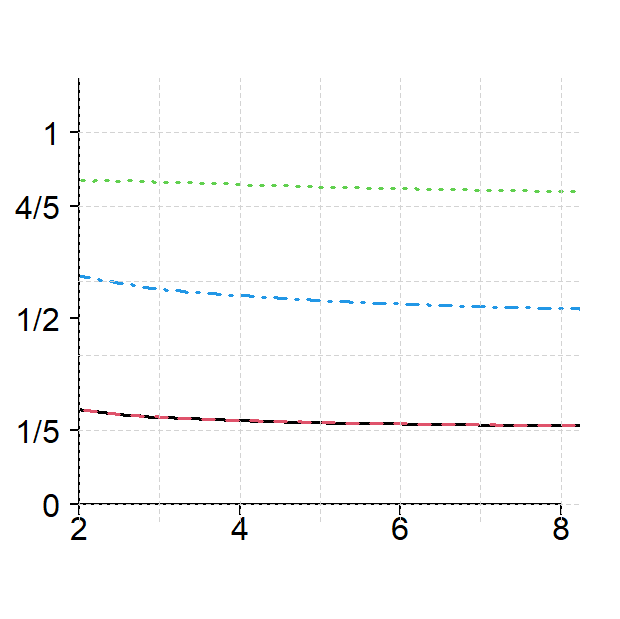} & \includegraphics[width=0.25\textwidth]{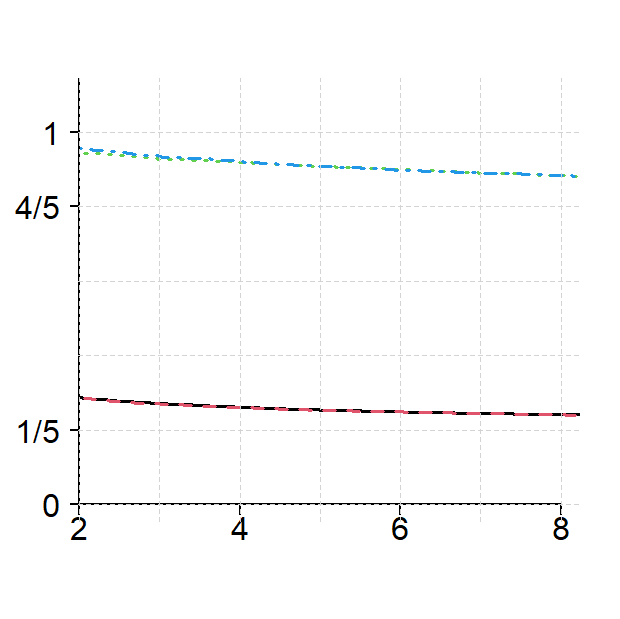} & \includegraphics[width=0.25\textwidth]{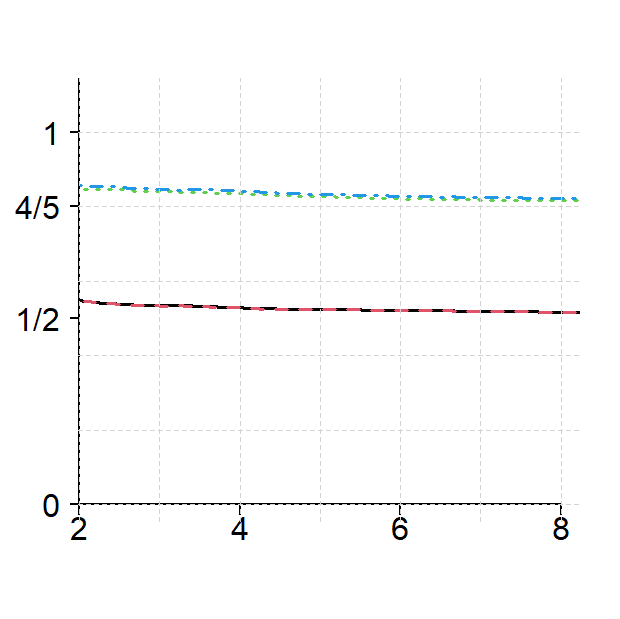} \\
    $\hat\chi/\check\chi$
    & \includegraphics[width=0.25\textwidth]{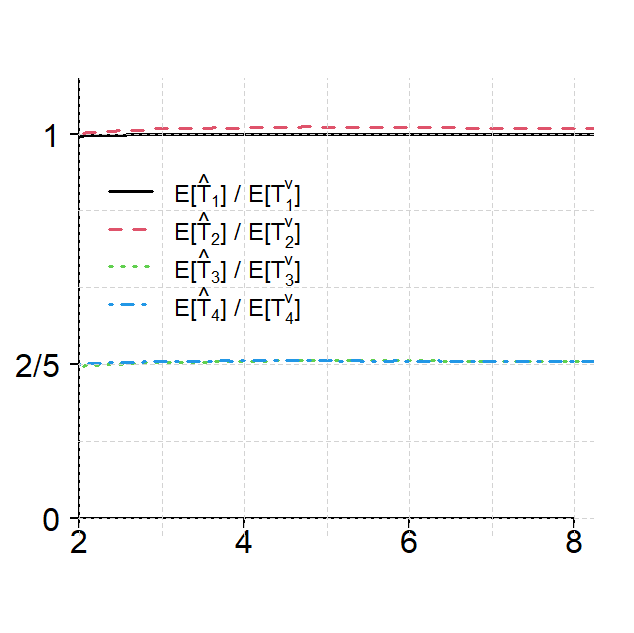} & \includegraphics[width=0.25\textwidth]{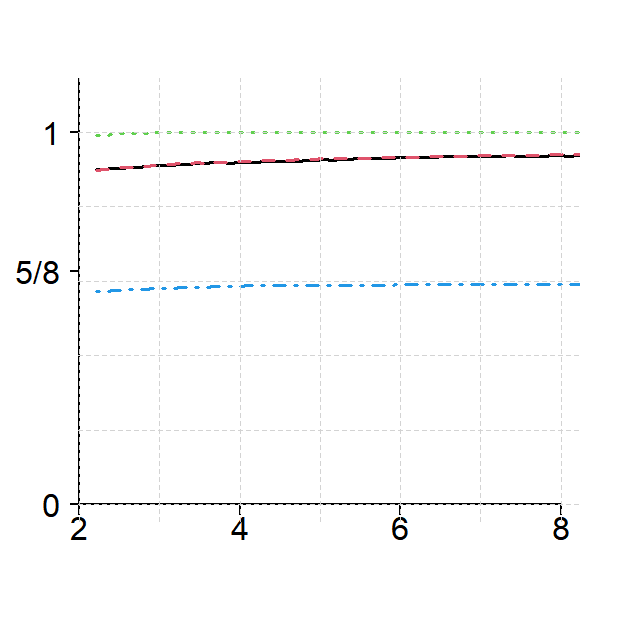} & \includegraphics[width=0.25\textwidth]{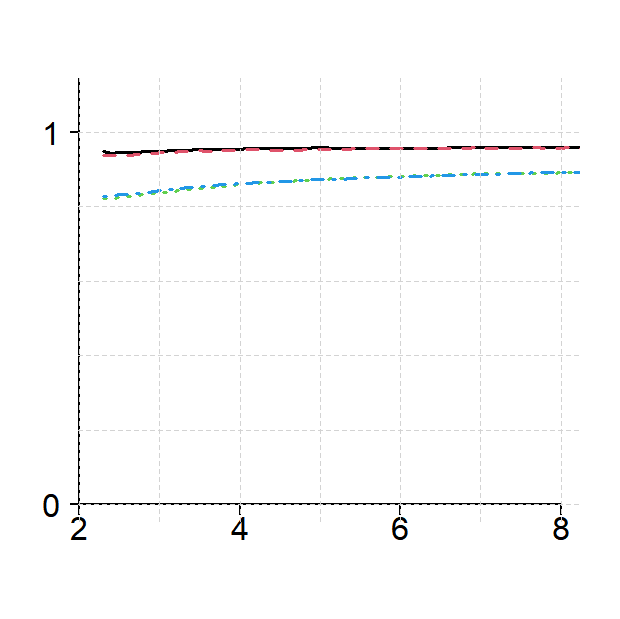} & \includegraphics[width=0.25\textwidth]{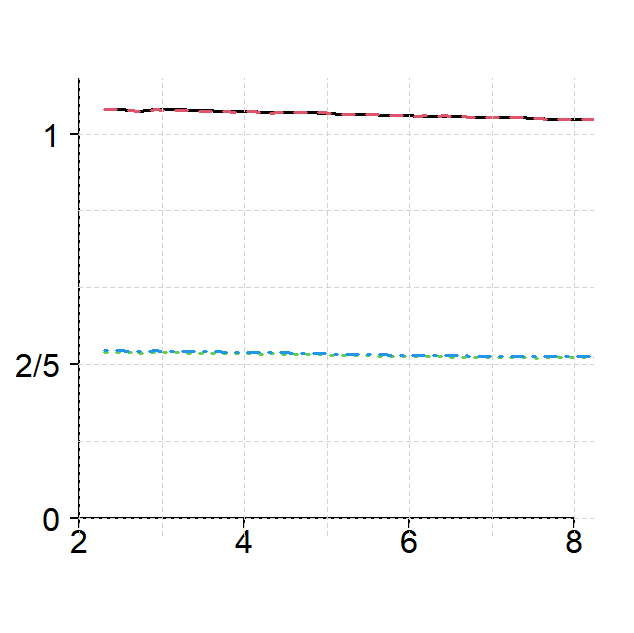}
    \end{tabular}
    \caption{Non-homogeneous setup with known number of signals.    Each column corresponds to a case for the true subset of signals.    The horizontal axis represents the absolute value of the base-10 logarithm  of the maximum familywise type-I or -II error rate, depending on whether the stream is a signal or a noise.    The vertical axis in the  first (resp. second) row represents the expected decision times of the parallel SPRT, $\tilde\chi$ (resp. the proposed test, $\hat\chi$, and the synchronous test, $\check\chi$).  The vertical axis in the third (resp. fourth) row represents the relative efficiencies of $\tilde\chi$ (resp. $\check\chi$) against $\hat\chi$.   The asymptotic relative efficiencies in Table \ref{Table, AE, ell=u} are marked on the vertical axis in the third and fourth rows.}
    \label{Figure, non-homo, gap}
\end{table}

\begin{table}[]
    \begin{tabular}{ m{1.745em} m{9em} m{9em} m{9em} m{9em} } 
         & \centering (i)  & \centering (ii) & \centering (iii) & \begin{center} (iv) \end{center} \\
    $\tilde\chi$
    & \includegraphics[width=0.25\textwidth]{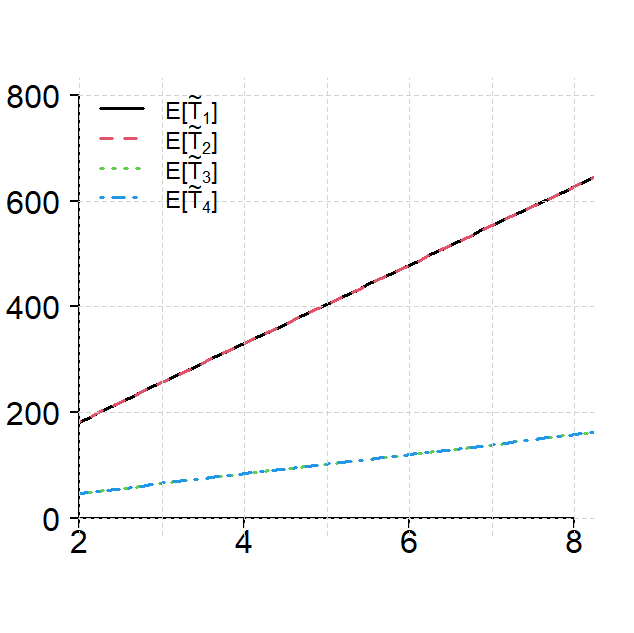} & & & \\
    $\hat\chi$, $\check\chi$
    & \includegraphics[width=0.25\textwidth]{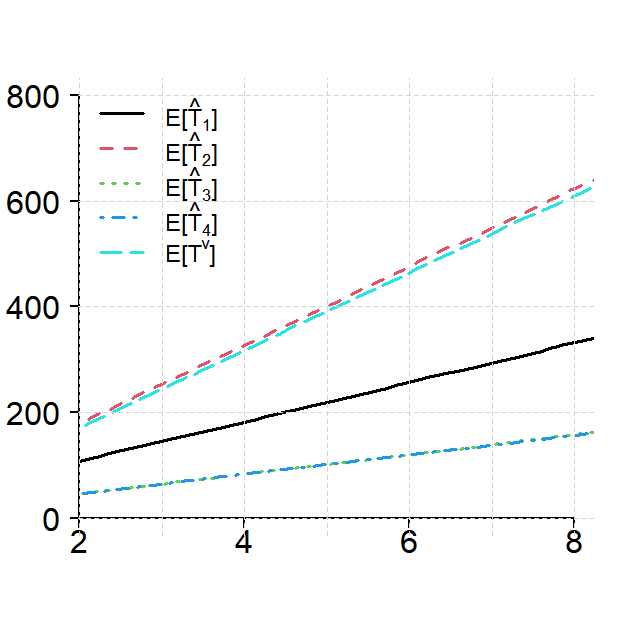} & \includegraphics[width=0.25\textwidth]{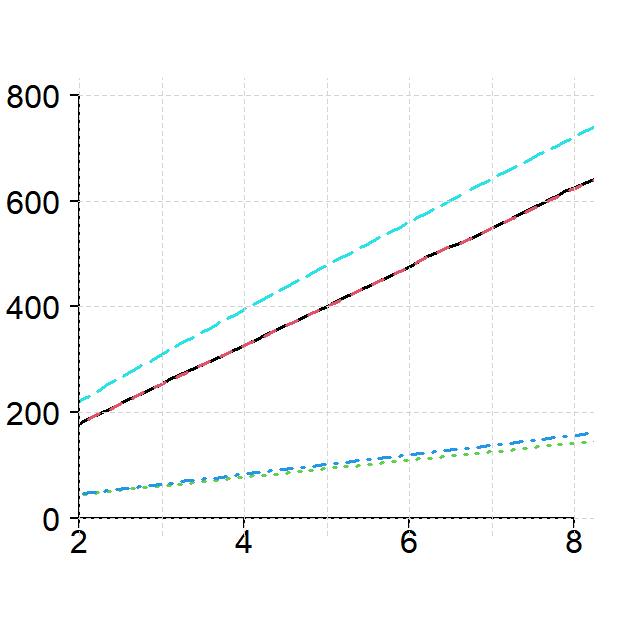} & \includegraphics[width=0.25\textwidth]{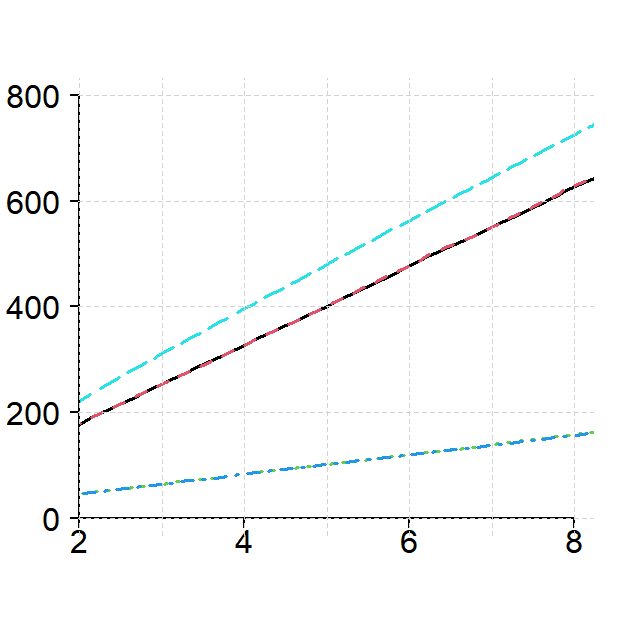} & \includegraphics[width=0.25\textwidth]{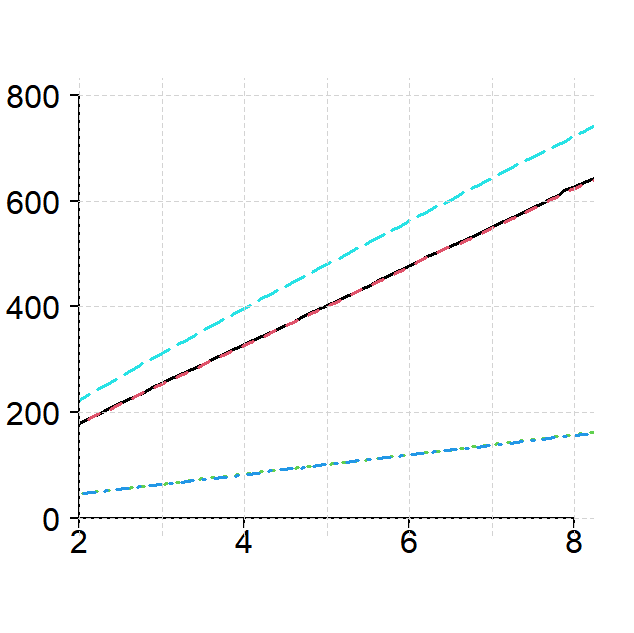}  \\
    $\hat\chi/\tilde\chi$
    & \includegraphics[width=0.25\textwidth]{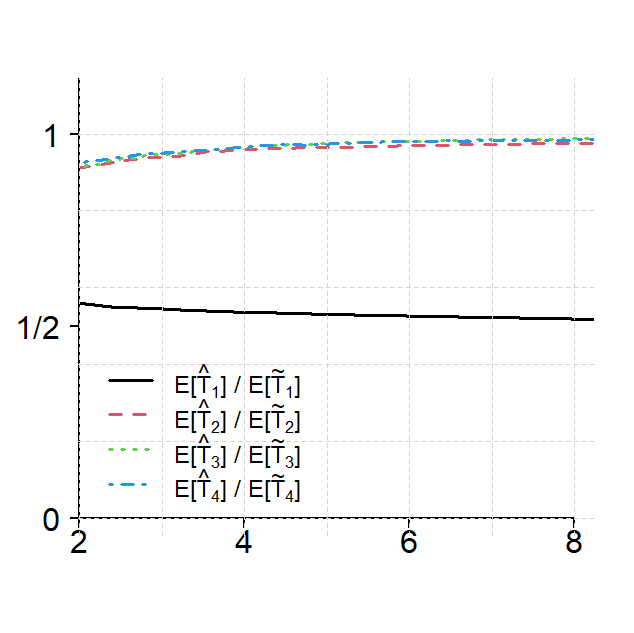} & \includegraphics[width=0.25\textwidth]{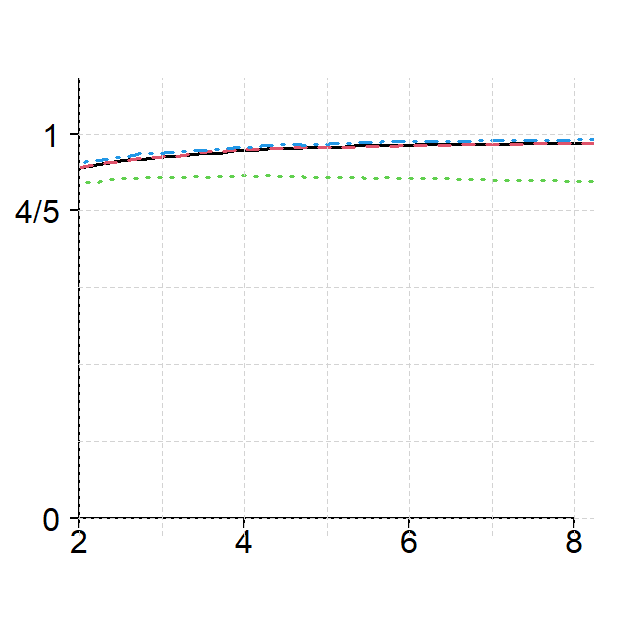} & \includegraphics[width=0.25\textwidth]{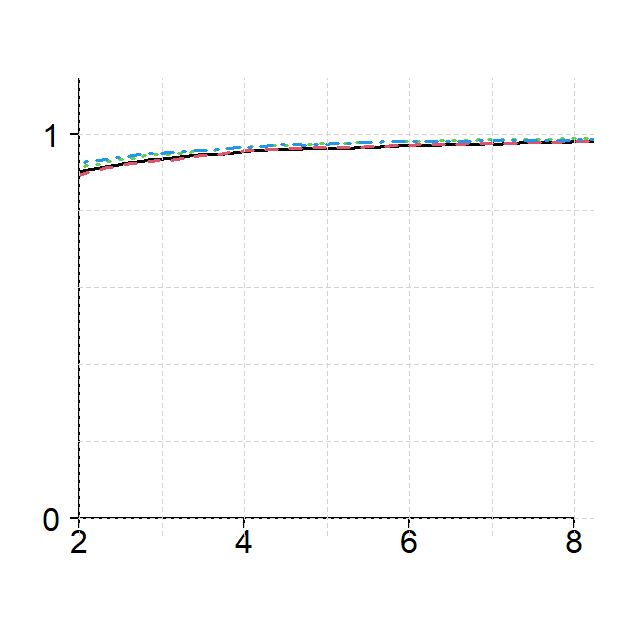} & \includegraphics[width=0.25\textwidth]{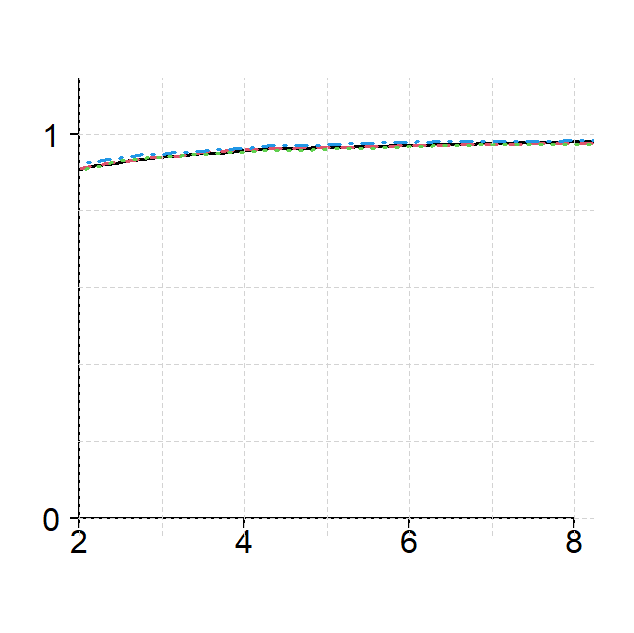} \\
    $\hat\chi/\check\chi$
    & \includegraphics[width=0.25\textwidth]{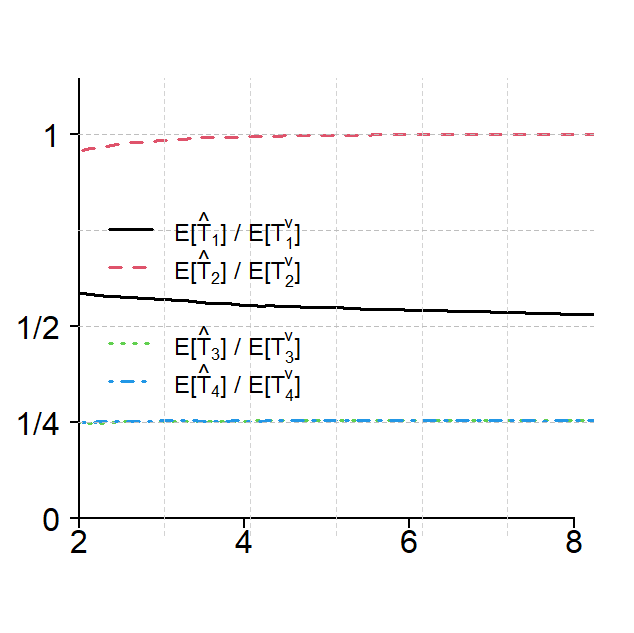} & \includegraphics[width=0.25\textwidth]{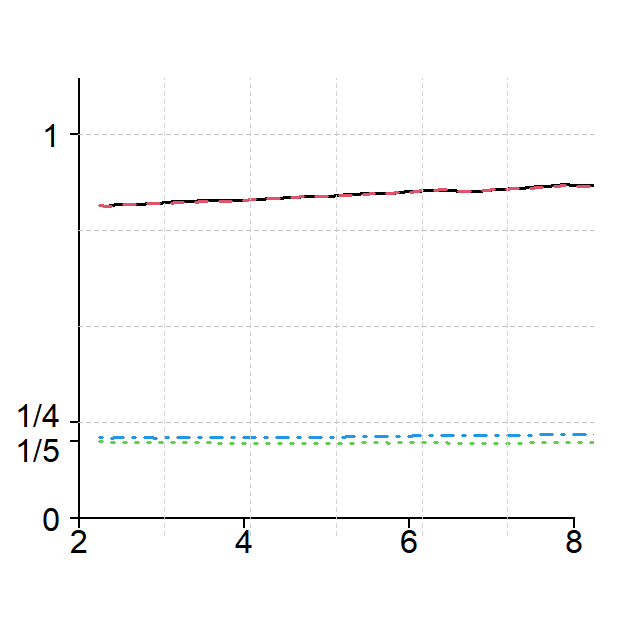} & \includegraphics[width=0.25\textwidth]{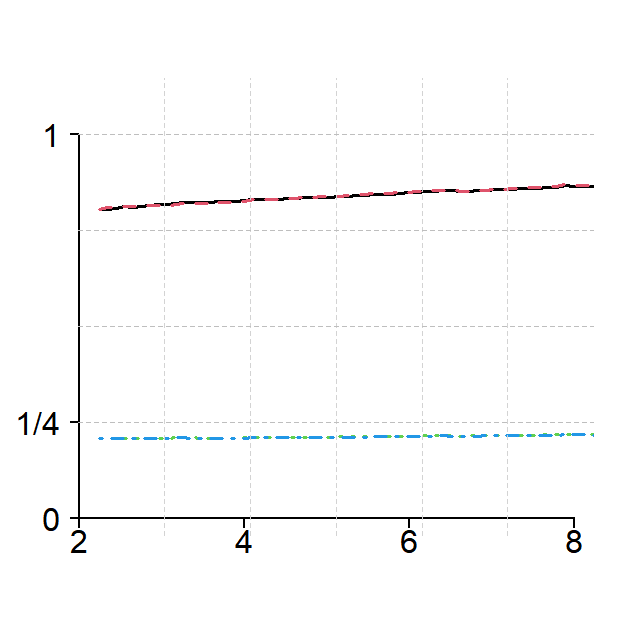} & \includegraphics[width=0.25\textwidth]{vii,ratios.png}
    \end{tabular}
    \caption{Non-homogeneous setup with lower bound $l=1$ and upper bound $u=3$ on the number signals. The meaning of each column, each row and each axis is the same as Table \ref{Figure, non-homo, gap}. There is only one plot in the first row because the corresponding plots  for the parallel SPRT are exactly the same in all cases regarding the true subset of signals.      The limits  for the relative efficiencies, which are marked on the vertical axis in the third and fourth rows, are give by Table \ref{Table, AE, ell<u}.}
    \label{Figure, non-homo, gap-inter}
\end{table}

\section{Generalization to parametric composite hypotheses} \label{section: generalization to composite hypotheses}
In this section we extend the methodology and results of the previous sections to  general, parametric composite hypotheses, following the adaptive likelihood ratio approach in 
\cite{ Robbins_Siegmund, pavlov1991sequential} and   \cite[Section 6]{Song_AoS} (see, also,  \cite[Chapter 5]{Tartakovsky_Book}). 

Thus,  for each $k \in [K]$, we assume that the distribution  of $X_k$ is specified up to an unknown parameter that belongs to a subset $\Theta_k$ of some Euclidean space, and denote it by $\Pro_{k,\theta}$ when the value of this parameter is $\theta$.
 In this context, 
 the  hypotheses in \eqref{Testing problem} are of the form 
  $$\cP_k^i= \{\Pro_{k,\theta}, \, \theta \in \Theta^i_k\}, \quad i\in\{0,1\},$$
where $\Theta_k^0$ and $\Theta_k^1$ are two  disjoint subsets of $\Theta_k$, and the goal is to  minimize the expected decision time simultaneously \emph{in every stream and for every possible value of the unknown parameter}, while controlling the \emph{worst-case} type-I and type-II  familywise error rates.  To be specific, we denote by $\bfTheta$ the global parameter space, i.e., 
$$
\bfTheta \equiv\Theta_1 \times \cdots \times \Theta_K,
$$
and for any $\cA\subseteq [K]$, denote by $\bfTheta_\cA$ its subset that is consistent with  the subset of signals being  $\cA$, i.e., 
$$ \bfTheta_\cA\equiv \left\{  (\theta_1,\ldots,\theta_K)\in\bfTheta: \quad \theta_i\in\Theta_i^1,  \quad
\theta_j\in\Theta_j^0, \quad  i\in\cA, \quad j\notin\cA  \right\}.$$
For any  $\bftheta=(\theta_1,\ldots,\theta_K) \in \Theta_\cA$ we denote by $\Pro_{\cA,\bftheta}$ the joint distribution of all streams when the subset of signals is $A$ and the parameter is $\bftheta$,  by $\Exp_{\cA,\bftheta}$ the corresponding expectation, and we note that by the assumption of  independence across streams we have 
$$ \Pro_{\cA,\bftheta}= \Pro_{1,\theta_1}\otimes \cdots \otimes  \Pro_{K,\theta_K}. $$

Then,  for any $\alpha,\beta\in (0,1)$ and $\Pi\subseteq 2^{[K]}$, the class of admissible tests takes the following form:
\begin{equation*}
    \begin{aligned}
        \Delta^{*}(\alpha,\beta,\Pi)\equiv \bigg\{ \chi\in\Delta:\, 
        \sup_{\bftheta\in\bfTheta_\cA} \Pro_{\cA,\bftheta}(\bfD\backslash\cA\neq\emptyset) & \leq\alpha \\
        \text{and} \quad
        \sup_{\bftheta\in\bfTheta_\cA} \Pro_{\cA,\bftheta}(\cA\backslash\bfD\neq\emptyset) & \leq\beta,  \quad \forall\, \cA
        \in \Pi \bigg\},
    \end{aligned}
\end{equation*}
and the goal is to achieve 
\begin{equation*}
    \inf\big\{ \Exp_{\cA,\bftheta}[T_k]: \chi\in\Delta^*(\alpha,\beta,\Pi) \big\}
\end{equation*}
simultaneously for every $k\in[K]$, 
$\cA\in\Pi$, $\bftheta\in\bfTheta_\cA$,  as $\alpha,\beta\to 0$.

\subsection{The sequential tests} \label{subsec: three tests in composite testing} 
We next generalize the testing  procedures in Section \ref{section: the proposed test} 
to the composite testing setup.  
To do this, for each $k \in [K]$, $\theta \in \Theta_k$, and $n\in\bN$, we assume that $\Pro_{k,\theta}$ is dominated by a $\sigma$-finite measure $\nu_k$ when both measures are restricted to $\cF_k(n)$, and we  denote by $\ell_k(n,\theta)$ the corresponding log-likelihood function, i.e., 
\begin{equation*}
    \ell_k(n,\theta)\equiv\log\frac{d\Pro_{k,\theta}}{d\nu_k}(\cF_k(n)), \quad \theta \in \Theta_k,
\end{equation*}
and  by $(\Delta\ell)_k(n,\theta)$ its increment at time $n$, i.e., 
\begin{equation*}
    (\Delta\ell)_k(n,\theta)\equiv \ell_k\left(n,\theta\right)-\ell_k\left(n-1,\theta\right), \quad \text{where} \quad 
    \ell_k(0)\equiv 0.
\end{equation*}

We assume that, for each 
$k \in [K]$ and $n \in \bN$, we have an  $\cF(n)$-measurable estimator of $\theta_k$,   $\hat\theta_k(n)$, and we  define the \emph{adaptive log-likelihood} in stream $k$ at time $n$ as 
\begin{equation*}
    \ella_k(n)\equiv \ella_k(n-1)+(\Delta\ell)_k\big(n,\hat\theta_k(n-1)\big),  \quad \text{where} \quad  \ella_k(0) \equiv 0,
\end{equation*}
with $\hat\theta_k(0)$ being an arbitrary deterministic value  in $\Theta_k$.   The tests we propose in the composite testing setup are essentially the same as the ones in Section \ref{section: the proposed test} with the difference that  the LLR statistic in stream $k$ at time $n$,  $\lambda_k(n)$, is now replaced by  the following adaptive log-likelihood ratio:
\begin{equation} \label{def: lambdas}
    \lambdas_k(n)=
    \begin{cases}
    \begin{aligned}
        & \ella_k(n)-\ellg_k^0(n), && \text{if } \ellg_k^0(n)<\min\left\{ \ellg_k^1(n),\ella_k(n) \right\} \\
        & \ellg_k^1(n)- \ella_k(n), && \text{if } \ellg_k^1(n)<\min\left\{ \ellg_k^0(n),\ella_k(n) \right\} \\
        & 0, && \text{otherwise},
    \end{aligned}
    \end{cases}
\end{equation}
where, for each $i \in \{0,1\}$, $\ellg_k^i(n)$ is  the 
\emph{maximum log-likelihood} under $\Theta_k^i$, i.e.,
\begin{equation} \label{generalized ll statistic}
    \ellg_k^i(n)\equiv \sup\left\{\ell_k(n,\theta): \theta\in\Theta_k^i\right\}, \quad i\in\{0,1\}.
\end{equation}

\subsubsection{The decentralized test}
The decentralized test is defined  as in \eqref{def: SPRT}-\eqref{def: decentralized procedure} with  $\lambda_k(n)$ replaced by $\lambdas_k(n)$, as long as   $\hat\theta_k(n)$ is an estimator of the unknown parameter in stream $k$ that depends only on the observations from that stream up to time $n$, i.e.,
$\hat\theta_k(n)$ is   $\cF_k(n)$-measurable for every $k\in[K]$ and $n\in\bN$.

\subsubsection{The proposed test}
The proposed test is given by  \eqref{def: general}, where  $\Tss_{k,1}$  and  $\Tss_{k,2}$ now take the following form:
\begin{itemize}
    \item When $l=u\equiv m$,
    \begin{equation*} \label{def: composite testing, gap}
    \begin{aligned}
        \Tss_{k,1} & \Def \inf\left\{n\in\bN: \lambdas_k(n)> \max\big\{ \lambdas_{(m+1)}(n)+c, \, 0 \big\}, \; \lambdas_{(m+1)}(n)< 0 \right\}, \\
        \Tss_{k,2} & \Def \inf\left\{n\in\bN: \lambdas_k(n) < \min\big\{ \lambdas_{(m)}(n)-d,\,0 \big\}, \; \lambdas_{(m)}(n) > 0 \right\},
    \end{aligned}
    \end{equation*}
    where $c,d>0$ are thresholds to be determined. \\

    \item When $l<u$,
    \begin{align}
        \Tss_{k,1} \equiv & \inf\left\{n\in\bN: \lambdas_k(n)> a\right\} \bigwedge \nonumber \\ 
        & \inf\left\{ n\in\bN: \lambdas_{k}(n) > \max\big\{ \lambdas_{(l+1)}(n)+c, \,0 \big\}, \; \lambdas_{(l+1)}(n) < 0 \right\}, \nonumber \\
        \Tss_{k,2} \equiv & \inf\left\{ n\in\bN: \lambdas_{k}(n)< -b \right\} \bigwedge \label{composite testing, gap-inter} \\
        & \inf\left\{ n\in\bN: \lambdas_{k}(n)< \min\big\{ \lambdas_{(u)}(n)-d,\,0 \big\}, \; \lambdas_{(u)}(n)> 0 \right\}, \nonumber
    \end{align}
    where $a,b,c,d>0$ are thresholds to be determined. 
\end{itemize}

\subsubsection{The synchronous test}
The synchronous test  takes the following form:

\begin{itemize}
\item When $l=u\equiv m$, 
    \begin{equation*} \label{composite, synchronous, known number}
    \begin{aligned}
        \Tc\equiv \inf\left\{n\in\bN: \lambdas_{(m)}(n)-\lambdas_{(m+1)}(n) > c\vee d,\; \lambdas_{(m+1)}(n) < 0 < \lambdas_{(m)}(n) \right\},
    \end{aligned}
    \end{equation*}
    and $\bfDc$ is the subset of the $m$ streams with the largest values of 
    $\lambdas_k(\Tc)$,  $k\in[K]$.
    
\item     When  $l<u$, 
    \begin{equation*} \label{composite, synchronous, lower and upper bounds}
    \begin{aligned}
        \Tc & \equiv \tau_1\wedge\tau_2\wedge\tau_3, \quad \text{where} \\
        \tau_1 & \equiv \inf\left\{ n\in\bN: \lambdas_{(l+1)}(n) < \min 
        \big\{-b, -c+\lambdas_{(l)}(n)\big\}, \; \lambdas_{(l)}(n) > 0 \right\}, \\
        \tau_2 & \equiv \inf\left\{n\in\bN: \lambdas_k(n) \notin (-b,a) \;\forall\, k\in[K], \text{ and } l\leq p^*(n)\leq u \right\}, \\
        \tau_3 & \equiv \inf\left\{ n\in\bN: \lambdas_{(u)}(n)>  \max \big\{ a,  \, d+ \lambdas_{(u+1)}(n)\big\}, \; \lambdas_{(u+1)}(n)< 0 \right\},
    \end{aligned}
    \end{equation*} 
and $\bfDc$ is the subset of the $\left(p^*(\Tc)\vee l\right)\wedge u$ streams with the largest values of $\lambdas_k(\Tc)$, $k\in[K]$, where $p^*(n)\equiv \left| 
\{k\in[K]:\lambdas_k(n)>0\} \right|$.
\end{itemize}

\subsection{Distributional assumptions}
We next state the distributional assumptions that we  need to make in order to  generalize the asymptotic optimality theory of Section \ref{sec: AO}. 

 First of all,  we assume that for every $k\in[K]$ and $\theta, u\in\Theta_k$ there exists a positive number $I_k(\theta, u)$ such that
\begin{gather}
    \Pro_{k,\theta} \left( \limsup_{n\to\infty} \frac{\ell_k(n,\theta)-\ell_k(n,u)}{n} \leq I_k(\theta, u) \right)=1. \label{assump1.1}
\end{gather}

Second, we assume that the null and alternative hypotheses in each stream are separated, in the sense that, for each $k\in [K]$, 
\begin{align} \label{assump2}
\begin{split}
 I_k(\theta) &\equiv \inf_{u \in\Theta_k^0} I_k(\theta, u) >0, \quad \forall \, \theta \in\Theta_k^1, \\
  J_k(\theta) &\equiv \inf_{u\in\Theta_k^1} I_k(\theta,u)>0,  \quad  \, \forall \,  \theta \in\Theta_k^0.
  \end{split}
  \end{align}

Finally, we assume that for each $k\in [K]$ and $\epsilon>0$,
\begin{equation} \label{assump3}
\begin{gathered}
    \sum_{n=1}^\infty \Pro_{k,\theta}\left( \frac{\ella_k(n)-\ell_k^0(n)}{n}\leq I_k(\theta)-\epsilon \right)<\infty, \quad \forall\,\theta \in\Theta_k^1, \\
    \sum_{n=1}^\infty \Pro_{k,\theta}\left( \frac{\ella_k(n)-\ell_k^1(n)}{n}\leq J_k(\theta) -\epsilon \right)<\infty, \quad \forall\,\theta \in\Theta_k^0.
\end{gathered}
\end{equation}

\begin{remark} Assumptions   \eqref{assump1.1}-\eqref{assump3} are satisfied when,  for example, $X_k$ is an i.i.d. sequence whose distribution belongs to some multi-parameter exponential family, both the null and the  alternative parameter spaces, $\Theta_0^k$ and $\Theta_1^k$,  are compact, and  $\hat{\theta}_k(n)$  is the maximal likelihood estimator based on the observations from stream $k$ up to time $n$, i.e.,
    $$ \hat\theta_k(n)\equiv \operatorname{argmax}\left\{\ell_k(n,\theta_k): \theta_k\in \Theta_k^0\cup\Theta_k^1\right\}
    $$
 (see, e.g.,   \cite[Appendix E]{Song_AoS}). 
\end{remark}

\subsection{Asymptotic optimality}
We are now ready to generalize the asymptotic optimality theorems of Section \ref{sec: AO}  to the case of general composite hypotheses.  To state these results, for any $\cA\subseteq [K]$ and $\bftheta=(\theta_1, \ldots, \theta_K)\in\bfTheta_\cA$ we set 
$$ \cI_\cA(\bftheta)\equiv \min_{i\in\cA} I_i(\theta_i) \qquad \text{ and } \qquad \cJ_\cA(\bftheta)\equiv \min_{j\notin\cA} J_j(\theta_j). $$

\begin{theorem} \label{thm: composite hypotheses}
Let $l,u$ satisfy \eqref{lu}.
Suppose that the thresholds of $\hat\chi$  are  selected  so that $\hat\chi\in\Delta^{*}(\alpha,\beta,\Pi_{l,u})$ for any  $\alpha,\beta\in(0,1)$ and also \eqref{thres_proposed} holds, e.g., according to \eqref{gap rule, c, d} when $l=u$ and \eqref{gap-inter rule, a, b, c, d} when $l<u$.

If conditions \eqref{assump1.1}-\eqref{assump3} hold for every $k\in[K]$, then, as $\alpha,\beta\to 0$,
        \begin{equation} \label{AO, i, composite}
            \begin{aligned}
                \Exp_{\cA,\bftheta}[\Tss_i] & \sim \inf\big\{ \Exp_{\cA,\bftheta}[T_i]: \chi\in\Delta^*(\alpha,\beta,\Pi_{l,u}) \big\} \\
                & \sim \frac{|\log\alpha|}{I_i(\theta_i)+\cJ_\cA(\bftheta)\cdot \bfone\left\{ |\cA|=\ell \right\}}
            \end{aligned}
        \end{equation}
        \begin{equation} \label{AO, j, composite}
            \begin{aligned}
               \Exp_{\cA,\bftheta}[\Tss_j] & \sim \inf\big\{ \Exp_{\cA,\bftheta}[T_j]: \chi\in\Delta^*(\alpha,\beta,\Pi_{l,u}) \big\} \\
               & \sim \frac{|\log\beta|}{J_j(\theta_j)+\cI_\cA(\bftheta)\cdot \bfone\left\{ |\cA|=u \right\}}
            \end{aligned}
        \end{equation}
simultaneously for every $i\in\cA$, $j\notin\cA$, $\cA\in\Pi_{l,u}$, $\bftheta\in\bfTheta_\cA$. 
\end{theorem}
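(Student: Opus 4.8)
The plan is to prove Theorem \ref{thm: composite hypotheses} along exactly the same lines as Theorem \ref{thm: AO}, by producing a matching pair of bounds: an asymptotic upper bound on $\Exp_{\cA,\bftheta}[\Tss_i]$ and $\Exp_{\cA,\bftheta}[\Tss_j]$ generalizing Proposition \ref{prop: AUB}, and an asymptotic lower bound on the optimal expected decision time in $\Delta^{*}(\alpha,\beta,\Pi_{l,u})$ generalizing Lemma \ref{Lemma, ALB}. Since error control is already presupposed in the hypotheses (the thresholds are chosen so that $\hat\chi\in\Delta^{*}(\alpha,\beta,\Pi_{l,u})$), only these two bounds are needed. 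The conceptual device that makes both generalizations routine is the observation that, under $\Pro_{\cA,\bftheta}$, the adaptive log-likelihood ratios $\lambdas_k(n)$ of \eqref{def: lambdas} obey the same strong-law and complete-convergence behavior as the simple log-likelihood ratios $\lambda_k(n)$ did, with the constants $I_k$ and $J_k$ replaced by the parametrized rates $I_i(\theta_i)$ for a signal and $J_j(\theta_j)$ for a noise. As the three tests are defined by identical stopping rules with $\lambda_k$ replaced by $\lambdas_k$, once this reduction is in place the earlier arguments transfer with essentially no change.

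For the upper bound I would first show that assumptions \eqref{assump1.1}--\eqref{assump3} imply, for $i\in\cA$,
\[
\Pro_{\cA,\bftheta}\Big( \lim_{n\to\infty} \tfrac{\lambdas_i(n)}{n} = I_i(\theta_i) \Big) = 1,
\]
together with the corresponding complete-convergence statement, and symmetrically $\lambdas_j(n)/n \to -J_j(\theta_j)$ for $j\notin\cA$. The only extra point relative to the simple case is to verify that the correct branch of the case definition in \eqref{def: lambdas} is eventually active: for a signal, the null-maximized log-likelihood $\ellg_i^0(n)$ is eventually the strict minimum of the three quantities, so that $\lambdas_i(n)=\ella_i(n)-\ellg_i^0(n)$, which is precisely the quantity controlled in \eqref{assump3}; this uses the separation \eqref{assump2} and is standard for the adaptive likelihood ratio (cf. \cite[Section 6]{Song_AoS}). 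With these limits in hand, the proof of Proposition \ref{prop: AUB} applies without modification, as it depends on the statistics only through their complete-convergence rates, yielding the upper bounds in \eqref{AO, i, composite}--\eqref{AO, j, composite}.

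For the lower bound I would generalize Lemma \ref{Lemma, ALB} by a change-of-measure argument in which the competing hypothesis is a pair $(\cC,\bftheta')$ consisting of an alternative configuration $\cC\in\Pi_{l,u}$ and a least-favorable parameter $\bftheta'\in\bfTheta_\cC$. The relevant information number becomes
\[
\sum_{k\in\cA\setminus\cC} I_k(\theta_k) + \sum_{k\in\cC\setminus\cA} J_k(\theta_k),
\]
where, by \eqref{assump2}, the infimum over the alternative parameter of the per-stream divergence is attained (or approached) at rate $I_k(\theta_k)$ for a stream switched from signal to noise and $J_k(\theta_k)$ for one switched the other way; assumption \eqref{assump1.1} supplies the upper bound on the growth of the log-likelihood ratio needed to turn the error constraints into a lower bound on sample size. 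Choosing the cheapest admissible $\cC$ exactly as in the simple case --- removing $i$ alone when $|\cA|>l$, and removing $i$ while adding the noise stream $j'\notin\cA$ achieving $\cJ_\cA(\bftheta)$ when $|\cA|=l$ --- produces the denominators $I_i(\theta_i)+\cJ_\cA(\bftheta)\cdot\bfone\{|\cA|=l\}$ and, symmetrically, $J_j(\theta_j)+\cI_\cA(\bftheta)\cdot\bfone\{|\cA|=u\}$.

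I expect the main obstacle to be the lower bound, and specifically its composite aspect: the adversary may choose both the competing configuration $\cC$ and the worst-case parameter, so the change of measure must be carried out against the least-favorable alternative and one must verify that the per-stream rate in the information inequality is indeed $I_k(\theta_k)$ rather than something larger. This is precisely what \eqref{assump1.1} is designed to control, since it caps the almost-sure growth rate of $\ell_k(n,\theta)-\ell_k(n,u)$ by $I_k(\theta,u)$; the care required is in passing from this pointwise bound to a statement uniform over $\Theta_k^0$ (resp. $\Theta_k^1$), so that the infimum defining $I_k(\theta)$ (resp. $J_k(\theta)$) in \eqref{assump2} governs the bound. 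Once the lower bound is obtained at the rate matching the upper bound, combining the two gives the asymptotic equivalences in \eqref{AO, i, composite}--\eqref{AO, j, composite} simultaneously over all $i,j,\cA,\bftheta$, exactly as in Theorem \ref{thm: AO}.
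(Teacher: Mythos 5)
Your two-bound strategy is the right one, and its core ingredients match the paper's: both the lower and the upper bound ultimately rest on choosing, via \eqref{assump2}, a per-stream parameter in the \emph{opposite} hypothesis whose divergence is within a factor $(1+\epsilon)$ of the infimum defining $I_k(\theta_k)$ or $J_k(\theta_k)$, and on Lemma \ref{basic lemma for AUB} combined with \eqref{assump3} for attainment. The paper, however, executes both halves differently and more economically. For the lower bound, instead of redoing the change-of-measure argument of Lemma \ref{Lemma, ALB} against a pair $(\cC,\bftheta')$, it fixes $\epsilon>0$, picks $\bftheta^\epsilon\in\bfTheta_{\cA^c}$ with $I_i(\theta_i,\theta_i^\epsilon)\leq(1+\epsilon)\,I_i(\theta_i)$ for $i\in\cA$ and $I_j(\theta_j,\theta_j^\epsilon)\leq(1+\epsilon)\,J_j(\theta_j)$ for $j\notin\cA$, and observes that every test in $\Delta^{*}(\alpha,\beta,\Pi_{l,u})$ is also a valid test for the induced multiple testing problem with \emph{simple} hypotheses $\gamma_i=\theta_i^\epsilon$ versus $\gamma_i=\theta_i$ ($i\in\cA$) and $\gamma_j=\theta_j$ versus $\gamma_j=\theta_j^\epsilon$ ($j\notin\cA$); the already-proven Theorem \ref{thm: AO} (whose hypothesis \eqref{SLLN, assumption for ALB} is supplied by \eqref{assump1.1}) then yields the bound, and letting $\epsilon\to 0$ finishes. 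This reduction also dissolves what you flag as the "main obstacle": no uniformity over $\Theta_k^0$ or $\Theta_k^1$ is needed, only one near-infimizing parameter per stream. For the upper bound, the paper never proves an SLLN for $\lambdas_k$ itself: it dominates $\Tss_i$ by stopping times written directly in terms of $\ella_i(n)-\ell_i^0(n)$ and $\ella_j(n)-\ell_j^1(n)$, using \eqref{obs} (that $\lambdas_k(n)>0$ forces $\lambdas_k(n)=\ella_k(n)-\ell_k^0(n)$ and $\lambdas_k(n)<0$ forces $\lambdas_k(n)=\ell_k^1(n)-\ella_k(n)$); these are exactly the quantities whose complete convergence is assumed in \eqref{assump3}, so Lemma \ref{basic lemma for AUB} applies at once.

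Two soft spots in your sketch deserve naming. First, the claim that "the proof of Proposition \ref{prop: AUB} applies without modification" is not literally true: the composite stopping rules are \emph{not} the simple ones with $\lambda_k$ replaced by $\lambdas_k$ — they carry extra sign constraints ($\lambdas_{(l+1)}(n)<0$, the $\max\{\cdot\,,0\}$ in the crossing condition) — so you must either fold these constraints into the application of Lemma \ref{basic lemma for AUB} (e.g., as additional processes with thresholds of order $\epsilon' c$ for small $\epsilon'$) or argue through \eqref{assump3} directly as the paper does. Second, your branch-activation step needs more than the separation condition \eqref{assump2}: to rule out the wrong branch of \eqref{def: lambdas} summably in $n$ (you need complete, not merely almost-sure, convergence), one must control $\ella_k(n)-\ell_k(n,\theta_k)$ from above, which is precisely what the mean-one martingale property of $\exp\{\ella_k(n)-\ell_k(n,\theta_k)\}$ (Lemma \ref{martingale}) provides; this ingredient should be invoked explicitly. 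Finally, note that the paper devotes the first half of its proof to verifying, again via Lemma \ref{martingale} and a change of measure, that the example thresholds \eqref{gap rule, c, d} and \eqref{gap-inter rule, a, b, c, d} indeed place $\hat\chi$ in $\Delta^{*}(\alpha,\beta,\Pi_{l,u})$; treating membership as purely hypothetical, as you do, leaves the "e.g." clause — and with it the non-vacuousness of the theorem — unaddressed.
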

\begin{proof}
Appendix \ref{Proof related to composite hypotheses}. \\
\end{proof}

\begin{theorem} \label{thm: composite hypotheses, decentralized procedure}
Let  $\Pi$ satisfy \eqref{no trivial}. 
Suppose that the thresholds of $\tilde\chi$ are selected so that $\Tilde{\chi}\in\Delta^*(\alpha,\beta,\Pi)$ for any  $\alpha,\beta\in(0,1)$ and also \eqref{thres_decentralized} holds, e.g., according to \eqref{intersection rule, a, b_general}.

If conditions \eqref{assump1.1}-\eqref{assump3} hold for every $k\in[K]$, then, as $\alpha,\beta\to 0$,
   \begin{align*}
        \begin{split}
            \Exp_{\cA,\bftheta}[\Tilde{T}_i] &\sim 
            \inf \left\{ \Exp_{\cA,\bftheta}[T_i]: \chi\in\Delta'\cap\Delta^*(\alpha,\beta,\Pi) \right\} \sim \frac{|\log\alpha|}{I_i(\theta_i)} \\
            \Exp_{\cA,\bftheta}[\Tilde{T}_j] &\sim \inf\big\{ \Exp_{\cA,\bftheta}[T_j]: \chi\in\Delta'\cap\Delta^*(\alpha,\beta,\Pi) \big\} \sim \frac{|\log\beta|}{J_j(\theta_j)}
        \end{split}
        \end{align*}
simultaneously   for every $i\in\cA$, $j\notin\cA$, $\cA\in\Pi$, 
    $\bftheta\in\bfTheta_\cA$.
\end{theorem}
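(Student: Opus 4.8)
The plan is to prove a two-sided sandwich. Since $\tilde\chi$ itself lies in $\Delta'\cap\Delta^*(\alpha,\beta,\Pi)$, the infimum in each display is trivially dominated by $\Exp_{\cA,\bftheta}[\tilde T_i]$ (resp.\ $\Exp_{\cA,\bftheta}[\tilde T_j]$), so it suffices to bound $\Exp_{\cA,\bftheta}[\tilde T_i]$ from above by $\frac{|\log\alpha|}{I_i(\theta_i)}(1+o(1))$ and to bound the decentralized infimum from below by the same quantity; the noise streams are treated symmetrically with $\beta$ and $J_j(\theta_j)$. This mirrors the proof of Theorem~\ref{thm: AA for decentralized procedures}, the only new ingredient being that the LLR $\lambda_k(n)$ is replaced by the adaptive statistic $\lambdas_k(n)$ of \eqref{def: lambdas}, whose large-sample behaviour is now governed by \eqref{assump1.1}--\eqref{assump3} rather than by \eqref{SLLN, assumption for ALB}--\eqref{Complete convergence, assumption for AUB}.

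For the upper bound I would first record that, under $\Pro_{i,\theta_i}$ with $i\in\cA$, the adaptive statistic obeys $\lambdas_i(n)/n\to I_i(\theta_i)$ with complete convergence: by \eqref{assump2}--\eqref{assump3} the normalized gap $(\ella_i(n)-\ellg_i^0(n))/n$ concentrates above $I_i(\theta_i)-\epsilon$ for all large $n$, while $\ellg_i^0(n)$ eventually falls below $\min\{\ellg_i^1(n),\ella_i(n)\}$, so the first branch of \eqref{def: lambdas} is active and $\lambdas_i(n)=\ella_i(n)-\ellg_i^0(n)$ for all large $n$ (a symmetric statement holds for $j\notin\cA$ with $J_j(\theta_j)$). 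The decentralized test crosses the upper threshold $a$ at a time that, by the complete-convergence estimate exactly as in the decentralized part of Proposition~\ref{prop: AUB}, satisfies $\Exp_{i,\theta_i}[\tilde T_i]\lesssim a/I_i(\theta_i)$; invoking the calibration \eqref{thres_decentralized}, $a\sim|\log\alpha|$, yields $\Exp_{\cA,\bftheta}[\tilde T_i]\lesssim |\log\alpha|/I_i(\theta_i)$.

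The lower bound is where the argument genuinely departs from the simple case. Fix a decentralized $\chi\in\Delta'\cap\Delta^*(\alpha,\beta,\Pi)$ and a signal stream $i\in\cA$. Because $T_i$ is $\cF_i$-adapted and $D_i$ is $\cF_i(T_i)$-measurable, the law of $(T_i,D_i)$ depends only on the marginal of stream $i$; hence $(T_i,D_i)$ is a single-stream sequential test of $\Theta_i^0$ against $\Theta_i^1$. Using \eqref{no trivial} to pick configurations placing $i$ in the signal set and in the noise set, the worst-case constraints defining $\Delta^*$ force $\sup_{u\in\Theta_i^0}\Pro_{i,u}(D_i=1)\le\alpha$ and $\sup_{\theta\in\Theta_i^1}\Pro_{i,\theta}(D_i=0)\le\beta$. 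Now for each fixed $u\in\Theta_i^0$ I would run a change-of-measure (Wald likelihood-ratio) argument between $\Pro_{i,\theta_i}$ and $\Pro_{i,u}$: on the event $\{T_i\le(1-\epsilon)|\log\alpha|/I_i(\theta_i,u)\}$ the accumulated log-likelihood ratio is, by the almost-sure growth rate in \eqref{assump1.1}, too small to produce the required separation of error probabilities, so this event has vanishing probability and $\Exp_{i,\theta_i}[T_i]\gtrsim |\log\alpha|/I_i(\theta_i,u)$. Optimizing the free null point $u$, i.e.\ taking the infimum of $I_i(\theta_i,u)$ over $u\in\Theta_i^0$, which equals $I_i(\theta_i)$ by \eqref{assump2}, produces $\Exp_{i,\theta_i}[T_i]\gtrsim|\log\alpha|/I_i(\theta_i)$ uniformly over $\Delta'\cap\Delta^*$.

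Combining the three bounds pins all three quantities in each display to $|\log\alpha|/I_i(\theta_i)$ (resp.\ $|\log\beta|/J_j(\theta_j)$), and since $[K]$ is finite the estimates hold simultaneously over all $i\in\cA$, $j\notin\cA$, $\cA\in\Pi$, $\bftheta\in\bfTheta_\cA$. The main obstacle I anticipate is making the composite lower bound fully rigorous under the weak, possibly temporally dependent, assumptions of this section: the change-of-measure step must be carried out with the almost-sure (rather than i.i.d.) growth rate $I_i(\theta_i,u)$, controlling the overshoot of the log-likelihood ratio and justifying that the per-$u$ lower bounds may be combined into a bound with denominator $\inf_{u\in\Theta_i^0} I_i(\theta_i,u)$, which is the composite, dependent-data analogue of the classical Wald inequality underlying Theorem~\ref{thm: AA for decentralized procedures}.
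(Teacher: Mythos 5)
Your proposal is correct and takes essentially the same route as the paper, which proves only Theorem \ref{thm: composite hypotheses} in detail and declares this theorem's proof ``similar and easier'': your upper bound is the paper's (complete convergence from \eqref{assump3} fed into Lemma \ref{basic lemma for AUB}, with $a\sim|\log\alpha|$), and your lower bound — restrict a decentralized test to its single-stream component, apply the change-of-measure argument of Lemma \ref{Lemma, ALB}/Theorem \ref{thm: AA for decentralized procedures} against a fixed null point $u\in\Theta_i^0$ under \eqref{assump1.1}, then optimize $u$ so that $I_i(\theta_i,u)$ approaches $\inf_{u}I_i(\theta_i,u)=I_i(\theta_i)$ — is exactly the paper's reduction to a simple-hypothesis problem at a near-least-favorable null $\theta_i^\epsilon$ with $I_i(\theta_i,\theta_i^\epsilon)\leq(1+\epsilon)I_i(\theta_i)$ guaranteed by \eqref{assump2}. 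The one step you state loosely, that the first branch of \eqref{def: lambdas} is ``active for all large $n$'' (almost-sure eventual activation is not enough to invoke Lemma \ref{basic lemma for AUB}, which needs summable exceptional probabilities), is glossed over by the paper as well; to patch it, note via Lemma \ref{martingale} and Markov's inequality that $\Pro_{\cA,\bftheta}\left(\ella_i(n)-\ell_i(n,\theta_i)>\epsilon n\right)\leq e^{-\epsilon n}$, and that on the event where $\ella_i(n)-\ellg_i^0(n)>\epsilon n$ but the first branch is inactive one has $\ellg_i^1(n)\leq\ellg_i^0(n)$, hence $\ella_i(n)-\ell_i(n,\theta_i)>\epsilon n$; these bounds are summable, so \eqref{assump3} transfers to complete convergence of $\lambdas_i(n)/n$ and the rest of your argument goes through.
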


\begin{proof}
Appendix \ref{Proof related to composite hypotheses}. \\
\end{proof}

\begin{theorem} \label{thm: composite hypotheses, synchronous decision making}
Let $l,u$ satisfy \eqref{lu}. Suppose that the thresholds of $\check\chi$ are selected so that $\check{\chi}\in\Delta^*(\alpha,\beta,\Pi_{l,u})$ for any $\alpha,\beta\in(0,1)$ and also \eqref{thres_proposed} holds, e.g., according to \eqref{gap rule, c, d} when $l=u$ and \eqref{gap-inter rule, a, b, c, d} when $l<u$.

If conditions \eqref{assump1.1}-\eqref{assump3} hold for every $k\in[K]$, then, as $\alpha,\beta\to 0$,
        \begin{equation*}
        \begin{split}
            \Exp_{\cA,\bftheta}[\Tc] & \sim \inf\big\{ \Exp_{\cA,\bftheta}[T_1]: \chi\in\Delta''\cap\Delta^*(\alpha,\beta,\Pi_{l,u}) \big\} \\
            & \sim \max\left\{ \frac{|\log\alpha|}{\cI_\cA(\bftheta)+\cJ_\cA(\bftheta)\cdot \bfone\{|\cA|=l\}}, \; \frac{|\log\beta|}{\cJ_\cA(\bftheta)+\cI_\cA(\bftheta)\cdot\bfone\{|\cA|=u\}}
        \right\}
        \end{split}
        \end{equation*}
simultaneously    for  every  $\cA\in\Pi_{l,u}$, $\bftheta\in\bfTheta_\cA$.
\end{theorem}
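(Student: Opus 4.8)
The plan is to establish the displayed equivalence as a two-sided bound. Since $\check\chi$ belongs to $\Delta''\cap\Delta^*(\alpha,\beta,\Pi_{l,u})$ by its very construction, the infimum in the statement is automatically no larger than $\Exp_{\cA,\bftheta}[\Tc]$. It therefore suffices to prove two matching asymptotic bounds: an upper bound $\Exp_{\cA,\bftheta}[\Tc]\lesssim R$ on the expected decision time of the synchronous test, and a lower bound $\inf\{\Exp_{\cA,\bftheta}[T_1]:\chi\in\Delta''\cap\Delta^*(\alpha,\beta,\Pi_{l,u})\}\gtrsim R$ on the infimum, where $R$ denotes the right-hand side of the displayed expression. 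The sandwich then forces all three quantities to be asymptotically equivalent.

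For the lower bound I would exploit the defining feature of \emph{synchronous} procedures together with the composite general-family result already obtained in Theorem \ref{thm: composite hypotheses}. For any $\chi\in\Delta''\cap\Delta^*(\alpha,\beta,\Pi_{l,u})$ all decision times coincide, so $\Exp_{\cA,\bftheta}[T_1]=\Exp_{\cA,\bftheta}[T_i]=\Exp_{\cA,\bftheta}[T_j]$ for every $i\in\cA$ and $j\notin\cA$. Since $\Delta''\cap\Delta^*\subseteq\Delta^*$, taking infima gives $\inf_{\Delta''\cap\Delta^*}\Exp_{\cA,\bftheta}[T_1]\geq\inf_{\Delta^*}\Exp_{\cA,\bftheta}[T_k]$ for each individual stream $k$. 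Applying the asymptotics of Theorem \ref{thm: composite hypotheses} to each such individual infimum and then maximizing over $k$ produces the bound: the worst signal stream contributes $\min_{i\in\cA}I_i(\theta_i)=\cI_\cA(\bftheta)$ to the $|\log\alpha|$ denominator and the worst noise stream contributes $\min_{j\notin\cA}J_j(\theta_j)=\cJ_\cA(\bftheta)$ to the $|\log\beta|$ denominator, while the boundary indicators $\bfone\{|\cA|=l\}$ and $\bfone\{|\cA|=u\}$ are inherited directly from Theorem \ref{thm: composite hypotheses}. This yields exactly $\inf\gtrsim R$.

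For the upper bound I would generalize the synchronous estimates of Proposition \ref{prop: AUB}, specifically the analogues of \eqref{AUB, synchronous, gap} and \eqref{AUB, synchronous, gap-inter, a sim c}, to the adaptive setting. The composite synchronous test has exactly the same threshold-crossing geometry as its simple-hypothesis counterpart, with the adaptive log-likelihood ratios $\lambdas_k(n)$ in place of the exact LLRs $\lambda_k(n)$. The combinatorial argument bounding how long it takes for the ordered statistics $\lambdas_{(l)}(n),\lambdas_{(l+1)}(n),\lambdas_{(u)}(n),\lambdas_{(u+1)}(n)$ to separate by the required gaps then carries over verbatim, provided one first establishes that $\lambdas_k(n)/n\to I_k(\theta_k)$ under a signal parameter and $\lambdas_k(n)/n\to -J_k(\theta_k)$ under a noise parameter, with the one-sided complete convergence needed to pass to expectations. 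Inserting the threshold choice \eqref{thres_proposed}, so that $a\sim c\sim|\log\alpha|$ and $b\sim d\sim|\log\beta|$, then delivers $\Exp_{\cA,\bftheta}[\Tc]\lesssim R$.

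The main obstacle is the convergence behavior of the adaptive statistics $\lambdas_k$. Establishing, from assumptions \eqref{assump1.1}-\eqref{assump3}, that $\lambdas_k(n)/n$ converges to $I_k(\theta_k)$ (resp. $-J_k(\theta_k)$) with a summable lower-tail bound is the technical heart, since $\lambdas_k$ is defined through the three-way comparison in \eqref{def: lambdas} of $\ella_k(n)$, $\ellg^0_k(n)$ and $\ellg^1_k(n)$ rather than as a single partial sum; one must argue that under a signal (resp. noise) parameter the maximized null (resp. alternative) log-likelihood is eventually the smallest of the three, so that $\lambdas_k(n)=\ella_k(n)-\ellg^0_k(n)$ (resp. $\ellg^1_k(n)-\ella_k(n)$) and assumption \eqref{assump3} applies. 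This supporting lemma is in fact already required for the proof of Theorem \ref{thm: composite hypotheses}, so once it is in hand both bounds above reduce to the combinatorial bookkeeping already carried out in the simple-hypothesis case.
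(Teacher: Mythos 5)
Your proposal is correct and takes essentially the same route the paper intends: the paper proves only Theorem \ref{thm: composite hypotheses} and declares the present proof ``similar and easier,'' meaning exactly your two halves --- the lower bound from synchronicity ($T_1=T_k$ for every $k$) combined with the per-stream infima of Theorem \ref{thm: composite hypotheses}, mirroring the proof of Theorem \ref{thm: AA for synchronous stopping}, and the upper bound from applying Lemma \ref{basic lemma for AUB} under assumption \eqref{assump3} to the adaptive statistics, mirroring the upper-bound half of the proof of Theorem \ref{thm: composite hypotheses}. Your key technical step --- showing that under a signal (resp.\ noise) parameter the maximized null (resp.\ alternative) log-likelihood is eventually the smallest of the three, via Lemma \ref{martingale} together with \eqref{assump3}, so that $\lambdas_k$ reduces to the statistic covered by \eqref{assump3} --- is precisely the ingredient the paper's own argument relies on as well.
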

\begin{proof}
Appendix \ref{Proof related to composite hypotheses}.\\
\end{proof}

\section{Generalization to other global error metrics} \label{section: generalization to other GEM}
In this section, we discuss the extension of the asymptotic optimality theory of Section \ref{sec: AO} to various error metrics beyond the classical familywise error rates. For this, we follow the approach in \cite{Bartroff2021}, where the asymptotic optimality theory for synchronous tests in \cite{Song_prior} was
extended to general error metrics. 

For simplicity, we focus on the case of simple hypotheses 
and in the place of the familywise error rates, $\text{FWE}_\cA^1(\chi)$ and $\text{FWE}_\cA^2(\chi)$, of a test $\chi\in\Delta$, we consider  the type-I and type-II versions of a generic error metric
 $$\text{GEM}_\cA^1(\chi) \qquad \text{ and } \qquad \text{GEM}_\cA^2(\chi).$$ 

Then,  $\hat\chi$, $\Tilde{\chi}$, $\check\chi$ can be designed to control these error metrics and preserve their asymptotic optimality properties  as long as there exist $C_1, C_2>0$ so that,  for both $i\in\{1,2\}$ and every  $\cA\in\Pi$, 
\begin{align} 
    \text{GEM}_\cA^i(\chi) &\leq C_1\cdot \text{FWE}_\cA^i(\chi) \quad \text{for all} \quad \chi \in \{\hat{\chi}, \tilde{\chi}, \check{\chi}\}, \label{C1} \\
      \text{and} \quad       \text{GEM}_\cA^i(\chi) &\geq C_2\cdot \text{FWE}_\cA^i(\chi) \quad \text{for all} \quad  \chi \in\Delta. \label{C2}
        \end{align} 
To see this, for any $\alpha, \beta \in (0,1)$ and  $\Pi$ we set
\begin{equation*} 
\begin{aligned}
    \Delta^\text{GEM}(\alpha,\beta,\Pi)\Def \bigg\{  \chi\in\Delta : \; 
    \text{GEM}_\cA^1(\chi) \leq \alpha \; \text{ and } \; \text{GEM}_\cA^2(\chi) \leq \beta,  \; \forall \; \cA\in\Pi  \bigg\}.
\end{aligned}
\end{equation*}
Then,  by \eqref{C1} we have 
$$\chi \in \{\hat{\chi}, \tilde\chi,  \check\chi\} \quad \text{and} \quad \chi\in\Delta(\alpha/C_1,\beta/C_1,\Pi) \quad \Rightarrow \quad \chi\in\Delta^\text{GEM}(\alpha,\beta,\Pi),$$ whereas \eqref{C2} implies that, for every $A\in\Pi$ and $k\in[K]$, as $\alpha,\beta\to 0$, 
\begin{equation*}
\begin{split}
    \inf\left\{ \Exp_\cA[T_k]: \chi\in\Delta^\text{GEM}(\alpha,\beta,\Pi) \right\} & \geq \cL_{k,\cA}(\alpha/C_2,\beta/C_2,\Pi) \sim \cL_{k,\cA}(\alpha,\beta,\Pi) \\
    \inf\left\{ \Exp_\cA[T_k]: \chi\in \Delta'\cap\Delta^\text{GEM}(\alpha,\beta,\Pi) \right\} & \geq \cL'_{k,\cA}(\alpha/C_2,\beta/C_2,\Pi) \sim \cL'_{k,\cA}(\alpha,\beta,\Pi) \\
    \inf\left\{ \Exp_\cA[T_1]: \chi\in \Delta''\cap\Delta^\text{GEM}(\alpha,\beta,\Pi) \right\} & \geq \cL''_{\cA}(\alpha/C_2,\beta/C_2,\Pi) \sim \cL''_{\cA}(\alpha,\beta,\Pi). 
\end{split}
\end{equation*} \\

We next provide some concrete examples of error metrics that satisfy the above conditions when $\Pi$ is of the form $\Pi_{l,u}$. 

\begin{itemize}
    \item [(i)] \textit{Per-comparison error rates (PCE)}: the expected proportion of type-I or type-II errors among all streams:
    \begin{equation*}
        \text{PCE}_\cA^1(\chi) \equiv \frac{\Exp_\cA[|\bfD\backslash\cA|]}{K} \qquad \text{and} \qquad \text{PCE}_\cA^2(\chi) \equiv \frac{\Exp_\cA[|\cA\backslash\bfD|]}{K}.
    \end{equation*}
    \item [(ii)] \textit{False discovery rate (FDR) and false non-discovery rate}: the expected proportion of type-I errors among all streams identified as signals and the expected proportion of type-II errors among all streams identified as noises:
    \begin{equation*}
        \text{FDR}_\cA^1(\chi) \equiv \Exp_\cA\left[\frac{|\bfD\backslash\cA|}{|\bfD|}\right] \qquad \text{and} \qquad \text{FDR}_\cA^2(\chi) \equiv \Exp_\cA\left[\frac{|\cA\backslash\bfD|}{K-|\bfD|}\right],
    \end{equation*}
    with the convention that $0/0=0$.
    \item [(iii)] \textit{Positive false discovery rate (pFDR) and positive false non-discovery rate}: the expected proportion of type-I (resp. -II) errors among all streams identified as signals (resp. noises) given that there are streams identified as signals (resp. noises):
    \begin{align*}
        \text{pFDR}_\cA^1(\chi)  &\equiv \Exp_\cA\left[ \left.\frac{|\bfD\backslash\cA|}{|\bfD|}  \;   \right\vert  \; |\bfD|\geq 1 \right] \\
         \text{pFDR}_\cA^2(\chi)  &\equiv \Exp_\cA\left[\left. \frac{|\cA\backslash\bfD|}{K-|\bfD|} \; \right   \vert  \;   K-|\bfD|\geq 1 \right]. 
    \end{align*}
\end{itemize} 

Indeed,  by \cite{Bartroff2021} it follows that the inequalities  \eqref{C1}-\eqref{C2} hold when $\Pi=\Pi_{l,u}$ and 
    \begin{itemize}
        \item [(i)] $\text{GEM}=\text{PCE}$, with  $C_1=(u\vee(K-l))/K$, $C_2=1/K$.
        \item [(ii)] $\text{GEM}=\text{FDR}$, with  $C_1=1$, $C_2=1/K$.
        \end{itemize}
The  next Proposition deals with the error metric in (iii).

\begin{proposition} \label{prop: GEM}
Let  $\Pi=\Pi_{l,u}$, where    $0<l\leq u<K$, and suppose that $\text{FWE}_A^i(\chi)\leq 1/2$, where  $i\in\{0,1\}$ and $\chi\in\{\hat\chi,\tilde\chi,\check\chi\}$. 
If  $\text{GEM}=\text{pFDR}$, then  
the inequalities  \eqref{C1}-\eqref{C2} hold with  $C_1=2$, $C_2=1/K$.  
\end{proposition}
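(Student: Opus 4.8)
The plan is to prove the two inequalities \eqref{C1} and \eqref{C2} separately, treating the type-I and type-II versions in parallel. The whole argument rests on two elementary observations: each pFDR integrand, $|\bfD\backslash\cA|/|\bfD|$ and $|\cA\backslash\bfD|/(K-|\bfD|)$, takes values in $[0,1]$ and is supported on the corresponding familywise error event, so it is at least $1/K$ there and at most $1$ everywhere; and, under the stated hypotheses, the conditioning probabilities in the definition of pFDR are bounded below by $1/2$.

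For the lower bound \eqref{C2}, which must hold for every $\chi\in\Delta$, I would write
\[
\text{pFDR}_\cA^1(\chi)=\frac{\Exp_\cA\!\left[\frac{|\bfD\backslash\cA|}{|\bfD|}\,\bfone\{|\bfD|\geq 1\}\right]}{\Pro_\cA(|\bfD|\geq 1)}.
\]
On $\{\bfD\backslash\cA\neq\emptyset\}$ the integrand is at least $1/|\bfD|\geq 1/K$, and this event is contained in $\{|\bfD|\geq 1\}$; since the denominator is at most $1$, this gives $\text{pFDR}_\cA^1(\chi)\geq \tfrac{1}{K}\,\text{FWE}_\cA^1(\chi)$. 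The type-II bound is identical after replacing $\bfD\backslash\cA,|\bfD|$ by $\cA\backslash\bfD,K-|\bfD|$ and noting that $\{\cA\backslash\bfD\neq\emptyset\}\subseteq\{K-|\bfD|\geq 1\}$ with integrand $\geq 1/K$ there. (In the degenerate case where the conditioning event has zero probability, the corresponding $\text{FWE}$ is also zero and the inequality holds vacuously.) This yields \eqref{C2} with $C_2=1/K$.

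For the upper bound \eqref{C1}, restricted to $\chi\in\{\hat\chi,\tilde\chi,\check\chi\}$, I would bound the numerator above by the familywise error rate (the ratio is $\leq 1$ and vanishes off the error event, so $\Exp_\cA[\cdots]\leq\Pro_\cA(\bfD\backslash\cA\neq\emptyset)=\text{FWE}_\cA^1(\chi)$) and bound the conditioning probability below by $1/2$. For the latter I use $0<l\leq u<K$: since $|\cA|\geq l\geq 1$, the event $\{|\bfD|=0\}$ forces every true signal to be misclassified, so $\{|\bfD|=0\}\subseteq\{\cA\backslash\bfD\neq\emptyset\}$ and $\Pro_\cA(|\bfD|\geq 1)\geq 1-\text{FWE}_\cA^2(\chi)\geq 1/2$; symmetrically, since $|\cA|\leq u<K$, the event $\{|\bfD|=K\}$ forces every noise stream to be misclassified, so $\{|\bfD|=K\}\subseteq\{\bfD\backslash\cA\neq\emptyset\}$ and $\Pro_\cA(K-|\bfD|\geq 1)\geq 1-\text{FWE}_\cA^1(\chi)\geq 1/2$. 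Combining the two estimates gives $\text{pFDR}_\cA^i(\chi)\leq 2\,\text{FWE}_\cA^i(\chi)$, i.e. \eqref{C1} with $C_1=2$.

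The only delicate point is the lower bound on the denominator of the conditional expectation; everything else is a routine estimate on ratios bounded in $[0,1]$. Here it is essential both to invoke the hypothesis $\text{FWE}_\cA^i(\chi)\leq 1/2$ and to pair each pFDR type with the \emph{opposite} FWE type ($\text{pFDR}^1$ with $\text{FWE}^2$, and $\text{pFDR}^2$ with $\text{FWE}^1$), which is precisely where the structural constraints $l>0$ and $u<K$ enter: they guarantee that $\cA$ and $\cA^c$ are both nonempty, so that $\{|\bfD|=0\}$ and $\{|\bfD|=K\}$ are genuine error events whose probabilities can be controlled.
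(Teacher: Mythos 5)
Your proof is correct and follows essentially the same route as the paper: both arguments write $\text{pFDR}_\cA^i$ as the unconditional expectation (i.e., $\text{FDR}_\cA^i$) divided by the conditioning probability, bound the numerator between $\tfrac{1}{K}\text{FWE}_\cA^i$ and $\text{FWE}_\cA^i$, and bound the denominator below by $1-\text{FWE}_\cA^{3-i}\geq 1/2$ using $l>0$ (resp.\ $u<K$) to make $\{|\bfD|=0\}$ (resp.\ $\{|\bfD|=K\}$) an error event. The only differences are cosmetic: you spell out both error types and the degenerate null-conditioning case, which the paper leaves to symmetry and implicit convention.
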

\begin{proof}
    See Appendix \ref{Proofs related to other global error metrics}.
\end{proof}

\section{Conclusion and open problems} \label{section: conclusion}
In this work  we consider the problem of simultaneously testing the marginal distributions of  multiple, sequentially monitored data streams. We  introduce a general formulation  in which the decisions for the various testing problems can be made at different times, using data from all  streams, 
and all streams can be monitored  until all decisions have been made.  Assuming  a priori bounds on the number of signals, we propose a  novel sequential multiple testing procedure that takes advantage of the flexibility of this formulation. 
Under general distributional assumptions and in the case of general parametric  composite hypotheses,  
 we show  that the proposed procedure  minimizes the expected decision time, simultaneously for every data stream and for every signal configuration,  asymptotically as the target  error rates go to zero.  In this asymptotic regime, we also  evaluate  the factor by which the expected decision time in every stream increases when one is limited to \emph{decentralized} procedures, where only local data can be used for each testing problem, or \textit{synchronous} procedures, where all decisions must be made at the same time.

There are various directions  for further study. First of all, it is interesting to develop an analogous asymptotic optimality theory  when  sampling from a data stream  must be terminated once the decision  for the  corresponding testing problem has been made, and data from all  streams can be utilized for  each testing problem. This setup, which is considered, for example, in  \cite{Bartroff_2014},  is  more restrictive  than the one we consider in the present paper, where all streams can be monitored until all decisions have been made. However, it is more flexible than a decentralized setup, where it is possible to use only local data for each testing problem. The results of the present paper indicate that this additional flexibility does not  provide any gains, as far as it concerns first-order asymptotic optimality, when there is no prior information regarding the subset of signals.  An asymptotic optimality theory in the case where non-trivial prior information is available  is an open problem.

Second,  it is interesting to extend the  formulation of the present paper to the case where it is possible to observe only a subset of  data streams at each time instant, which is determined by the practitioner based on the already collected observations. Such sampling constraints have been considered in various works, such as  \cite{Kobi_2015a, Kobi_2018, Kobi_2020_composite, Aris_IEEE, Prabhu2022}, but in all of them  a common decision time is assumed.

Finally, another  direction of interest is to consider error metrics that are not  bounded by the familywise error rates up to multiplicative constants and, as a result, the results of Section \ref{section: generalization to other GEM} do not apply. This is the case, for example, for the generalized familywise error rates 
\cite{Lehmann_2005}  or the generalized false discovery/non-discovery rate 
\cite{k-FDR}.   The former has been considered in the case of \emph{synchronous} procedures in \cite{Song_AoS}.  An asymptotic optimality theory with such error metrics in the general family of sequential multiple testing procedures of the present work is an open problem.

\appendix
\section{}\label{Supporting lemmas}
In this Appendix, we state and prove three  supporting lemmas. The first two are  used in the proof of the asymptotic upper bounds in Propositions \ref{prop: AUB} and \ref{prop: higher-order AUB}, whereas  the third is used in the proof of the asymptotic lower bound in Lemma \ref{Lemma, ALB}. 

\begin{lemma} \label{basic lemma for AUB}
    Let $\xi_l\Def\{ \xi_l(n),\, n\in \bN \}$, $l\in[L]$ be $L\in\bN$ stochastic processes on some probability space $(\Omega,\cF,\Pro)$. 
    For any $\bfa = \{a_1,\ldots, a_L\}$ where  $a_l \geq 0$ for every $l\in [L]$, set 
    $$ \nu(\bfa) \equiv \inf\{ n\in\bN: \, \xi_l(n) \geq a_l \text{ for every } l\in [L] \}. $$
    If for each $l\in [L]$ there exists $\mu_l>0$ so that
    \begin{equation*} \label{complete conv in the lemma}
    \forall \; \epsilon>0 \quad     \sum_{n=1}^\infty \Pro\left( \frac{1}{n} \xi_l(n) \leq \mu_l -\epsilon \right) < \infty,
    \end{equation*}
    then, as $\min_{l\in[L]} a_l \to \infty$, 
    \begin{equation*} \label{upper bound in the lemma}
        \Exp[\nu(\bfa)] \lesssim \max_{l\in[L]} \left\{ \frac{a_l}{\mu_l} \right\},
    \end{equation*} 
    where $\Exp$ is the expectation under $\Pro$.
\end{lemma}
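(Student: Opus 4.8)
The plan is to combine the elementary tail-sum identity $\Exp[\nu(\bfa)] = \sum_{n=0}^\infty \Pro(\nu(\bfa) > n)$ with a one-sided comparison of the crossing event at the single time $n$. The starting observation is that if $\nu(\bfa) > n$, then in particular the defining condition must fail at time $n$ itself, so at least one coordinate has not yet crossed; that is,
$$
\{\nu(\bfa) > n\} \subseteq \bigcup_{l\in[L]} \{\xi_l(n) < a_l\}, \qquad n \geq 1,
$$
and hence $\Pro(\nu(\bfa) > n) \leq \sum_{l\in[L]} \Pro(\xi_l(n) < a_l)$. The virtue of this inclusion is that it uses only the value of each process at the terminal time $n$, avoiding any need to control the full history.

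Next I would fix $\epsilon \in (0, \min_{l}\mu_l)$ and set $M \equiv \max_{l} a_l/\mu_l$ and $M_\epsilon \equiv \max_{l} a_l/(\mu_l - \epsilon)$. The role of $M_\epsilon$ is that for every integer $n \geq M_\epsilon$ one has $a_l \leq (\mu_l - \epsilon)n$ for all $l$, whence $\{\xi_l(n) < a_l\} \subseteq \{\xi_l(n)/n \leq \mu_l - \epsilon\}$. Splitting the series at $\lceil M_\epsilon\rceil$, the head contributes at most $M_\epsilon + 1$ (each summand being bounded by one), while the tail is dominated by
$$
\sum_{n\geq \lceil M_\epsilon\rceil} \sum_{l\in[L]} \Pro\!\left(\tfrac{1}{n}\xi_l(n) \leq \mu_l - \epsilon\right) \leq \sum_{l\in[L]}\sum_{n=1}^\infty \Pro\!\left(\tfrac{1}{n}\xi_l(n) \leq \mu_l - \epsilon\right) \equiv R_\epsilon,
$$
which is finite by the complete-convergence hypothesis and, crucially, does not depend on $\bfa$. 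This yields $\Exp[\nu(\bfa)] \leq M_\epsilon + 1 + R_\epsilon$; in particular $\nu(\bfa)$ is integrable, hence finite a.s., once $\min_{l} a_l$ is large.

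To conclude I would divide by $M$ and let $\min_{l} a_l \to \infty$. Since $M \geq (\min_{l} a_l)/\max_{l}\mu_l \to \infty$, the term $(1+R_\epsilon)/M$ vanishes; and since $M_\epsilon \leq M\cdot \max_{l}\big(\mu_l/(\mu_l-\epsilon)\big)$—the inequality one needs to handle the fact that the two maxima defining $M$ and $M_\epsilon$ may be attained at different indices—one obtains $\limsup \Exp[\nu(\bfa)]/M \leq \max_{l}\big(\mu_l/(\mu_l-\epsilon)\big)$. Letting $\epsilon \downarrow 0$ sends the right-hand side to $1$, which is precisely the claim $\Exp[\nu(\bfa)] \lesssim \max_{l} a_l/\mu_l$. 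I do not anticipate a serious obstacle: the proof is essentially a uniform-integrability argument driven by complete convergence, and the only two points demanding care are the one-sided inclusion above and the $\bfa$-independence of $R_\epsilon$, which together are exactly what allows the entire tail of the series to be absorbed into a lower-order term.
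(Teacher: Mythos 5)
Your proof is correct. Note that the paper does not actually prove this lemma: its ``proof'' is a pointer to Lemma F.2 of the cited reference \cite{Song_AoS}, so your argument fills in what the paper leaves to an external source. What you wrote is a complete and valid version of that argument: the one-sided inclusion $\{\nu(\bfa)>n\}\subseteq\bigcup_{l\in[L]}\{\xi_l(n)<a_l\}$ is exactly right (failure to have stopped by time $n$ implies in particular that the crossing condition fails at time $n$ itself), the split of the tail-sum at $\lceil M_\epsilon\rceil$ is sound, and the two points you flag as the crux really are the crux: the inclusion uses only the marginal law of each $\xi_l(n)$ at the terminal time, and the remainder $R_\epsilon$ is finite by complete convergence and independent of $\bfa$, so it is absorbed as a lower-order term after dividing by $M=\max_l a_l/\mu_l$. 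The final limiting step, bounding $M_\epsilon\leq M\cdot\max_l\bigl(\mu_l/(\mu_l-\epsilon)\bigr)$ to handle the fact that the two maxima need not be attained at the same index, and then sending $\epsilon\downarrow 0$ (legitimate since $L$ is finite and each $\mu_l>0$), correctly yields $\limsup \Exp[\nu(\bfa)]/M\leq 1$, which is precisely the meaning of $\lesssim$ as defined in the paper. In short: no gap, and your write-up has the added value of being self-contained where the paper merely cites.
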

\begin{proof}
    See \cite[Lemma F.2]{Song_AoS}. \\
\end{proof}

\begin{lemma} \label{lemma, for higher-order upper bound}
    Let $\xi_l,\;l\in[L]$ and $\nu(\bfa)$ be defined the same as in Lemma \ref{basic lemma for AUB}. If,  for each $l\in[L]$, $\{\xi_l(n)-\xi_l(n-1),\; n\in\bN\}$ where $\xi_l(0)=0$ are i.i.d.,
    $$ \Exp[\xi_l(1)]=\mu_l>0 \quad \text{and} \quad \operatorname{Var}(\xi_l(1))<\infty, $$ then, as $\min_{l\in [L]} a_l \to \infty$,
    $$ \Exp[\nu(\bfa)] \leq \max_{l\in[L]} \left\{ \frac{a_l}{\mu_l} \right\} + O\left( \sum_{l\in[L]}\sqrt{\frac{a_l}{\mu_l}} \right) \leq \max_{l\in[L]} \left\{ \frac{a_l}{\mu_l} \right\} + O\left( L\sqrt{\max_{l\in[L]} a_l} \right). $$
\end{lemma}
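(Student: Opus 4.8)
The plan is to dominate the simultaneous-crossing time $\nu(\bfa)$ by a maximum of one-dimensional \emph{last-passage} times and then control that maximum through first- and second-moment estimates. For each $l\in[L]$ I would set
$$\rho_l \equiv \sup\{n\in\bN:\ \xi_l(n)<a_l\},$$
with the convention $\sup\emptyset=0$. Since the increments of $\xi_l$ are i.i.d.\ with positive mean $\mu_l$, the strong law of large numbers gives $\xi_l(n)/n\to\mu_l>0$, so $\xi_l(n)\to\infty$ and hence $\rho_l<\infty$ almost surely. By the definition of $\rho_l$, for every $n>\rho_l$ we have $\xi_l(n)\ge a_l$; therefore at the time $n=1+\max_{l\in[L]}\rho_l$ all processes simultaneously meet their levels, which yields the deterministic bound
$$\nu(\bfa)\ \le\ 1+\max_{l\in[L]}\rho_l.$$
I emphasize that this reduction uses only the within-stream i.i.d.\ structure and requires no independence across the $L$ processes.

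Next I would bound $\Exp[\max_l\rho_l]$, again without appealing to cross-stream independence. Writing $\rho_l=\Exp[\rho_l]+(\rho_l-\Exp[\rho_l])$ and using $x\le x^+$ gives the pointwise inequality $\max_l\rho_l\le \max_l\Exp[\rho_l]+\sum_l(\rho_l-\Exp[\rho_l])^+$; taking expectations and applying Jensen's inequality to each positive part,
$$\Exp\Big[\max_{l\in[L]}\rho_l\Big]\ \le\ \max_{l\in[L]}\Exp[\rho_l]\ +\ \sum_{l\in[L]}\Exp\big[(\rho_l-\Exp[\rho_l])^+\big]\ \le\ \max_{l\in[L]}\Exp[\rho_l]\ +\ \sum_{l\in[L]}\sqrt{\operatorname{Var}(\rho_l)}.$$
It then remains to insert the two moment estimates
$$\Exp[\rho_l]=\frac{a_l}{\mu_l}+O(1)\qquad\text{and}\qquad \operatorname{Var}(\rho_l)=O(a_l)\qquad(a_l\to\infty),$$
after which $\max_l\Exp[\rho_l]\le\max_l(a_l/\mu_l)+O(1)$ and $\sqrt{\operatorname{Var}(\rho_l)}=O(\sqrt{a_l})=O(\sqrt{a_l/\mu_l})$, the factors $\mu_l,\operatorname{Var}(\xi_l(1))$ and the fixed number $L$ being absorbed into the constants. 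Combining this with the reduction and absorbing the $O(1)$ term into the diverging sum $\sum_l\sqrt{a_l/\mu_l}$ proves the first asserted inequality; the second follows at once from $\sqrt{a_l/\mu_l}\le \big(\max_{l'}\mu_{l'}^{-1/2}\big)\sqrt{\max_{l'}a_{l'}}$, summed over the $L$ terms.

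The main obstacle is establishing the two displayed moment estimates for the \emph{last}-passage time $\rho_l$, since the classical renewal-theoretic results are stated for the \emph{first}-passage time $\tau_l\equiv\inf\{n:\xi_l(n)\ge a_l\}$, for which finite increment variance already gives $\Exp[\tau_l]=a_l/\mu_l+O(1)$ (Wald's identity plus a bounded expected overshoot) and $\operatorname{Var}(\tau_l)=O(a_l)$. I would bridge the gap by showing that the correction $\rho_l-\tau_l$ has mean and variance bounded uniformly in $a_l$: after the first up-crossing the walk has positive drift, so by the strong Markov property the probability of ever returning below $a_l$ decays geometrically across successive record levels, and the total time spent below $a_l$ after $\tau_l$ therefore contributes only $O(1)$ to both moments. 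Equivalently, one may invoke the Anscombe-type central limit theorem for last-passage times of positive-drift random walks together with uniform integrability of $(\rho_l-a_l/\mu_l)/\sqrt{a_l}$ and of its square. Either route delivers $\Exp[\rho_l]=a_l/\mu_l+O(1)$ and $\operatorname{Var}(\rho_l)=O(a_l)$ and completes the argument; this excursion-control step, rather than the combinatorial maximum bound, is where the finite-variance hypothesis is genuinely used.
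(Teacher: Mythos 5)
Your reduction $\nu(\bfa)\le 1+\max_{l\in[L]}\rho_l$ and the inequality $\Exp[\max_l\rho_l]\le\max_l\Exp[\rho_l]+\sum_l\Exp[(\rho_l-\Exp[\rho_l])^+]$ are both correct, and the mean estimate $\Exp[\rho_l]=a_l/\mu_l+O(1)$ does hold under the stated hypotheses. The genuine gap is the second-moment estimate: $\operatorname{Var}(\rho_l)=O(a_l)$ is \emph{false} in general when only $\operatorname{Var}(\xi_l(1))<\infty$ is assumed. Last-exit times need strictly more integrability than first-passage times: by Janson's characterization (Adv.\ Appl.\ Probab.\ 18, 1986), for a random walk with i.i.d.\ increments $X$ and positive drift $\mu$, the last exit time below a level has finite $p$-th moment if and only if $\Exp[(X^-)^{p+1}]<\infty$. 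The mechanism is that a single negative jump of size $y$ occurring near time $n$, where the walk sits near $\mu n$, drags the walk back below the level whenever $y\gtrsim\mu n$; hence $\Pro(\rho_l>t)\gtrsim\sum_{n>t}\Pro(X^->\mu n)$, and integrating $t^{p-1}$ against this tail produces $\Exp[(X^-)^{p+1}]$. With $p=2$ this requires a finite \emph{third} moment of the negative part, so increments with finite variance but $\Exp[(X^-)^3]=\infty$ give $\operatorname{Var}(\rho_l)=\infty$ for every $a_l$. For the same reason neither of your proposed justifications can be completed: the post-$\tau_l$ return probabilities and excursion lengths are governed by the polynomial tail of $X^-$, not by geometric decay, and the normalized last-passage times are not uniformly square-integrable. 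The step you flag as the place where finite variance is ``genuinely used'' is exactly the step that finite variance does not support.

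The architecture is salvageable if you never touch the second moment of $\rho_l$. Write $\rho_l\le\tau_l+L_l$, where $\tau_l\equiv\inf\{n:\xi_l(n)\ge a_l\}$ and $L_l$ is the last exit time below zero of the post-$\tau_l$ walk, whose distribution does not depend on $a_l$ by the strong Markov property. Since $\rho_l\ge\tau_l-1$ gives $\Exp[\rho_l]\ge\Exp[\tau_l]-1$, you get the pointwise bound
$$(\rho_l-\Exp[\rho_l])^+\ \le\ (\tau_l-\Exp[\tau_l])^+ + L_l + 1,$$
whence $\Exp[(\rho_l-\Exp[\rho_l])^+]\le\sqrt{\operatorname{Var}(\tau_l)}+\Exp[L_l]+1=O(\sqrt{a_l})$: this uses only the classical first-passage variance bound $\operatorname{Var}(\tau_l)=O(a_l)$ (valid under finite variance) and $\Exp[L_l]<\infty$, which by the same Janson criterion with $p=1$ needs only $\Exp[(X^-)^2]<\infty$. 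With that substitution your proof closes. For comparison, the paper does not prove the lemma at all but cites \cite[Lemma A.2]{Song_prior}; a direct argument in that spirit bounds $\Exp[\nu(\bfa)]$ by $\max_l a_l/\mu_l$ plus $\sum_l\sum_t\Pro(\xi_l(t)<a_l)$, the inner sum running over $t\ge\max_l a_l/\mu_l$, and splits it into three ranges: the first $O(\sqrt{a_l/\mu_l})$ terms bounded by one, a middle range handled by Chebyshev, and a far range $t\ge 2a_l/\mu_l$ handled by the Hsu--Robbins complete-convergence series $\sum_t\Pro(|\xi_l(t)-\mu_l t|>\mu_l t/2)$, which is finite precisely under finite variance. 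That route avoids last-exit times altogether, needs no independence across the $L$ processes, and yields the same $O\big(\sum_l\sqrt{a_l/\mu_l}\big)$ correction.
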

\begin{proof}
    See \cite[Lemma A.2.]{Song_prior}. \\
\end{proof}

\begin{lemma} \label{lemma used for proving the ALB}
    Let $\xi\Def\{\xi(n),\,n\in\bN\}$ be a stochastic process on some probability space $(\Omega,\cF,\Pro)$. Suppose that 
    $$ \Pro\left( \limsup_{n\to \infty} \frac{\xi(n)}{n} \leq  \mu \right)=1, $$
    then for any $\epsilon>0$ we have
    $$ \lim_{m\to \infty} \Pro\left( \frac{1}{m} \max_{1\leq n\leq m} \xi(n)>(1+\epsilon)\mu \right)=0. $$ 
\end{lemma}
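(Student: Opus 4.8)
The plan is to convert the almost-sure $\limsup$ hypothesis into a single, almost-surely finite random variable that dominates the entire running maximum, and then let $m\to\infty$. Throughout I use that $\mu>0$, as is the case in every application of the lemma (where $\mu$ equals one of the positive constants $I_k$ or $J_k$ of condition \eqref{SLLN, assumption for ALB}); indeed the statement is false for $\mu\le 0$, so this positivity is genuinely needed.

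First I would fix $\epsilon>0$ and choose $\delta\in(0,\epsilon)$, say $\delta=\epsilon/2$. The hypothesis $\Pro\big(\limsup_{n\to\infty}\xi(n)/n\le\mu\big)=1$ guarantees that, for almost every $\omega$, there is a finite $N(\omega)$ with $\xi(n)\le(1+\delta)\mu\, n$ for all $n\ge N(\omega)$. Consequently the random variable
$$ Y \equiv \sup_{n\ge 1}\big(\xi(n)-(1+\delta)\mu\, n\big) $$
is finite almost surely: the terms being supremized tend to $-\infty$, so only finitely many of them can be nonnegative, and the supremum is attained over a finite set.

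Next I would use $Y$ to control the running maximum. For every $n\le m$, writing $\xi(n)=\big(\xi(n)-(1+\delta)\mu\, n\big)+(1+\delta)\mu\, n\le Y+(1+\delta)\mu\, m$, where the last inequality uses $\mu>0$ together with $n\le m$. Taking the maximum over $1\le n\le m$ and dividing by $m$ gives $\tfrac1m\max_{1\le n\le m}\xi(n)\le(1+\delta)\mu+Y/m$. Since $\delta<\epsilon$, the event whose probability must be bounded satisfies
$$ \Big\{ \tfrac1m\max_{1\le n\le m}\xi(n)>(1+\epsilon)\mu \Big\} \subseteq \Big\{ Y/m>(\epsilon-\delta)\mu \Big\} = \Big\{ Y>(\epsilon-\delta)\mu\, m \Big\}. $$

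Finally, because $Y<\infty$ almost surely and $(\epsilon-\delta)\mu>0$, the sets $\{Y>(\epsilon-\delta)\mu\, m\}$ decrease, as $m\to\infty$, to the null event $\{Y=\infty\}$; hence their probabilities vanish, and the claim follows by monotonicity of $\Pro$. The only real subtlety is the almost-sure finiteness of $Y$, which is exactly where the $\limsup$ hypothesis and the positivity of $\mu$ enter; the remaining steps are elementary bookkeeping.
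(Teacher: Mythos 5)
Your proof is correct, and it takes a genuinely different route from the paper's. The paper splits the running maximum at a deterministic cutoff $N$ and takes two successive limits: for fixed $N$, the head $\max_{1\le n\le N}\xi(n)$ is a single a.s.\ finite random variable, so $\Pro\big(\max_{1\le n\le N}\xi(n)>(1+\epsilon)\mu m\big)\to 0$ as $m\to\infty$, while the tail piece is bounded uniformly in $m$ by $\Pro\big(\sup_{n>N}\xi(n)/n>(1+\epsilon)\mu\big)$, which is then sent to zero as $N\to\infty$ using the $\limsup$ hypothesis. You instead compress everything into one pathwise linear-domination bound: the random intercept $Y=\sup_{n\ge 1}\big(\xi(n)-(1+\delta)\mu n\big)$ is a.s.\ finite by the hypothesis, dominates the whole trajectory via $\max_{1\le n\le m}\xi(n)\le Y+(1+\delta)\mu m$, and reduces the claim to the single limit $\Pro\big(Y>(\epsilon-\delta)\mu m\big)\to 0$. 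What your version buys is a cleaner, single-limit argument and an explicit acknowledgment that $\mu>0$ is required; you are right that the lemma as stated fails for $\mu\le 0$ (e.g.\ $\xi\equiv 1$ with $\mu=0$), and the paper's proof uses positivity tacitly as well: it needs $(1+\epsilon)\mu m\to\infty$ to dispose of the head term, needs $\mu\ge 0$ to pass from the event on $\xi(n)/m$ to the event on $\xi(n)/n$, and needs $(1+\epsilon)\mu>\mu$ in its final inclusion. What the paper's route buys is that it works entirely with the events appearing in the statement, without introducing an auxiliary random variable. Both arguments ultimately rest on the same two facts, namely that finitely many finite random variables are harmless after division by $m$, and that the $\limsup$ hypothesis controls the process beyond a finite (random or deterministic) horizon.
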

\begin{proof} 
Fix $\epsilon>0$ and $1\leq N< m$. 
Denote 
\begin{align*}
    \Pro_{m,N}(\epsilon) & \Def \Pro\left( \frac{1}{m} \max_{1\leq n\leq N} \xi(n)>(1+\epsilon)\mu \right) \\
    U_N(\epsilon) & \Def \left\{ \sup_{n>N} \frac{\xi(n)}{n}> (1+\epsilon)\mu \right\}.
\end{align*}
By the union bound, we have 
\begin{align*}
 \Pro\left( \frac{1}{m} \max_{1\leq n\leq m} \xi(n)>(1+\epsilon)\mu \right)
    &\leq \Pro_{m,N}(\epsilon)+ \Pro\left( \frac{1}{m} \max_{N< n\leq m} \xi(n)>(1+\epsilon)\mu \right) \\
  &\leq  \Pro_{m,N}(\epsilon)+ \Pro\left( \max_{N< n\leq m} \frac{\xi(n)}{n}>(1+\epsilon)\mu \right) \\
 & \leq  \Pro_{m,N}(\epsilon)+  \Pro(U_N(\epsilon)).
\end{align*}
Since $\xi(n)$ is finite almost surely for every $n\in\bN$, 
we have  $\Pro_{m,N}(\epsilon)\to 0$ as $m\to\infty$ and, consequently, 
\begin{align*}
    \limsup_{m\to \infty} \Pro\left( \frac{1}{m} \max_{1\leq n\leq m} \xi(n)>(1+\epsilon)\mu \right)\leq \Pro(U_N(\epsilon)).
\end{align*}
Finally,  by the assumption of the Lemma it follows that,
as  $N\to \infty$,  we have 
\begin{align*}
    \limsup_{N \to \infty} \Pro(U_N(\epsilon)) & \leq \Pro\left(\inf_{N\geq 1} \sup_{n>N} \frac{\xi(n)}{n} \geq (1+\epsilon)\mu \right) \\
    &\leq \Pro\left( \limsup_n \frac{\xi(n)}{n}  >\mu \right) =0.
\end{align*} 
\end{proof}

\section{} \label{Proof related to AUB}
In this Appendix we prove Propositions
\ref{thm: a.s. finite and error control}, \ref{prop: AUB}, and \ref{prop: higher-order AUB}.   For any $\cA,\cC\subseteq [K]$ and $n\in\bN$, we denote by $\lambda_{\cA,\cC}(n)$ the log-likelihood ratio of $\Pro_\cA$ versus $\Pro_\cC$ when both measures are restricted to $\cF(n)$, i.e., 
\begin{equation} \label{Definition lambda^A,C(n)}
    \lambda_{\cA,\cC}(n) \Def \log \frac{d\Pro_\cA}{d\Pro_\cC} (\cF(n)) =\sum_{k\in \cA\backslash\cC} \lambda_k(n) - \sum_{k\in \cC\backslash\cA} \lambda_k(n),
\end{equation}    
where the equality follows by  \eqref{Product measure}.  We  set $\lambda_{\cA,\cC}\equiv \{\lambda_{\cA,\cC}(n), n\in\bN\}$.
Moreover,  if  $\Exp$ represents expectation with respect to  a probability measure  $\Pro$, for an  event $\Gamma$ and a  random variable $X$ we write
$$ \Exp[X;\Gamma]\Def \int_\Gamma X d\Pro. $$

\begin{proof}[Proof of Proposition \ref{thm: a.s. finite and error control}] The a.s. finiteness and the familywise error rate control  of the decentralized  and the synchronous test are established in  \cite{De_Baron_Seq_Bonf} and \cite{Song_prior}, respectively. Thus, it suffices to show the corresponding results for the  proposed test.\\

Fix $l,u$ that satisfy \eqref{lu}, $\cA\in\Pi_{l,u}$, and $a,b,c,d>0$.
We first show 
\begin{equation} \label{a.s. finite of the proposed test}
    \Pro_\cA(\Ts_i<\infty) =      \Pro_\cA(\Ts_j<\infty) =1, \quad \forall\; i \in A, \; j \notin A. 
\end{equation}     
Note that if $l<u$, 
\begin{align}
    \Ts_i\leq \Ts_{i,1} & \leq \inf\left\{ n\in\bN: \lambda_k(n) \geq a \right\}, \label{stopping time of passing a} \\
    \Ts_j\leq \Ts_{j,2} & \leq \inf\left\{ n\in\bN: \lambda_k(n) \leq -b \right\}, \label{stopping time of passing -b}
\end{align}
and if $l=u$,
\begin{align}
    \Ts_i\leq \Ts_{i,1} & \leq \inf\left\{ n\in\bN: \lambda_i(n) \geq \lambda_{k}(n)+c \text{ for every } k\notin\cA \right\}, \label{upper bound on hat Ti, the gap part}\\
    \Ts_j\leq \Ts_{j,2} & \leq \inf\left\{ n\in\bN: \lambda_j(n) \leq \lambda_{k}(n)-d \text{ for every } k\in\cA \right\}.
\end{align}
All upper bounds are  a.s. finite under $\Pro_\cA$ because of \eqref{orthogonal}. \\

Next, we only prove the upper bounds on the type-I familywise error rate, i.e., \eqref{gap rule, type I} and \eqref{gap-intersection rule, type I}, as  those on the type-II familywise error rate can be proved similarly.
By the definition of the type-I familywise error in  \eqref{def of FWE1} and Boole's inequality we have
\begin{equation} \label{write FWE as prob of a union}
    \text{FWE}_\cA^1(\hat\chi)=\Pro_\cA\left( \bigcup_{j\notin\cA} \{\Ds_j=1\} \right)\leq \sum_{j\notin\cA} \Pro_\cA(\Ds_j=1).
\end{equation}
Thus, in what follows, we assume that $A\neq [K]$, fix $j \notin A$, and we upper bound $\Pro_\cA(\Ds_j=1)$.\\

If $l=0$, $u=K$, then
\begin{equation} \label{hatGammaj}
  \{\hat D_j=1\} = \left\{ \lambda_j(\Ts_j)\geq a \right\}  \equiv \hat{\Gamma}_j.
\end{equation}
Let  $\cC=\cA\cup\{j\}$. Then,  by \eqref{Definition lambda^A,C(n)} we have  $\lambda_{\cA,\cC}=-\lambda_j$ and  by Wald's likelihood ratio identity we  obtain:
\begin{equation} \label{PcA(Gammaj)}
    \Pro_\cA(\hat\Gamma_j) = \Exp_\cC\left[ 
    \exp\{ \lambda_{\cA,\cC}(\Ts_j) \} ; \hat\Gamma_j \right] = \Exp_{\cC} \left[ \exp\{ -\lambda_j(\Ts_j) \}; \hat\Gamma_j \right] \leq e^{-a}. 
\end{equation}

If $l=u$, then 
\begin{equation} \label{type I implies Gamma j,i}
    \{\Ds_j=1\} = \bigcup_{k\in\cA} \hat\Gamma_{j,k},
\end{equation}
where 
\begin{equation} \label{hatGammaj,i}
    \hat\Gamma_{j,k} \Def \left\{ \lambda_j(\Ts_j) \geq \lambda_k(\Ts_j)+c \right\}, \quad k\in\cA.
\end{equation}
To see \eqref{type I implies Gamma j,i}, we observe that since $l=u\equiv m$, we know that there are exactly $K-m-1$ noises other than stream $j$ itself. Thus, when $\{\Ds_j=1\}$ occurs at time $\Ts_j$, among the $K-m$ other streams whose LLRs $\lambda_j$ is greater than by at least $c$  there must be  at least one signal.  Therefore, by Boole's inequality we have:
$$\Pro_\cA(\Ds_j=1) \leq  \sum_{k\in\cA} \Pro_\cA(\hat\Gamma_{j,k}). $$
Let $k \in A$ and set  $\cC=\cA\cup\{j\}\backslash\{k\}$. Hence,   $\lambda_{\cA,\cC}=\lambda_k-\lambda_j$.  By Wald's likelihood ratio identity we have
\begin{equation} \label{PcA(Gammaj,i)}
\begin{split}
    \Pro_\cA(\hat\Gamma_{j,k})& =\Exp_\cC\left[ \exp\{\lambda_{\cA,\cC}(\Ts_j)\};\hat\Gamma_{j,k} \right]\\
    &=\Exp_\cC\left[ \exp\{\lambda_k(\Ts_j)-\lambda_j(\Ts_j)\};\hat\Gamma_{j,k} \right] \leq e^{-c},
\end{split}
\end{equation}
which implies  \eqref{gap rule, type I}.  \\

If $l<u$ and either $l>0$ or $u<K$, then 
\begin{equation*}
    \{\Ds_j=1\} = \hat\Gamma_j \cup \left(\bigcup_{k\in\cA} \hat\Gamma_{j,k}\right),
\end{equation*}
where $\hat\Gamma_j$ is defined as in \eqref{hatGammaj} and $\hat\Gamma_{j,k}$  as in \eqref{hatGammaj,i}.
By Boole's inequality, we have 
$$ \Pro_\cA(\Ds_j=1)  \leq  \Pro_\cA(\hat\Gamma_j) + \sum_{k \in \cA} \Pro_\cA(\hat\Gamma_{j,k}), $$
and to obtain \eqref{gap-intersection rule, type I} it suffices to apply  \eqref{PcA(Gammaj)} and \eqref{PcA(Gammaj,i)}. \\
\end{proof}

\begin{proof} [Proof of Proposition \ref{prop: AUB}]
The bounds for the parallel SPRT are well known and can be found for example in  \cite[Chapter 3]{Tartakovsky_Book}). Thus, it suffices to show the  upper bounds related to the proposed and  the synchronous test. 

We first consider  the proposed test. We only prove the upper bounds for $i\in\cA$, as those for $j\notin\cA$ can be proved similarly. 

If $l=u$, in view  of condition \eqref{Complete convergence, assumption for AUB},  we apply Lemma \ref{basic lemma for AUB} to the stopping time  in the upper bound of \eqref{upper bound on hat Ti, the gap part} and obtain, as $c\to \infty$,
    \begin{equation} \label{AUB with gap for i}
        \Exp_\cA[\Ts_i] \lesssim \max_{k\notin\cA} \left\{ \frac{c}{I_i + J_k} \right\} = \frac{c}{I_i+\cJ_\cA}.        
    \end{equation}

If $l<u$, in view of condition \eqref{Complete convergence, assumption for AUB}, we apply Lemma \ref{basic lemma for AUB} to the stopping time in the upper bound of \eqref{stopping time of passing a} and obtain, as $a\to\infty$,
    \begin{equation} \label{AUB with intersection, for i}
        \Exp_\cA[\Ts_i]\lesssim \frac{a}{I_i}.
    \end{equation}
    When, in particular, $|\cA|=l$, \eqref{upper bound on hat Ti, the gap part} and  \eqref{stopping time of passing a}, thus, \eqref{AUB with gap for i} and \eqref{AUB with intersection, for i}, both hold. Thus, the  proof for the proposed test is complete. \\

Next,   we  focus on the synchronous test.

If $l=u$, then
    $$ \Tc\leq \inf\left\{ n\in\bN: \lambda_i(n)-\lambda_j(n)\geq c\vee d \;\text{ for every }\; i\in\cA, \; j\notin\cA \right\}. $$
 Applying Lemma \ref{basic lemma for AUB} to 
    the stopping time in the upper bound, in view of condition \eqref{Complete convergence, assumption for AUB},    we have
    $$ \Exp_\cA[\Tc] \lesssim \max_{i\in\cA, \, j\notin\cA} \left\{\frac{c\vee d}{I_i+J_j}\right\} = \frac{c\vee d}{\cI_\cA+\cJ_\cA}. $$
    
    If $l<u$, then 
    $$ \Tc\leq \tau_2 \leq \inf\{n\in\bN: \lambda_i(n)\geq a \text{ and } \lambda_j(n)\leq -b \text{ for every $i\in\cA$ and $j\notin\cA$}\},  $$
and    applying again Lemma \ref{basic lemma for AUB} to 
    the stopping time in the upper bound, in view  of \eqref{Complete convergence, assumption for AUB}, we have 
    \begin{equation} \label{tau'2}
        \Exp_\cA[\Tc] \lesssim
    \max\left\{ \max_{i\in\cA}\left\{\frac{a}{I_i}\right\},\;\max_{j\notin\cA}\left\{\frac{b}{J_j}\right\} \right\}=\max\left\{ \frac{a}{\cI_\cA},\;\frac{b}{\cJ_\cA} \right\}.        
    \end{equation}
    When, in particular, $|\cA|=l$, 
    $$ \Tc\leq \tau_1\leq \inf\{n\in\bN: \lambda_j(n)\leq -b \text{ and } \lambda_i(n)-\lambda_j(n)\geq c \text{ for every } i\in\cA,\; j\notin\cA \}, $$
    and  applying again  Lemma \ref{basic lemma for AUB} 
    we have
    \begin{equation} \label{tau'1}
    \begin{split}
        \Exp_\cA[\Tc] & \lesssim \max \left\{ \max_{j\notin\cA}\left\{\frac{b}{J_j}\right\},\; \max_{i\in\cA,\,j\notin\cA}\left\{\frac{c}{I_i+J_j}\right\} \right\} \\
        & = \max\left\{ \frac{b}{\cJ_\cA}, \; \frac{c}{\cI_\cA+\cJ_\cA} \right\}. 
    \end{split}        
    \end{equation}
    Thus, when $|\cA|=l<u$,   combining \eqref{tau'2} and \eqref{tau'1}  we have 
    $$ \Exp_\cA[\Tc]\lesssim \max\left\{ \frac{b}{\cJ_\cA},\;\frac{a}{\cI_\cA}\wedge\frac{c}{\cI_\cA+\cJ_\cA} \right\}. $$
    Similarly, when $|\cA|=u>l$, we have 
    \begin{equation*}
        \Exp_\cA[\Tc]\lesssim \max\left\{ \frac{a}{\cI_\cA},\;\frac{b}{\cJ_\cA}\wedge\frac{d}{\cJ_\cA+\cI_\cA} \right\}.
    \end{equation*}
 Thus,  the proof for the synchronous test is complete. \\
\end{proof}

\begin{proof} [Proof of Proposition \ref{prop: higher-order AUB}] The bounds for the parallel SPRT are well known and can be found for example in  \cite[Chapter 3]{Tartakovsky_Book}), whereas those for the synchronous test can be found in  \cite[Lemma 5.2]{Song_prior}.  Thus, it suffices to show the  upper bounds related to the proposed test. 
  We only prove those for  $i\in\cA$, i.e., \eqref{AUB with higher-order term, proposed, gap, i} and \eqref{AUB with higher-order term, proposed, gap-inter, i}, as those for $j\notin\cA$ can be proved similarly.   
    
    If $l=u$, applying Lemma \ref{lemma, for higher-order upper bound} to the upper bound in  \eqref{upper bound on hat Ti, the gap part} we have,  as $c\to\infty$,
    \begin{equation} \label{also holds}
        \Exp_\cA[\Ts_i]\leq \max_{k\notin\cA}\left\{ 
\frac{c}{I_i+J_k} \right\} + O\left( |\cA^c|\sqrt c \right) = \frac{c}{I_i+\cJ_\cA} + O\left( (K-m)\sqrt c \right).
    \end{equation}
    
If $l<u$, we have
    \begin{equation}
        \Exp_\cA[\Ts_i]\leq \Exp_\cA[\Tilde{T}_{i}]\leq \frac{a}{I_i}+O(1).
    \end{equation}
If, also,  $|\cA|=l$ and $|c-a|=O(1)$ as $a,c\to\infty$, then \eqref{also holds} also holds with $c$ replaced by $a$. 
\end{proof}

\section{} \label{Proof related to ALB}
In this Appendix, we prove all results in Section \ref{sec: AO}.

\begin{proof}[Proof of Lemma \ref{Lemma, ALB}]
    We only prove the asymptotic lower bound in  \eqref{ALB, i, general},  as  the one in \eqref{ALB, j, general} can be proved similarly.
    We fix $\cA\in\Pi$ and  $i\in\cA$.
By \eqref{no trivial}, there exists a $\cC\in\Pi$ so that $i\notin\cC$. It then suffices to show that for any such  $\cC$, as $\alpha,\beta\to 0$,
\begin{equation} \label{for an arbitrary cC}
\frac{\cL_{i,\cA}(\alpha,\beta,\Pi)}{|\log\alpha|/I_{\cA,\cC}}\gtrsim 1.
\end{equation}

Fix arbitrary such $\cC$.    
    To show \eqref{for an arbitrary cC}, we fix $\alpha,\beta\in (0,1)$ and $\chi\in\Delta(\alpha,\beta,\Pi)$.
    We further fix an $\epsilon>0$ and set 
    $$ N=N_{\cA,\cC}(\alpha,\epsilon) \Def \frac{|\log\alpha|}{I_{\cA,\cC}} (1-\epsilon). $$
    By Markov's inequality it follows that
    \begin{equation*}
        \frac{\Exp_\cA[T_i]}{N} \geq \Pro_\cA(T_i>N) 
        = 1- \Pro_\cA\left(T_i\leq N, \; D_i=1\right) - \Pro_\cA\left(T_i\leq N, \; D_i=0\right).
    \end{equation*}
  Since $\{D_i=0\}$ is a type-II error under $\Pro_\cA$,  the second probability in the lower bound is upper bounded by $\beta$. The first one is upper bounded by 
    \begin{equation*}
        \begin{split}
            \Pro_\cA\left( D_i=1, \; \lambda_{\cA,\cC}(T_i)
            \leq \log \delta  \right) + \Pro_\cA\left( T_i\leq N, \; \lambda_{\cA,\cC} (T_i)>\log \delta \right) \Def I + II,
        \end{split}
    \end{equation*}
    where 
      $$ \delta \Def \alpha^{-1+\epsilon^2} \quad \text{or equivalently} \quad \log \delta= (1-\epsilon^2) |\log\alpha|.$$
    By Wald's likelihood ratio identity we have
    \begin{equation*}
        \begin{split}
            I & = \Exp_\cC\big[\exp\left\{ \lambda_{\cA,\cC}(T_i) \right\}; D_i=1, \lambda_{\cA,\cC}(T_i) \leq \log \delta  \big] \\
            & \leq \delta \, \Pro_\cC(D_i=1) \leq \delta \, \alpha= \alpha^{\epsilon^2},
        \end{split}
    \end{equation*}
    since $\{D_i=1\}$ is a type-I error under $\Pro_\cC$.
    By the definition of $N$ and $\delta$ we have
    \begin{equation*}
        \begin{split}
            II & \leq \Pro_\cA\left( \max_{1\leq n\leq N} \lambda_{\cA,\cC} (n) > \log \delta \right) \\
            & = \Pro_\cA\left( \frac{1}{N} \max_{1\leq n\leq N} \lambda_{\cA,\cC} (n) > I_{\cA,\cC} (1+\epsilon) \right) \Def \omega_{\cA,\cC} (\alpha,\epsilon),
        \end{split}
    \end{equation*}
    which, by Lemma \ref{lemma used for proving the ALB} and \eqref{SLLN, assumption for ALB},
    converges to zero as $\alpha\to 0$. Combining the above we obtain
    \begin{equation}
        \frac{\Exp_\cA[T_i]}{|\log\alpha|/I_{\cA,\cC}} \geq \left( 1-\alpha^{\epsilon^2} - \omega_{\cA,\cC} (\alpha,\epsilon) -\beta \right)  (1-\epsilon).
    \end{equation}
    Taking infimum over $\chi\in\Delta(\alpha,\beta,\Pi)$, letting first  $\alpha,\beta\to 0$, and then  $\epsilon\to 0$ complete the proof. \\
\end{proof}

\begin{proof} [Proof of Theorem \ref{thm: AO}]
We only prove \eqref{ALB, i, general}, as the proof of 
 \eqref{ALB, j, general} is similar. For this, it suffices to show that, for any  $i \in A$,
\begin{align*}
\min \{ I_{\cA,\cC}:  \cC\in \Pi_{l,u}, \,  i\notin \cC \} \leq
 I_i+\cJ_\cA \cdot \bfone\left\{ |\cA|=l \right\}. 
\end{align*}
When $|\cA|=l>0$,  this is the case because  for $\cC=
\cA\backslash\{i\}\cup\{j\}$ with $j\notin\cA$ we have 
$I_{\cA,\cC}= I_i+J_j$.   When  $|\cA|>l\geq 0$, this is the case because   for $C=\cA\backslash\{i\}$ we have 
$I_{\cA,\cC}= I_i$. \\
\end{proof}

\begin{proof} [Proof of Theorem \ref{thm: AA for decentralized procedures}]
Based on Proposition \ref{prop: AUB}, it suffices to establish the asymptotic lower bounds. For this, we note that for any $\Pi\subseteq 2^{[K]}$ that satisfies \eqref{no trivial} and any decentralized testing procedure in  $\Delta'\cap\Delta(\alpha,\beta,\Pi)$, the corresponding  local test  in stream $k\in[K]$  solves the local testing problem,  $H_0^k$ versus $H_1^k$, controlling the local type-I and type-II error probabilities below $\alpha$ and $\beta$, respectively. From the  asymptotic  optimality theory for sequential binary testing (see, e.g., \cite[Lemma 3.4.1]{Tartakovsky_Book}) we know that, when  \eqref{SLLN, assumption for ALB} holds, the minimum expected sample size of such a test is, asymptotically as $\alpha,\beta\to 0$, equal to  $|\log\alpha|/I_k$ under $H_k^1$ and $|\log\beta|/J_k$ under $H_k^0$. \\
\end{proof}

\begin{proof} [Proof of Theorem \ref{thm: AA for synchronous stopping}]
Based on Proposition \ref{prop: AUB}, 
it suffices to establish the asymptotic lower bound. Indeed, since $\Delta'' \subseteq \Delta$, for  any  $\Pi\subseteq 2^{[K]}$, $A \in \cP$,  $\alpha,\beta\in(0,1)$, we have 
\begin{equation*}
    \begin{aligned}
        \cL''_\cA(\alpha,\beta,\Pi) & \geq   \cL_{k,\cA}(\alpha,\beta,\Pi), \quad \forall\; k \in [K],
            \end{aligned}
\end{equation*}
    and from \eqref{AO, i}-\eqref{AO, j} it  follows that,  as $\alpha,\beta\to 0$,
    \begin{equation*}
    \begin{aligned}        
       \max_{k\in[K]} \cL_{k,\cA}(\alpha,\beta,\Pi)
        & \sim \max\left\{ \frac{|\log\alpha|}{\cI_\cA+\cJ_\cA \cdot \bfone\left\{ |\cA|=l \right\}},\;  \frac{|\log\beta|}{\cJ_\cA+\cI_\cA\cdot \bfone\left\{ |\cA|=u \right\}}\right\}.
    \end{aligned}
    \end{equation*}
\end{proof}

\section{} \label{Proof related to composite hypotheses}
In this Appendix, we  establish the result in Section \ref{section: generalization to composite hypotheses}. We only prove Theorem \ref{thm: composite hypotheses}, as the proofs of Theorems
 \ref{thm: composite hypotheses, decentralized procedure} and \ref{thm: composite hypotheses, synchronous decision making} are similar and easier. 
 
\begin{lemma} \label{martingale}
Fix $\cA\subseteq [K]$ and $\bftheta\in\bfTheta_\cA$.
    For any $k\in [K]$, 
    $$ \exp\left\{ \ella_k(n)-\ell_k(n,\theta_k) \right\}, \quad n\in\bN $$
    is an $\{\cF(n),\,n\in\bN\}$-martingale with expectation 1 under $\Pro_{\cA,\bftheta}$.
\end{lemma}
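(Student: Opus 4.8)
The plan is to verify the one-step martingale identity and then obtain integrability and the normalization $\Exp_{\cA,\bftheta}[M_n]=1$ by induction. Writing $M_n \Def \exp\{\ella_k(n)-\ell_k(n,\theta_k)\}$, I first note that $M_n$ is $\cF(n)$-measurable (the estimators entering $\ella_k$ are $\cF(\cdot)$-measurable, while $\ell_k(n,\theta_k)$ is $\cF_k(n)$-measurable) and that $M_0=1$, since $\ella_k(0)=\ell_k(0,\theta_k)=0$. It therefore suffices to show $\Exp_{\cA,\bftheta}[M_n\mid\cF(n-1)]=M_{n-1}$ for every $n\in\bN$, with $\cF(0)$ understood as the trivial $\sigma$-algebra; the integrability of each $M_n$ and the identity $\Exp_{\cA,\bftheta}[M_n]=1$ then follow by induction, because each conditioning step preserves a finite expectation equal to $1$.

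The first ingredient is a \emph{fixed-parameter} likelihood-ratio martingale. For a deterministic $\vartheta\in\Theta_k$ set $L_n(\vartheta)\Def\exp\{\ell_k(n,\vartheta)-\ell_k(n,\theta_k)\}$, which by the definition of $\ell_k$ equals the Radon--Nikodym derivative of $\Pro_{k,\vartheta}$ with respect to $\Pro_{k,\theta_k}$ restricted to $\cF_k(n)$. Using the domination of both measures by $\nu_k$ and the fact that the stream-$k$ marginal of $\Pro_{\cA,\bftheta}$ is $\Pro_{k,\theta_k}$, a one-line change-of-measure computation (integrate $L_n(\vartheta)$ against $\Pro_{k,\theta_k}$ over a set in $\cF_k(n-1)$ and recognize the lower-order density $e^{\ell_k(n-1,\vartheta)}$) gives $\Exp_{k,\theta_k}[L_n(\vartheta)\mid\cF_k(n-1)]=L_{n-1}(\vartheta)$, provided $e^{\ell_k(n,\theta_k)}>0$ $\Pro_{k,\theta_k}$-a.s., which holds. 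Because the $K$ streams are independent under $\Pro_{\cA,\bftheta}$ and $L_n(\vartheta)$ is $\cF_k(n)$-measurable, conditioning on the full $\cF(n-1)$ collapses to conditioning on $\cF_k(n-1)$, so for each fixed $\vartheta$ the process $(L_n(\vartheta))_{n}$ is a $\Pro_{\cA,\bftheta}$-martingale with respect to $\{\cF(n)\}$, with expectation $L_0(\vartheta)=1$.

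The second ingredient links $M$ to this family through the single identity
\[
\frac{M_n}{M_{n-1}}=\exp\{(\Delta\ell)_k(n,\hat\theta_k(n-1))-(\Delta\ell)_k(n,\theta_k)\}=\frac{L_n(\vartheta)}{L_{n-1}(\vartheta)}\bigg|_{\vartheta=\hat\theta_k(n-1)},
\]
which holds because the telescoping increments of $\ella_k$ and of $\ell_k(\cdot,\theta_k)$ are evaluated at the \emph{same} frozen estimator $\hat\theta_k(n-1)$. Since $L_{n-1}(\hat\theta_k(n-1))$ is $\cF(n-1)$-measurable, it can be pulled out of the conditional expectation, leaving $\Exp_{\cA,\bftheta}[L_n(\hat\theta_k(n-1))\mid\cF(n-1)]$. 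I would finish by the substitution (``freezing'') property of conditional expectation: for each deterministic $\vartheta$ this equals $L_{n-1}(\vartheta)$ by the previous paragraph, and reinserting the $\cF(n-1)$-measurable value $\hat\theta_k(n-1)$ yields $\Exp_{\cA,\bftheta}[L_n(\hat\theta_k(n-1))\mid\cF(n-1)]=L_{n-1}(\hat\theta_k(n-1))$, whence $\Exp_{\cA,\bftheta}[M_n\mid\cF(n-1)]=M_{n-1}\cdot L_{n-1}(\hat\theta_k(n-1))^{-1}L_{n-1}(\hat\theta_k(n-1))=M_{n-1}$.

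The step I expect to be the main obstacle is this last one: the estimator $\hat\theta_k(n-1)$ is $\cF(n-1)$-measurable but in general depends on \emph{all} streams, so it cannot be treated as a stream-$k$ constant naively. The resolution is exactly the combination used above --- independence across streams reduces the relevant conditional expectation to stream $k$, and $\cF(n-1)$-measurability legitimizes freezing $\hat\theta_k(n-1)$ before applying the fixed-parameter martingale identity. To make the freezing rigorous I would invoke the standard substitution lemma for conditional expectations applied to the jointly measurable map $(\vartheta,\omega)\mapsto L_n(\vartheta)(\omega)$, using nonnegativity to avoid integrability technicalities.
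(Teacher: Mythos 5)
The paper does not actually prove Lemma \ref{martingale}: its ``proof'' is a bare citation to \cite[Lemma D.2]{Song_AoS}. Your proposal must therefore be judged against the standard argument, which is precisely the route you take --- reduce to the one-step identity, factor out the $\cF(n-1)$-measurable part, freeze the estimator $\hat\theta_k(n-1)$ via a substitution/regular-conditional-distribution argument, and use independence across streams to collapse conditioning on $\cF(n-1)$ to conditioning on $\cF_k(n-1)$. The telescoping identity relating $M_n/M_{n-1}$ to the fixed-parameter ratio is correct, and your diagnosis of the main obstacle (the estimator depends on all streams) and its resolution are the right ones.

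There is, however, one step whose justification as written is wrong, and the issue is not cosmetic. You assert that $L_n(\vartheta)=\exp\{\ell_k(n,\vartheta)-\ell_k(n,\theta_k)\}$ is the Radon--Nikodym derivative of $\Pro_{k,\vartheta}$ with respect to $\Pro_{k,\theta_k}$ on $\cF_k(n)$, and that $\Exp_{k,\theta_k}[L_n(\vartheta)\mid\cF_k(n-1)]=L_{n-1}(\vartheta)$ ``provided $e^{\ell_k(n,\theta_k)}>0$ $\Pro_{k,\theta_k}$-a.s.'' That proviso is vacuous --- every density is positive a.s.\ under its own measure --- and it is not the condition needed. Carrying out the computation, for $B\in\cF_k(n-1)$,
\begin{equation*}
\int_B L_n(\vartheta)\, d\Pro_{k,\theta_k}
=\int_B e^{\ell_k(n,\vartheta)}\,\bfone\left\{e^{\ell_k(n,\theta_k)}>0\right\} d\nu_k ,
\end{equation*}
and to identify the right-hand side with $\int_B L_{n-1}(\vartheta)\,d\Pro_{k,\theta_k}$ one must be able to drop the indicator, i.e., one needs $\Pro_{k,\vartheta}\left(e^{\ell_k(n,\theta_k)}=0\right)=0$, which is exactly $\Pro_{k,\vartheta}\ll\Pro_{k,\theta_k}$ on $\cF_k(n)$. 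Section \ref{section: generalization to composite hypotheses} assumes only that each $\Pro_{k,\theta}$ is dominated by $\nu_k$; mutual absolute continuity is assumed in Subsection \ref{subsec: distributional assumptions} only for the simple-hypothesis setting. Without it, $L_n(\vartheta)$, and hence $M_n$, is in general only a nonnegative supermartingale with expectation at most one, and the statement can genuinely fail: take a single stream of i.i.d.\ observations with $\Pro_{k,0}$ uniform on $[0,1]$, $\Pro_{k,1}$ uniform on $[1,2]$, true parameter $\theta_k=0$, and initial estimate $\hat\theta_k(0)=1$ (the paper allows any deterministic value in $\Theta_k$); then $M_1=\exp\{\ell_k(1,1)-\ell_k(1,0)\}=0$ a.s.\ under $\Pro_{k,0}$, so $\Exp[M_1]=0\neq 1$. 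To close the gap you must add (or make explicit) a common-support or mutual-absolute-continuity hypothesis within each stream, or require that the estimators a.s.\ take values $\vartheta$ with $\Pro_{k,\vartheta}\ll\Pro_{k,\theta_k}$ on each $\cF_k(n)$; under either hypothesis the rest of your argument goes through verbatim. Note also that a supermartingale bound with expectation $\leq 1$ would still suffice for the error-control inequalities in the proof of Theorem \ref{thm: composite hypotheses}, but not for the consistent change of measure constructed in Remark \ref{remark:new measure}, which is why the lemma insists on a genuine martingale with expectation exactly one.
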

\begin{proof}
    \cite[Lemma D.2]{Song_AoS}. 
\end{proof}

\begin{remark} \label{remark:new measure}
This lemma and the assumption of  independence across streams imply that, for any $\cA\subseteq[K]$, $\bftheta\in\bfTheta_\cA$ and $\cC\subseteq [K]$, we can define a probability measure $\Pros_{\cA,\bftheta,\cC}$ so that
$$ \frac{d\Pros_{\cA,\bftheta,\cC}}{d\Pro_{\cA,\bftheta}}(\cF(n))=\prod_{k\in\cC} \exp\left\{ \ella_k(n)-\ell_k(n,\theta_k) \right\}, \quad n\in\bN. $$ We denote by $\Exps_{\cA,\bftheta,\cC}$ its corresponding expectation. \\
\end{remark}

\begin{proof} [Proof of Theorem \ref{thm: composite hypotheses}]
First of all, we note that by the definition in \eqref{def: lambdas}, for any $k\in[K]$ and $n\in\bN$,
\begin{equation} \label{obs}
\begin{aligned}
    \left\{ \lambdas_k(n)>0 \right\} & \subseteq \left\{ \lambdas_k(n)=\ella_k(n)-\ell_k^0(n) \right\} \\
    \left\{ \lambdas_k(n)<0 \right\} & \subseteq  \left\{ \lambdas_k(n)=\ell_k^1(n)-\ella_k(n) \right\}.
\end{aligned}
\end{equation}

    (i) For the error control, we only consider the case when $l<u$,
    as the case when $l=u$ can be proved similarly.       Fix  $\cA\in\Pi_{l,u}$ and $\bftheta\in\bfTheta_\cA$.
As in the proof of Proposition \ref{thm: a.s. finite and error control} in Appendix \ref{Proof related to AUB}, we have 
    $$ \Pro_{\cA,\bftheta}(\bfDss\backslash\cA\neq\emptyset)\leq \sum_{j\notin\cA}\left( \Pro_{\cA,\bftheta}(\Gammas_j)+\sum_{k\in\cA} \Pro_{\cA,\bftheta}(\Gammas_{j,k}) \right), $$
    where 
\begin{align*}
 \Gammas_j &\equiv \left\{ \lambdas_j(\Tss_j)> a \right\}, \quad  \\
 \Gammas_{j,k} &\equiv \left\{ \lambdas_j(\Ts_j)> \max\big\{ \lambdas_k(\Ts_j)+c,\,0 \big\},\; \lambdas_k(\Ts_j)< 0 \right\}.
 \end{align*}
Thus, for the proof of uniform control of type-I familywise error rate it suffices to  show that 
         \begin{align} \label{ineq}
         \begin{split}
\Pro_{\cA,\bftheta}(\Gammas_j) &\leq e^{-a} \quad
 \forall \;  j\notin\cA \\
 \Pro_{\cA,\bftheta}(\Gammas_{j,k}) &\leq e^{-c} \quad 
\forall \; j\notin\cA, \; k\in\cA.
\end{split}
\end{align}
Fix  $j\notin\cA$ and $k\in\cA$. By  \eqref{def: lambdas},
 \eqref{generalized ll statistic}, Lemma \ref{martingale} and \eqref{obs}, we have
         \begin{align*}
     \Gammas_j  &= \left\{ \ella_j(\Tss_j)-\ell_j^0(\Tss_j)> a \right\}  \\
     &\subseteq \left\{ \ella_j(\Tss_j)-\ell_j(\Tss_j,\theta_j) =  \frac{d\Pros_{\cA,\bftheta,\{j\}}}{d\Pro_{\cA,\bftheta}}\big(\cF(\Tss_j)\big)> a \right\}\\
             \Gammas_{j,k} & \subseteq \left\{ \left( \ella_j(\Tss_j)-\ell_j^0(\Tss_j) \right) + \left( \ella_k(\Tss_j)-\ell_k^1(\Tss_j) \right) > c \right\}, \\
             & \subseteq \left\{ \left( \ella_j(\Tss_j)-\ell_j(\Tss_j,\theta_j) \right) + \left( \ella_k(\Tss_j)-\ell_k(\Tss_j,\theta_k) \right) =  \frac{d\Pros_{\cA,\bftheta,\{j,k\}}}{d\Pro_{\cA,\bftheta}}\big(\cF(\Tss_j)\big) > c \right\},
     \end{align*}
   where the probability measures $\Pro_{A,\bftheta,\{j\}}$ and  $\Pro_{A,\bftheta,\{j,k\}}$ are  defined in Remark \ref{remark:new measure}.   Thus,  the first (resp. second) inequality in \eqref{ineq} follows by a  change of measure  from $\Pro_{A,\bftheta}$ to $\Pro_{A,\bftheta,\{j\}}$ (resp.  $\Pro_{A,\bftheta,\{j,k\}}$). \\

(ii) We only prove \eqref{AO, i, composite} as \eqref{AO, j, composite} can be proved similarly.
Under assumption \eqref{assump2}, for any $\epsilon>0$, there exists $\bftheta^\epsilon\equiv \left( \theta^\epsilon_1,\ldots,\theta^\epsilon_K \right) \in\bfTheta_{\cA^c}$, such that 
\begin{equation*}
    \begin{aligned}
        I_i(\theta_i,\theta_i^\epsilon) & \leq (1+\epsilon) \, I_i(\theta_i), \quad \forall\,i\in\cA \\
        I_j(\theta_j,\theta_j^\epsilon) & \leq (1+\epsilon) \, J_j(\theta_j), \quad \forall\,j\notin\cA.
    \end{aligned}
\end{equation*}
Consider the following problem of testing multiple pairs of simple hypotheses:
\begin{equation*}
    \begin{aligned}
        H_i^0: \gamma_i=\theta_i^\epsilon
        \quad & \text{ versus } \quad H_i^1: \gamma_i=\theta_i, \quad \forall\; i\in\cA, \\
        H_j^0: \gamma_j=\theta_j \quad & \text{ versus } \quad H_j^1: \gamma_j=\theta_j^\epsilon, \quad \forall\; j\notin\cA,
    \end{aligned}
\end{equation*}
where for each $k\in[K]$ we write $\gamma_k$ for the generic local parameter in stream $k$ to distinguish it from the $k^{th}$ component of $\bftheta$.
Since any test in $\Delta^{*}(\alpha,\beta,\Pi_{l,u})$ also solves this problem and controls the two types of familywise error rates below $\alpha,\beta$ respectively, by Theorem \ref{thm: AO}, under assumption \eqref{assump1.1}, we have that for every $i\in\cA$, as $\alpha,\beta\to 0$,
\begin{equation*}
\begin{aligned}
    & \;\inf\left\{ \Exp_{\cA,\bftheta}[T_i]: \chi\in\Delta^{*}(\alpha,\beta,\Pi_{l,u})  \right\} \\
    \gtrsim & \; \frac{|\log\alpha|}{I_i(\theta_i,\theta^\epsilon_i)+ \min_{j\notin\cA} I_j(\theta_j,\theta_j^\epsilon)\cdot \bfone\left\{ |\cA|=\ell \right\} } \\
    \geq & \; \left.\left(\frac{|\log\alpha|}{I_i(\theta_i)+\cJ_\cA(\bftheta) \cdot \bfone\left\{ |\cA|=\ell \right\}}  \right)\right/(1+\epsilon).
\end{aligned}
\end{equation*}
Since $\epsilon>0$ is arbitrary, the asymptotic lower bound is proved.\\

We next  show that  the proposed test attains this lower bound. We show this only  when $l<u$, as the proof when  $l=u$  is similar (and easier). Fix $\cA\in\Pi_{l,u}$ and   $i\in\cA$. Then, by \eqref{def: lambdas} and \eqref{composite testing, gap-inter} we have 
\begin{equation*}
    \Tss_i\leq\inf\left\{ n\in\bN: \ella_i(n)-\ell_i^0(n)>a \right\}. 
\end{equation*}
In view of assumption \eqref{assump3},  we  apply Lemma \ref{basic lemma for AUB} 
and obtain, as $a \to \infty$,
$$ \Exp_{\cA,\bftheta}[\Tss_i]\lesssim \frac{a}{I_i(\theta_i)}. $$ 
When, additionally, $|\cA|=\ell$, we have 
\begin{equation*}
    \begin{aligned}
        \Tss_i & \leq 
        \inf\left\{ n\in\bN: \lambdas_i(n)>\max\big\{ \lambdas_j(n)+c,\;0 \big\}, \; \lambdas_j(n)< 0, \;\forall\,j\notin\cA \right\} \\
        & \leq \inf\left\{ n\in\bN: \left(\ella_i(n)-\ell_i^0(n)\right)+\left(\ella_j(n)-\ell_j^1(n)\right)> c,\; \forall\,j\notin\cA \right\}.
    \end{aligned}
\end{equation*}
In view of assumption \eqref{assump3},  we  apply Lemma \ref{basic lemma for AUB} and obtain, as $c \to \infty$,
$$ \Exp_{\cA,\bftheta}[\Tss_i]
\lesssim \frac{c}{I_i(\theta_i)+\min_{j\notin\cA}J_j(\theta_j)}= \frac{c}{I_i(\theta_i)+\cJ_\cA(\bftheta)}. $$
The proof is complete, since the selected $a,c$ satisfy $a\sim c\sim |\log\alpha|$ as $\alpha,\beta\to 0$. \\
\end{proof}

\section{} \label{Proofs related to other global error metrics}
 In this Appendix we prove Proposition \ref{prop: GEM}.

\begin{proof} [Proof of Proposition \ref{prop: GEM}]
We only check the part related to type-I errors, as the part related to type-II errors is similar. To lighten the notation, we suppress the dependence on $\chi=(\bfT,\bfD)$, and we let $R \equiv |\bfD|$ and  $V\equiv |\bfD\backslash\cA|$.
Noticing that
\begin{equation*}
    \begin{split}
        \text{pFDR}_\cA^1 & \equiv \Exp_\cA\left[ 
        \left.\frac{V}{R}\right| R\geq 1 \right] = \frac{\Exp_\cA[V/R\cdot\bfone\{R\geq 1\}]}{\Pro_\cA(R\geq 1)} \\
        & =\frac{\text{FDR}_\cA^1}{\Pro_\cA(R\geq 1)}\geq \text{FDR}_\cA^1 \geq \frac{1}{K} \text{FWE}_\cA^1,
    \end{split}
    \end{equation*}
    we have \eqref{C2} holds with $C_2=1/K$.\\

When  $0<l\leq u<K$, then on the event $\{R=0\}$ at least one type-II error is made and, thus,
    $$ \Pro_\cA(R=0)\leq \text{FWE}_\cA^2, $$
    \begin{equation*}
        \text{pFDR}_\cA^1 = \frac{\text{FDR}_\cA^1}{1-\Pro_\cA(R=0)} \leq \frac{\text{FWE}_\cA^1}{1-\text{FWE}_\cA^2}\leq 2\cdot\text{FWE}_\cA^1,
    \end{equation*}
where the last inequality holds  when  $\text{FWE}_\cA^2\leq 1/2$. \\ 
\end{proof}

\bibliographystyle{chicago}
\bibliography{main}

\begin{thebibliography}{}

\bibitem[\protect\citeauthoryear{Almogi-Nadler, Oshman, and
  Ben-Asher}{Almogi-Nadler et~al.}{2004}]{Missile_detection_2004}
Almogi-Nadler, M., Y.~Oshman, and J.~Z. Ben-Asher (2004).
\newblock Boost-phase identification of theater ballistic missiles using radar
  measurements.
\newblock {\em Journal of Guidance, Control, and Dynamics\/}~{\em 27\/}(2),
  197--208.

\bibitem[\protect\citeauthoryear{Bartroff and Lai}{Bartroff and
  Lai}{2010}]{Bartroff_2010}
Bartroff, J. and T.~L. Lai (2010).
\newblock Multistage tests of multiple hypotheses.
\newblock {\em Communications in Statistics - Theory and Methods\/}~{\em
  39\/}(8-9), 1597--1607.

\bibitem[\protect\citeauthoryear{Bartroff and Song}{Bartroff and
  Song}{2014}]{Bartroff_2014}
Bartroff, J. and J.~Song (2014).
\newblock Sequential tests of multiple hypotheses controlling type i and ii
  familywise error rates.
\newblock {\em Journal of Statistical Planning and Inference\/}~{\em 153},
  100--114.

\bibitem[\protect\citeauthoryear{Bartroff and Song}{Bartroff and
  Song}{2016}]{Bartroff_Song_2016}
Bartroff, J. and J.~Song (2016).
\newblock A rejection principle for sequential tests of multiple hypotheses
  controlling familywise error rates.
\newblock {\em Scandinavian Journal of Statistics\/}~{\em 43\/}(1), 3--19.

\bibitem[\protect\citeauthoryear{Chandola, Banerjee, and Kumar}{Chandola
  et~al.}{2009}]{chandola2009anomaly}
Chandola, V., A.~Banerjee, and V.~Kumar (2009).
\newblock Anomaly detection: A survey.
\newblock {\em ACM Computing Surveys (CSUR)\/}~{\em 41\/}(3), 1--58.

\bibitem[\protect\citeauthoryear{Chaudhuri and Fellouris}{Chaudhuri and
  Fellouris}{2022}]{chaudhuri2022joint}
Chaudhuri, A. and G.~Fellouris (2022).
\newblock Joint sequential detection and isolation for dependent data streams.
\newblock {\em arXiv preprint arXiv:2207.00120\/}.

\bibitem[\protect\citeauthoryear{Cohen and Zhao}{Cohen and
  Zhao}{2015}]{Kobi_2015a}
Cohen, K. and Q.~Zhao (2015).
\newblock Asymptotically optimal anomaly detection via sequential testing.
\newblock {\em IEEE Transactions on Signal Processing\/}~{\em 63\/}(11),
  2929--2941.

\bibitem[\protect\citeauthoryear{De and Baron}{De and
  Baron}{2012a}]{De_Baron_Seq_Bonf}
De, S.~K. and M.~Baron (2012a).
\newblock Sequential bonferroni methods for multiple hypothesis testing with
  strong control of family-wise error rates i and ii.
\newblock {\em Sequential Analysis\/}~{\em 31\/}(2), 238--262.

\bibitem[\protect\citeauthoryear{De and Baron}{De and
  Baron}{2012b}]{Step_up_down}
De, S.~K. and M.~Baron (2012b).
\newblock Step-up and step-down methods for testing multiple hypotheses in
  sequential experiments.
\newblock {\em Journal of Statistical Planning and Inference\/}~{\em 142\/}(7),
  2059--2070.

\bibitem[\protect\citeauthoryear{De and Baron}{De and Baron}{2015}]{De_2015}
De, S.~K. and M.~Baron (2015).
\newblock Sequential tests controlling generalized familywise error rates.
\newblock {\em Statistical Methodology\/}~{\em 23}, 88--102.

\bibitem[\protect\citeauthoryear{Ding, Markatou, and Ball}{Ding
  et~al.}{2020}]{markatou}
Ding, Y., M.~Markatou, and R.~Ball (2020).
\newblock An evaluation of statistical approaches to postmarketing
  surveillance.
\newblock {\em Statistics in Medicine\/}~{\em 39\/}(7), 845--874.

\bibitem[\protect\citeauthoryear{Gupta and Kumar}{Gupta and
  Kumar}{2019}]{gupta2019progression}
Gupta, M.~S. and K.~Kumar (2019).
\newblock Progression on spectrum sensing for cognitive radio networks: A
  survey, classification, challenges and future research issues.
\newblock {\em Journal of Network and Computer Applications\/}~{\em 143},
  47--76.

\bibitem[\protect\citeauthoryear{He and Bartroff}{He and
  Bartroff}{2021}]{Bartroff2021}
He, X. and J.~Bartroff (2021).
\newblock Asymptotically optimal sequential fdr and pfdr control with (or
  without) prior information on the number of signals.
\newblock {\em Journal of Statistical Planning and Inference\/}~{\em 210},
  87--99.

\bibitem[\protect\citeauthoryear{Hemo, Gafni, Cohen, and Zhao}{Hemo
  et~al.}{2020}]{Kobi_2020_composite}
Hemo, B., T.~Gafni, K.~Cohen, and Q.~Zhao (2020).
\newblock Searching for anomalies over composite hypotheses.
\newblock {\em IEEE Transactions on Signal Processing\/}~{\em 68}, 1181--1196.

\bibitem[\protect\citeauthoryear{Huang, Cohen, and Zhao}{Huang
  et~al.}{2018}]{Kobi_2018}
Huang, B., K.~Cohen, and Q.~Zhao (2018).
\newblock Active anomaly detection in heterogeneous processes.
\newblock {\em IEEE Transactions on Information Theory\/}~{\em 65\/}(4),
  2284--2301.

\bibitem[\protect\citeauthoryear{Jennison and Turnbull}{Jennison and
  Turnbull}{1999}]{jennison1999group}
Jennison, C. and B.~W. Turnbull (1999).
\newblock {\em Group sequential methods with applications to clinical trials}.
\newblock CRC Press.

\bibitem[\protect\citeauthoryear{Lehmann and Romano}{Lehmann and
  Romano}{2005}]{Lehmann_2005}
Lehmann, E.~L. and J.~P. Romano (2005).
\newblock {Generalizations of the familywise error rate}.
\newblock {\em The Annals of Statistics\/}~{\em 33\/}(3), 1138 -- 1154.

\bibitem[\protect\citeauthoryear{Malloy and Nowak}{Malloy and
  Nowak}{2014}]{Malloy_Nowak_2014}
Malloy, M.~L. and R.~D. Nowak (2014).
\newblock Sequential testing for sparse recovery.
\newblock {\em IEEE Transactions on Information Theory\/}~{\em 60\/}(12),
  7862--7873.

\bibitem[\protect\citeauthoryear{Pavlov}{Pavlov}{1991}]{pavlov1991sequential}
Pavlov, I.~V. (1991).
\newblock Sequential procedure of testing composite hypotheses with
  applications to the kiefer--weiss problem.
\newblock {\em Theory of Probability \& Its Applications\/}~{\em 35\/}(2),
  280--292.

\bibitem[\protect\citeauthoryear{Prabhu, Bhashyam, Gopalan, and
  Sundaresan}{Prabhu et~al.}{2022}]{Prabhu2022}
Prabhu, G.~R., S.~Bhashyam, A.~Gopalan, and R.~Sundaresan (2022).
\newblock Sequential multi-hypothesis testing in multi-armed bandit problems:
  An approach for asymptotic optimality.
\newblock {\em IEEE Transactions on Information Theory\/}.

\bibitem[\protect\citeauthoryear{Robbins and Siegmund}{Robbins and
  Siegmund}{1974}]{Robbins_Siegmund}
Robbins, H. and D.~Siegmund (1974).
\newblock The expected sample size of some tests of power one.
\newblock {\em The Annals of Statistics\/}~{\em 2\/}(3), 415--436.

\bibitem[\protect\citeauthoryear{Sarkar}{Sarkar}{2007}]{k-FDR}
Sarkar, S.~K. (2007).
\newblock Stepup procedures controlling generalized fwer and generalized fdr.
\newblock {\em The Annals of Statistics\/}~{\em 35\/}(6), 2405--2420.

\bibitem[\protect\citeauthoryear{Sarkar, Chen, and Guo}{Sarkar
  et~al.}{2013}]{sarkar2013multiple}
Sarkar, S.~K., J.~Chen, and W.~Guo (2013).
\newblock Multiple testing in a two-stage adaptive design with combination
  tests controlling fdr.
\newblock {\em Journal of the American Statistical Association\/}~{\em
  108\/}(504), 1385--1401.

\bibitem[\protect\citeauthoryear{Siegmund}{Siegmund}{1976}]{Siegmumd_IS}
Siegmund, D. (1976).
\newblock {Importance Sampling in the Monte Carlo Study of Sequential Tests}.
\newblock {\em The Annals of Statistics\/}~{\em 4\/}(4), 673 -- 684.

\bibitem[\protect\citeauthoryear{Song and Fellouris}{Song and
  Fellouris}{2017}]{Song_prior}
Song, Y. and G.~Fellouris (2017).
\newblock {Asymptotically optimal, sequential, multiple testing procedures with
  prior information on the number of signals}.
\newblock {\em Electronic Journal of Statistics\/}~{\em 11\/}(1), 338 -- 363.

\bibitem[\protect\citeauthoryear{Song and Fellouris}{Song and
  Fellouris}{2019}]{Song_AoS}
Song, Y. and G.~Fellouris (2019).
\newblock {Sequential multiple testing with generalized error control: An
  asymptotic optimality theory}.
\newblock {\em The Annals of Statistics\/}~{\em 47\/}(3), 1776 -- 1803.

\bibitem[\protect\citeauthoryear{Tartakovsky, Nikiforov, and
  Basseville}{Tartakovsky et~al.}{2014}]{Tartakovsky_Book}
Tartakovsky, A., I.~Nikiforov, and M.~Basseville (2014).
\newblock {\em Sequential Analysis: Hypothesis Testing and Changepoint
  Detection\/} (1st ed.).
\newblock Chapman \& Hall/CRC.

\bibitem[\protect\citeauthoryear{Tsopelakos and Fellouris}{Tsopelakos and
  Fellouris}{2022}]{Aris_IEEE}
Tsopelakos, A. and G.~Fellouris (2022).
\newblock Sequential anomaly detection under sampling constraints.
\newblock {\em IEEE Transactions on Information Theory\/}, 1--1.

\bibitem[\protect\citeauthoryear{Xing and Fellouris}{Xing and
  Fellouris}{2022}]{PaperII}
Xing, Y. and G.~Fellouris (2022).
\newblock Signal recovery with multistage tests and without sparsity
  constraints.
\newblock {\em arXiv preprint arXiv:2211.02764\/}.

\bibitem[\protect\citeauthoryear{Zehetmayer, Bauer, and Posch}{Zehetmayer
  et~al.}{2005}]{Zehetmayer_2005}
Zehetmayer, S., P.~Bauer, and M.~Posch (2005).
\newblock Two-stage designs for experiments with a large number of hypotheses.
\newblock {\em Bioinformatics\/}~{\em 21\/}(19), 3771--3777.

\end{thebibliography}

\end{document}